\documentclass[11pt]{article}
\usepackage{amsfonts}
\usepackage{amssymb,amsmath,amsthm}
\usepackage{graphicx}
\usepackage[dvips, paper=letterpaper, top=0.95in, bottom=.6in, left=0.9in, right=0.95in, nohead, includefoot, footskip=.4in]{geometry}
\usepackage[authoryear,round]{natbib}
\usepackage{environ}
\usepackage{color}
\usepackage{epsfig}
\usepackage{caption}
\usepackage{verbatim}
\usepackage{subcaption}
\usepackage{float}
\usepackage{setspace}
\usepackage{enumitem}
\setlist[enumerate]{itemsep=0mm}
\usepackage{calc}
\usepackage[pdfpagemode=UseOutlines,hyperfootnotes=false, plainpages=false]{hyperref}
\usepackage{mathtools}
\usepackage{dsfont}
\usepackage[english]{babel}
\usepackage{amsthm}
\usepackage{bm}
\usepackage{breqn}
\usepackage{bigints}
\usepackage{tikz}
\usepackage{nicefrac}
\usepackage{xurl}


\usepackage[font=normalsize,labelfont=bf]{caption}
\usepackage[font=footnotesize,labelfont=bf]{subcaption}
\usepackage{sectsty}
\allsectionsfont{\normalfont\sffamily\bfseries}
\sffamily

\newtheoremstyle{theoremsansserif} 
    {\topsep}                    
    {\topsep}                    
    {\itshape}                   
    {}                           
    {\sffamily\bfseries }        
    {.}                          
    {.5em}                       
    {}  

\theoremstyle{theoremsansserif}

\newtheorem{lemma}{Lemma}
\newtheorem{corollary}{Corollary}
\newtheorem{proposition}{Proposition}
\newtheorem{remark}{Remark}
\newtheorem{definition}{Definition}
\newtheorem{claim}{Claim}

\newtheorem*{example*}{Example}

\newtheorem{theorem}{Theorem}

\newcommand{\R}{\mathbb{R}}
\newcommand{\X}{\mathcal{X}}
\newcommand{\A}{\mathcal{A}}
\newcommand{\w}{\hat{w}}
\newcommand{\E}{\mathbb{E}}

\DeclareMathOperator{\argmin}{argmin}
\DeclareMathOperator{\argmax}{argmax}

\interfootnotelinepenalty=10000

\newcommand{\rk}[1]{{\color{black} #1}}

\setlength\parindent{0pt}
\setlength\parskip{1em}
\begin{document}

\title{\sf\textbf{Contextual Standard Auctions with Budgets:}\\
\sf\textbf{Revenue Equivalence and Efficiency Guarantees}}

\author{
\sf Santiago R. Balseiro\\
\sf Columbia University \\
\small\texttt{srb2155@columbia.edu}
\and
\sf Christian Kroer \\
\sf Columbia University\\
\small\texttt{christian.kroer@columbia.edu}
\and
\sf Rachitesh Kumar\\
\sf Columbia University \\
\small\texttt{rk3068@columbia.edu}}

\date{\vspace{1em}
\sf This version: \today\\
\sf First version: February 20, 2021}

\maketitle

\allowdisplaybreaks

\begin{abstract}
The internet advertising market is a multi-billion dollar industry, in which advertisers buy thousands of ad placements every day by repeatedly participating in auctions. An important and ubiquitous feature of these auctions is the presence of campaign budgets, which specify the maximum amount the advertisers are willing to pay over a specified time period. In this paper, we present a new model to study the equilibrium bidding strategies in standard auctions, a large class of auctions that includes first- and second-price auctions, for advertisers who satisfy budget constraints on average. Our model dispenses with the common, yet unrealistic assumption that advertisers' values are independent and instead assumes a contextual model in which advertisers determine their values using a common feature vector. We show the existence of a natural value-pacing-based Bayes-Nash equilibrium under very mild assumptions. Furthermore, we prove a revenue equivalence showing that all standard auctions yield the same revenue even in the presence of budget constraints. Leveraging this equivalence, we prove Price of Anarchy bounds for liquid welfare and structural properties of pacing-based equilibria that hold for all standard auctions. 

{\newsanti
In recent years, the internet advertising market has adopted first-price auctions as the preferred paradigm for selling advertising slots. Our work thus takes an important step toward understanding the implications of the shift to first-price auctions in internet advertising markets by studying how the choice of the selling mechanism impacts revenues, welfare, and advertisers' bidding strategies.}

\bigskip

\noindent\emph{Keywords:} first-price auctions, contextual value models, budget constraints, equilibria in auctions, revenue equivalence. 
\end{abstract}

\thispagestyle{empty}
\pagebreak
\setcounter{page}{1}

\setstretch{1.5}
\pagebreak

\section{Introduction}



In 2019, the revenue from selling internet ads in the US surpassed \textdollar 129 billion.\footnote{See \url{https://www.emarketer.com/content/us-digital-ad-spending-2019}.} A large fraction of these are sold on ad platforms operated by tech-giants like Google, Facebook and Twitter. These platforms facilitate the sale of ads by acting as intermediaries between advertisers and publishers. Millions of ad slots are sold every day using auctions in which advertisers bid based on user-specific information (such as geographical location, cookies, and historical activity, among others). The advertisers repeatedly participate in these auctions, with the aim of using their advertising budget to maximize their reach, through a combination of user-specific targeting and bid-optimization. The presence of budgets introduces significant challenges, as it links different auctions together.

With billions of dollars at stake, the auction format plays a crucial role. In recent years, a major shift has occurred towards using first-price auctions as the preferred mode of selling display ads, as opposed to the earlier standard of using second-price auctions. For example, in 2019, Google, which is one of the industry leaders, announced a shift to the first-price auction format for its ad exchange.\footnote{See \url{https://www.blog.google/products/admanager/rolling-out-first-price-auctions-google-ad-manager-partners/}.} In 2020, Twitter also made the move to first-price auctions for the sale of mobile app advertising slots.\footnote{See \url{https://www.mopub.com/en/blog/first-price-auction}.} First-price auctions typically lead to more complicated bidding behavior, because, unlike second-price auctions, truthful bidding is not an equilibrium in the first-price setting.

This paper attempts to capture the salient features of these display ad auctions, with a focus on the newly adopted first-price auctions. While equilibrium behavior in first-price auctions has been studied extensively, very little attention has been given to the effects of budget constraints and user-specific information.
Budget constraints span the auctions, which means that advertisers must strategize about their bids across all auctions simultaneously.
User-specific information leads to correlation between the valuations that different advertisers have for a particular ad opportunity, whereas the literature largely focuses on i.i.d. valuations.
Our paper aims to shed some light on these aspects by introducing and analyzing a framework for general standard auctions, including first-price auctions, that incorporates budgets and context-based valuations. In particular, the main questions we tackle are: How does the auction format affect the equilibrium strategies of budget-constrained bidders? How does the auction format impact the revenue of the ad platform and {\newsanti the efficiency of the market}?

\rk{\subsection{Main Contributions}} \label{sec:main-contributions}

We incorporate the availability of user-specific information (that is common to all advertisers) via a contextual valuation model, which allows us to capture correlation in values. User information and buyer targeting criteria are modeled as vectors, with the value that an advertiser gains from her ad being shown to the user being given by the inner product of these vectors, or a function thereof.
Each advertiser has a (possibly non-binding) budget which must be satisfied in expectation.
Such budget constraints are well motivated in practice due to the large number of auctions, and have been the subject of study in previous work on budget management~\citep{gummadi2012repeated,abhishek2013optimal,balseiro2015repeated,balseiro2017budget}.
Our main contribution is to introduce a framework that allows for the study of \emph{standard auctions}, which are auctions in which the highest bidder wins, in the presence of budget constraints and contextual valuations. To the best of our knowledge, this is the first analysis of standard auctions in the presence of average budget constraints.

Typically, the targeting criteria and the budget of an advertiser are not known to her competitors. This motivates us to model the participation of advertisers in the market as a non-atomic game of incomplete information in which each advertiser assumes that the other advertisers are being drawn from some common underlying distribution. In this game, the decision problem faced by each advertiser is to optimize her utility while satisfying her budget constraint in expectation. This expectation is taken over all the potential auctions she could end up participating in, i.e., the expectation is over users and competing advertisers. Our non-atomic game allows us to sidestep the possibility of multiple buyers tying in the auction and leads to simple and intuitive equilibrium strategies.


\paragraph{Equilibrium Analysis.} A contribution of this paper is to prove the existence of a remarkably simple Bayes-Nash equilibrium strategy using a novel topological argument. In our non-atomic model, there is a continuum of advertiser types, and a strategy for each advertiser type is a function which maps contexts to bids. Directly proving existence of an equilibrium in this complicated strategy space in the presence of budget constraints turns out to be difficult. We side-step this difficulty by establishing strong duality for the constrained non-convex optimization problem faced by each advertiser type and characterizing the primal optimum in terms of the dual optimum.

We propose a remarkably simple class of strategies, which we dub \emph{value-pacing-based strategies}. These strategies build on the symmetric equilibrium strategies of the standard i.i.d.~setting, inheriting their interpretability in the process. A value-pacing-based strategy recommends that each advertiser should shade her value by a multiplicative factor to manage her budget, and then bid using the symmetric equilibrium strategy from the standard i.i.d.~setting---like she would in the absence of budgets---but assuming that competitors' values are also paced. This naturally extends multiplicative bid pacing/shading, which is one of the several ways budgets are managed in practice, to non-truthful auctions \citep{balseiro2017budget, conitzer2017multiplicative,conitzer2019pacing}.
To the best of our knowledge, our value-pacing approach is the first to show optimal pacing-based strategies outside of truthful auctions.



Our non-atomic game has a pacing (dual) multiplier for each buyer type, which are uncountably many in cardinality. This leads to an infinite-dimensional equilibrium space even after moving to the simpler dual space. In infinite dimensions, establishing even the simple prerequisites of any fixed-point theorem, namely compactness and continuity, can be an ordeal; one which requires careful topological arguments. While other papers have also analyzed equilibrium strategies in the dual space (see, e.g., \citealt{balseiro2015repeated,gummadi2012repeated}), these consider settings with finitely-many pacing multipliers in which establishing compactness is a trivial task. The main technical contribution of this paper is twofold: (i) choosing the right topological space for the pacing multipliers based on their monotonicity properties, (ii) establishing compactness and continuity in this carefully chosen space. As we discuss in Subsection~\ref{sec:fixed-point}, this choice of topology is far from obvious. In fact, to the best of our knowledge, all of the topologies used in standard fixed-point arguments for infinite-dimensional spaces (see \citet{aliprantis2006infinite} for examples) prove insufficient in the setting we consider, which compels us to carefully exploit the structural properties of pacing and work with the topological space of multivariate-functions of bounded variation. We believe that the tools developed in this paper might be useful in other non-atomic games.


\paragraph{Standard Auctions and Revenue Equivalence.} Our framework accommodates anonymous auction formats in which the highest bidder wins, such as the second-price auction and all-pay auction (even in the presence of reserve prices). In its full generality, it acts as a powerful black-box: it takes as input any Bayes-Nash equilibrium for the well-studied standard i.i.d.~setting, composes it with value-pacing, and outputs a Bayes-Nash equilibrium for our model. Surprisingly, we show that, for a fixed distribution over advertisers and users, the same multiplicative factors can be used by the advertisers to shade their values in the equilibrium strategies for all standard auctions.
This fact allows us to compare revenues across auction formats. We prove that, in the presence of in-expectation budget constraints, the revenue generated in a value-pacing-based equilibrium is the same for all standard auctions. This is in sharp contrast to the case when budgets constraints are strict, where revenue equivalence is known not to hold~\citep{che1998standard}. In light of the recent shift from second-price auction to first-price auction by many ad platforms, the ability to compare budget management in both first and second-price auctions is an especially relevant aspect of our framework. A recent paper of \citet{goke2021learning} empirically investigated the revenue impact resulting from this recent switch. \citet{goke2021learning} found that, after a brief adjustment period, publishers' revenues under first-price auctions returned to the same levels as they were under second-price auctions before the change.
Since existing revenue equivalence results do not apply to the display-ad industry (due to budget constraints and dependencies in valuations), our theory offers the first principled justification for this empirical finding by establishing revenue equivalence in the presence of contextual values and in-expectation budget constraints.

\paragraph{Price of Anarchy and Structural Results.} We leverage our revenue equivalence result to establish efficiency guarantees and structural properties that hold for all standard auctions. In particular, we prove a $(1/2)$-lower-bound for the Price of Anarchy of liquid welfare (a notion of efficiency that incorporates budget constraints) for our value-pacing-based equilibria. Our result implies that the liquid welfare of a pacing equilibrium is at most $1/2$ of the liquid welfare of the best possible allocation.
On the structural front, we study how value-pacing-based equilibrium strategies change with buyer type, which consists of a weight vector (representing targeting criteria) and a budget. We show that budget-constrained buyers with identical budgets and co-linear weights for features get paced to the same value in equilibrium. This shows that any enhancement in the ad quality without changing its targeting criteria, which corresponds to scaling up the weight vector, is futile in the absence of an increase in budget. Moreover, we also study how advertisers should change their targeting criteria (as represented by their weight vector) to maximize their utility.

\paragraph{Numerical Experiments.} To test our model, we run numerical experiments after making appropriate discretizations. The outcomes of these experiments are strikingly close to our theoretical predictions. In particular, despite the discontinuities introduced by discretization, budget violations are small, and moreover, the equilibrium strategies are in strong adherence to the structural properties derived theoretically.


\rk{\subsection{Related Work}}\label{sec:related-work}

Beyond the works already mentioned, there is a large literature on online auctions.
We discuss the existing work that is most closely related to ours.
In keeping with previous work on auctions, from now on, we will use the terms buyers and items in place of advertisers and users. 

Auctions with budget-constrained buyers have been modeled in a variety of ways, most of which is focused on second-price auctions. From a technical standpoint, the closest to our work \cite{balseiro2015repeated}, which considers randomly-arriving budget-constrained buyers in a fluid mean field setting. They show equilibrium existence for second-price auctions, in which buyers use pacing-based strategies. Their model assumes a finite type space and independence of the value distributions of the buyers, whereas our context-based model allows for correlation between buyer values. 
Several other works have also studied repeated second-price auctions with budgets under various models that are less related to ours~\citep{gummadi2012repeated,balseiro2017budget,conitzer2017multiplicative,ChenKK21,balseiro2019learning,ciocan2021tractable}.
Beyond second-price auctions, 
\citet{AggarwalBM19} consider affine constraints (which include budget constraints as a special case) in multi-slot truthful auctions; they show existence of a bid-pacing equilibrium under restrictive assumptions. \citet{Babaioff0HIL21} consider a general model of non-quasi-linear buyers participating in mechanisms that are truthful for quasi-linear buyers. Their model also captures the case of budget constraints as a special case. Moreover, they too prove the existence of a pacing-based equilibrium in their model. 
None of the aforementioned existing work addresses strategic bidding in non-truthful auctions such as first-price auctions with budget-constrained buyers. 

\citet{conitzer2019pacing} and \citet{borgs2007dynamics} study pacing in a first-price context, but both disregard strategic behavior on behalf of the buyers.
This is also the case for a long line of work that models repeated auctions among budget-constrained buyers as an online matching problem with capacity constraints (see \citealt{mehta2013online} for a survey).

Another direction of research considers buyers with ex-post budget constraints (also called strict budget constraints). 
There, first price~\citep{kotowski2020first}, standard auctions~\citep{che1998standard}, optimal auctions~\citep{pai2014optimal}, and auctions with combinatorial constraints~\citep{goel2015polyhedral} have been studied.
In contrast to our revenue equivalence results, \citet{che1998standard} show that with strict budget constraints first-price auctions yield higher revenue than second-price auctions. 
These models are different from our setting which only requires budget constraints to hold in expectation at the interim stage. 
In-expectation budget constraints are more appropriate for modeling repeated ad auctions, and yield simpler and more interpretable equilibrium strategies.

Contextual models in which values are based on feature vectors are widely used in the multi-armed bandit literature (for example, see \citealt{langford2007epoch} and \citealt{li2010contextual}), and in pricing~\citep{golrezaei2021dynamic,chen2021statistical,lobel2018multidimensional}. Vector-based valuation models are also connected to low-rank models, which have received attention in previous market design work (see e.g. \citealt{mcmahan2013ad,kroer2019computing}).

Our work also contributes to the literature on equilibrium analysis for non-atomic games. Due to the presence of a continuum of buyer types in our model, the topological arguments we develop bear resemblance to those used in the study of non-atomic games, such as the one addressed in \cite{schmeidler1973equilibrium}, though continuity is by assumption in \cite{schmeidler1973equilibrium}, whereas achieving continuity is at the heart of our proof.

%
%


\section{Model}\label{sec:model}

We consider the setting in which a seller (i.e., the advertising platform) plans to sell an indivisible item to one of $n$ buyers (i.e., the advertisers) using an auction. We adopt a feature-based valuation model for the buyer. More precisely, the item type is represented using a vector $\alpha$ belonging to the set $A \subset \R^d$, where each component of $\alpha$ can be interpreted as a feature. We also refer to $\alpha$ as the \emph{context}. Each buyer type is represented using a vector $(w,B)$ belonging to the set $\Theta \subset \R^{d+1}$ of possible buyer types, where the last component $B$ denotes her budget and the first $d$ components $w$ capture the weights she assigns to each of the $d$ features. The value (maximum willingness to pay) that buyer type $(w,B)$ has for item $\alpha$ is given by the inner product $w^T\alpha$. For simplicity of notation and ease of exposition, we will state our results under this linear relationship between values and the features, but our model and results can be extended to accommodate non-linear response functions (such as the logistic function) that are commonly used in practice (see Appendix~\ref{appendix:non_linear} for a more detailed discussion). We will use $\omega = \max_{(w,B) \in \Theta, \alpha \in A} w^T\alpha$ to denote the maximum value that a buyer can have for an item.

We assume that the context of the item to be auctioned is drawn from some distribution $F$ over the set of possible items types $A$. Furthermore, the type for every buyer is drawn according to some distribution $G$ over the set of possible buyer types $\Theta$, independently of the other buyers and the choice of the item. Note that, by virtue of our context-based valuation model, the values of the $n$ buyers for the item need not be independent. In line with standard models used for Bayesian analysis of auctions, we will assume that both $G$ and $F$ are common knowledge, while maintaining that the realized type vector $(w,B)$ associated with a buyer is her private information. Our model allows budgets to be random and correlated with the buyers' weight vector. In addition, we will assume that buyers are unaware of the type of their competing buyers---this implies budgets are private.

To fix ideas, we first consider the case of a first-price auction with reserve prices and then discuss how our results extend to standard auctions in Section~\ref{sec:standard}. In a first-price auction, the seller allocates the item to the highest bidder whenever her bid is above the reserve price and the winning bidder pays her bid. We assume the seller discloses the item type $\alpha$ to the $n$ buyers before bids are solicited from them. As a result, the bid of a buyer on item $\alpha$ can depend on $\alpha$. We use $r: A \to \R$ to specify the publicly known context-dependent reserve prices, where $r(\alpha)$ denotes the reserve price on item type $\alpha$.

The budget of a buyer represents an upper bound on the amount she would like to pay in the auction. We only require that each buyer satisfy her budget constraint in expectation over the item type and competing buyer types. Similar assumptions have been made in the literature (see, e.g., \citealt{gummadi2012repeated, abhishek2013optimal, balseiro2015repeated, balseiro2017budget, conitzer2017multiplicative}). The motivation behind this modeling choice is that budget constraints are often enforced on average by advertising platforms. For example, Google Ads allows daily budgets to be exceeded by a factor of two in any given day, but, over the course of month, the total expenditure never exceeds the daily budget times the days in the month.\footnote{Google Ads Help page defines ``Average Daily Budget": \url{https://support.google.com/google-ads/answer/6312?hl=en}} In-expectation budget constraints are also motivated by the fact that, in practice, buyers typically participate in a large number of auctions and many buyers use stationary bidding strategies. Thus, by the law of large numbers, our model can be interpreted as collapsing multiple, repeated auctions in which item types are drawn i.i.d.~from $F$ into a single one-shot auction with in-expectation constraints.

\paragraph{Notation.} We will use $\R_+$ and $\R_{\geq 0}$ to denote the set of strictly positive and non-negative real numbers, respectively. We will use $G_w$ to denote the marginal distribution of $w$ when $(w,B) \sim G$, i.e., $G_w(K) \coloneqq G(\{(w,B) \in \Theta \mid w \in K\})$ for all Borel sets $K \subset S$. In a similar vein, we will use $\Theta_w$ to denote the set of $w \in \R^d$ such that $(w,B) \in \Theta$ for some $B \in \R$. (Here we abuse notation by using $w$ both as a weight vector variable and as a subscript to denote the projection of a buyer type onto the first $d$ dimensions). Unless specified otherwise, $\|\cdot\|$ denotes the Euclidean norm.

\paragraph{Assumptions.} We will assume that there exist $U, B_{\min} > 0$ such that the set of possible buyer types $\Theta$ is given by $\Theta = (0,U)^d \times (B_{\min},U)$. In a similar vein, we also assume that the set of possible item types $A$ is a subset of the positive orthant $\R_+^d$. We will restrict our attention to $d \geq 2$, which is the regime in which our feature-vector based valuation model yields interesting insights. To completely specify the aforementioned probability spaces, we endow $A$, $\Theta$ and $\Theta_w$ with the Lebesgue $\sigma-$algebra. Moreover, we will assume that the distributions $F$ and $G$ have density functions. Note that the distribution $G$ can be any distribution on $\Theta$, including one with probability zero on some regions of $\Theta$. Thus we can address any buyer distribution, so long as it has a density and is supported on a bounded subset of the strictly-positive orthant with a positive lower bound on the possible budgets. Similarly, $F$ can capture a wide variety of item distributions. It is worth noting that any distribution that lacks a density can be approximated arbitrarily well by a distribution with a density, thereby extending the reach of our results to arbitrary distributions.
\subsection{Equilibrium Concept}

Consider the decision problem faced by a buyer type $(w,B) \in \Theta$ if we fix the bidding strategies of all competing buyers on all possible item types: She wishes to bid on the items in a way that maximizes her expected utility while satisfying her budget constraint in expectation (where the expectation is taken over items and competing buyers' types). As is true in the well-studied standard budget-free i.i.d. setting (\citealt{krishna2009auction}), her optimal strategy depends on the strategies used by the other buyer types. In the standard setting, the symmetric Bayes-Nash equilibrium is an appealing solution concept for the game formed by these interdependent decision problems faced by the buyers. We adopt a similar approach and define the symmetric Bayes-Nash equilibrium for our model. A strategy $\beta^*: \Theta \times A \to \R_{\geq 0}$ (a mapping that specifies what each buyer type should bid on every item) is a Symmetric First-Price Equilibrium if, almost surely over all buyer types, using $\beta^*$ is an optimal solution to a buyer type's decision problem when all other buyer types also use it.

\begin{definition}\label{definition_equilibrium}
	A strategy $\beta^*:\Theta\times A \to \mathbb{R}_{\geq 0}$ is called a Symmetric First-Price Equilibrium (SFPE) if $\beta^*(w, B, \alpha)$ (as a function of $\alpha$) is an optimal solution to the following optimization problem almost surely w.r.t. $(w,B) \sim G$:
   	\begin{align*}
    	\max_{b: A \to \R_{\geq 0}} \quad &\mathbb{E}_{\alpha,\{\theta_i\}_{i=1}^{n-1}}\left[ (w^T \alpha - b(\alpha))\ \mathds{1}\left\{b(\alpha) \geq \max \left( r(\alpha), \left\{ \beta^*(\theta_i, \alpha) \right\}_i \right) \right\} \right]\\
    	\operatorname{s.t.} \quad &\mathbb{E}_{\alpha,\{\theta_i\}_{i=1}^{n-1}}\left[ b(\alpha)\ \mathds{1}\left\{b(\alpha) \geq \max \left( r(\alpha), \left\{ \beta^*(\theta_i, \alpha) \right\}_i \right) \right\} \right] \leq B.\\
   	\end{align*}
\end{definition}
In the buyer's optimization problem the buyer wins whenever her bid $b(\alpha)$ is higher than the reserve price $r(\alpha)$ and all competiting bids $\beta^*(\theta_i, \alpha)$ for $i=1,\ldots,n-1$. Because of the first-price auction payment rule, each bidder pays her bid whenever she wins. For convenience, in the above definition, we are using an infeasible tie-breaking rule which allocates the entire good to every highest bidder. This is inconsequential, and can be replaced by any arbitrary tie-breaking rule, because we will later show (see part (d) of Lemma \ref{pacing_strategy_properties}) that ties are a zero-probability event under our value-pacing-based equilibria.

In our solution concept, it is sufficient that advertisers have Bayesian priors over the maximum competing bid $\max_i\{ \beta^*(\theta_i, \alpha)\}$ to determine a best response. This is aligned with practice as many advertising platforms provide bidders with historical bidding landscapes, which advertisers can use to optimize their bidding strategies \citep{google_landscape}.\footnote{See, for example, \url{https://www.blog.google/products/admanager/rolling-out-first-price-auctions-google-ad-manager-partners/}.} Additionally, we require that budgets are satisfied in expectation over the contexts and buyer types. Connecting back to our repeated auctions interpretation, one can assume competitors' types to be fixed throughout the horizon while contexts are drawn i.i.d.~in each auction. In this case, our solution concept would be appropriate if buyers cannot observe the types of competitors and, in turn, employ stationary strategies that do not react to the market dynamics. Such stationary strategies are appealing because they deplete budgets smoothly over time and are simple to implement. Moreover, it has been previously established that stationary policies approximate well the performance of dynamic policies in non-strategic settings when the number of auctions is large and the maximum value of each auction is small relative to the budget (see, e.g., \citealt{talluri2006theory}).

When the types of bidders is fixed throughout the horizon, a bidder who employs a dynamic strategy could, in principle, profitably deviate by inferring the competitors' types and using this information to optimally shade her bids. Implementing such strategies in practice is challenging because many platforms do not disclose the identity of the winner nor the bids of competitors in real-time (as we discussed above, they mostly provide historical information that is aggregated over many auctions). Moreover, when the number of bidders is large and each bidder competes with a random subset of bidders, such deviations can be shown to not be profitable using mean-field techniques (see, e.g., \citealt{iyer2014mean, balseiro2015repeated}) in our contextual value model as long as values are independent across time. Therefore, our model can be alternatively interpreted as one in which there is a large population of active buyers and each buyer competes with a random subset of buyers. This assumption is well motivated in the context of internet advertising markets because the number of advertisers actively bidding is typically large and, because of sophisticated targeting technologies, advertisers often participate only in a fraction of all auctions.

\rk{

\subsection{\newsanti Ties and the Role of Contexts}

Before moving onto the proof of existence of SFPE, we would like to shed some light on the role played by contexts in our model and results. The assumption that the feature vectors $\alpha$ are drawn from a distribution $F$ which has a density is necessary for our results to hold. In fact, if there was only one deterministic context, an SFPE may fail to exist: we provide an example in Appendix~\ref{appendix:det-context-counter}. The root cause behind the absence of a well-behaved equilibrium in this example is the tension between the proclivity of budgets to cause ties with positive probability (as we demonstrate in Section~\ref{sec:structural}) and the potential lack of equilibria for first-price auctions under value distributions that cause ties with a positive probability. Our example in Appendix~\ref{appendix:det-context-counter} does admit a symmetric equilibrium for second-price auction, thereby demonstrating the added complexity of dealing with first-price auctions.

Issues of tie-breaking have previously come up in a line of related work on pacing-based equilibria in second-price auctions \citep{borgs2007dynamics, balseiro2015repeated, conitzer2017multiplicative, Babaioff0HIL21}, where they were addressed by methods that are some version of randomly perturbing the value of each buyer and enforcing the budget constraint on average over these perturbations. This causes ties to become zero-probability events. It is possible to prove our existence and revenue equivalence results for the case of one deterministic context with value perturbations. However, unlike second-price auctions where bidding truthfully is a dominant strategy, value perturbation is not well-suited for first-price auctions because, even in the absence of budgets, the first-price auction equilibrium strategy would depend on the perturbations. Moreover, our structural results (Proposition~\ref{structural_prop} and Proposition~\ref{thm:targeting}) may not hold for arbitrary perturbations and would require an unjustifiably-strong assumption that carefully coordinates the perturbations across buyer types. That being said, if one is willing to ignore ties, our results continue to hold for a single deterministic context and the reader can safely continue with that setting in mind.

}

\section{Existence of Symmetric First-Price Equilibrium} \label{section_existence}
In this section, we study the existence of SFPE, and show that this existence is achieved by a compelling solution which is interpretable.
We do so in several steps. First, we define a natural parameterized class of value-pacing-based strategies. Then, assuming that the buyer types are using a strategy from this class, we establish strong duality for the optimization problem faced by each buyer type and characterize the primal optimum in terms of the dual optimum. This leads to a substantial simplification of the analysis because it allows us to work in the much simpler dual space. Finally, we establish the existence of a value-pacing-based SFPE by a fixed-point argument over the space of dual-multipliers.

\subsection{Value-Pacing-Based Strategies}\label{sec:value-pacing}

In this paper, pacing refers to multiplicatively scaling down a quantity.\footnote{We use the term value-pacing-based strategies to differentiate it from bid-pacing/bid-shading, which has previously been studied in the context of truthful auctions~\citep{borgs2007dynamics,balseiro2015repeated,balseiro2017budget,conitzer2017multiplicative,conitzer2019pacing}.\label{footnote:value-pacing}} Consider a function $\mu: \Theta \to \mathbb{R}_{\geq 0}$, which we will refer to as the \emph{pacing function}. We define the \emph{paced weight vector} of a buyer with type $(w,B)$ to be $w/(1 + \mu(w,B))$, which is simply the true weight vector $w$ scaled down by the factor $1/(1 + \mu(w,B))$. Similarly, we define the \emph{paced value} of a buyer type $(w,B)$ for item $\alpha$ as $w^T\alpha/(1 + \mu(w,B))$. We will use pacing to ensure that the budget constraints of all buyer types are satisfied, and at the same time, maintain the best response property at equilibrium. The motivation for using pacing as a budget management strategy will become clear in the next section, where we show that the best response of a buyer to other buyers using a value-pacing-based strategy is to also use a value-pacing-based strategy. Before defining the strategy, we set up some preliminaries.

Consider a pacing function $\mu: \Theta \to \R_{\geq 0}$ and an item $\alpha \in A$. Let $\lambda_\alpha^\mu$ denote the distribution of paced values $w^T \alpha/(1 + \mu(w,B))$ for item $\alpha$ when $(w,B) \sim G$. Let $H_\alpha^\mu$ denote the distribution of the highest value $Y:=\max\{X_1,\dots, X_{n-1}\}$ among $n-1$ buyers, when each $X_i \sim \lambda_\alpha^\mu$ is drawn independently for $i \in \{1,\ldots,n-1\}$.
Observe that $H_\alpha^\mu( (-\infty, x]) = \lambda_\alpha^\mu((-\infty, x])^{n-1}$ for all $\alpha \in A$ because the random variables are i.i.d.

For a given item $\alpha \in A$, when $x \geq r(\alpha)$, define the following bidding function,
\begin{align*}
	\sigma_\alpha^\mu(x) \coloneqq x - \int_{r(\alpha)}^{x}  \frac{H_\alpha^\mu(s)}{H_\alpha^\mu(x)} ds,
\end{align*}
where we interpret $\sigma_\alpha^\mu(x)$ to be $0$ if $H_\alpha^\mu(x) = 0$. Moreover, when $x < r(\alpha)$, define $\sigma_\alpha^\mu(x) \coloneqq x$ (we make this choice to ensure that no value below the reserve price gets mapped to a bid above the reserve price, while maintaining continuity). Note that $\sigma_\alpha^\mu(x) = \mathbb E\left[ \max(Y,r) \mid Y < x\right]$. If $\lambda_\alpha^\mu$ has a density, then $\sigma_\alpha^\mu$ is the same as the single-auction equilibrium strategy for a standard first-price auction without budgets, when the buyer values are drawn i.i.d. from $\lambda_\alpha^\mu$ and the item has a reserve price of $r(\alpha)$ (see, e.g., section 2.5 of \citealt{krishna2009auction}). Our value-pacing-based strategy uses $\sigma_\alpha^\mu(x)$ as a building block, by composing it with value-pacing:

\begin{definition}\label{definition_equilibrium_strategy}
	The value-pacing-based strategy $\beta^\mu: \Theta \times A \to \R_{\geq 0}$ for pacing function $\mu: \Theta \to \R_{\geq 0}$ is given by
	\begin{align*}
		\beta^\mu(w,B,\alpha) \coloneqq \sigma_\alpha^\mu \left(\frac{w^T\alpha}{1+\mu(w,B)}\right) \qquad \forall\ (w,B) \in \Theta, \alpha \in A	
	\end{align*}
\end{definition}

The bid $\beta^\mu(w,B,\alpha)$ is the amount that a non-budget-constrained buyer with type $(w,B)$ would bid on item $\alpha$ if she acted as if her paced value was her true value (this is captured by the use of the paced value as the argument for $\sigma_\alpha^\mu$), and believed that the rest of the buyers were also acting in this way (this is captured by the use of $\sigma_\alpha^\mu$). Therefore, our strategy has a simple interpretation: bidders pace their values and then bid as in a first-price auction in which competitors' values are also paced. Consequently, under our strategy bidders are shading their values twice: first when determining their paced values $w^T\alpha/(1+\mu(w,B))$ to account for budget constraints and then again when adopting the bidding function $\sigma_\alpha^\mu$ for the first-price auction. The bidding strategy $\sigma_\alpha^\mu$ optimally trades off two effects: on the one hand, bidding too close to their paced values leaves no utility to buyers because they pay their bid in case of winning and, on the other hand, bidding too low decreases payments at the expense of also decreasing the chance of winning.

Observe that value-pacing-based strategies greatly reduce the degrees of freedom in the system. Instead of specifying a bidding strategy, which is a function, for each buyer type, we only need to specify a scalar, $\mu(w,B)$ for each buyer type. In addition, our dual characterization allow us to optimize over the space of all bidding strategies without imposing any restriction on the class of admissible functions. Having defined value-pacing-based strategies, we are now ready to state our main existence result.

\begin{theorem} \label{main_existence_result}
	There exists a pacing function $\mu: \Theta \to \R_{\geq 0}$ such that the value-pacing-based strategy $\beta^\mu: \Theta \times A \to \R_{\geq 0}$ is a Symmetric First-Price Equilibrium (SFPE).	
\end{theorem}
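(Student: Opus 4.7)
My plan is to execute the three-step program the authors sketch: (i) reduce a buyer's infinite-dimensional best response against $\beta^\mu$ to a single scalar dual multiplier, (ii) recast existence of an SFPE as a fixed point of the mapping that sends a pacing function to the corresponding best-response multipliers, and (iii) locate that fixed point by a topological argument on a compact class of functions.

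\emph{Step 1: Best response via Lagrangian duality.} Fix any pacing function $\mu$ and consider the problem of a buyer $(w,B)$ when every competitor plays $\beta^\mu$. Let $G^\mu_\alpha(b)$ denote the probability of winning with bid $b$ on item $\alpha$ (the product of the reserve indicator and the CDF of the largest competing bid, which by the definition of $\beta^\mu$ reduces to $H^\mu_\alpha(x)$ where $x$ is the paced value whose image under $\sigma^\mu_\alpha$ equals $b$). The Lagrangian with multiplier $\lambda \ge 0$ is
\begin{equation*}
(1+\lambda)\, \mathbb{E}_\alpha\!\left[\left(\tfrac{w^T\alpha}{1+\lambda}-b(\alpha)\right) G^\mu_\alpha(b(\alpha))\right] + \lambda B,
\end{equation*}
which decouples across $\alpha$. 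The inner pointwise problem is exactly the non-budgeted best-response of a buyer with ``effective value'' $w^T\alpha/(1+\lambda)$ against competitors whose values are i.i.d.\ from $\lambda^\mu_\alpha$, so by the standard derivation of $\sigma^\mu_\alpha$ its maximizer is $\sigma^\mu_\alpha(w^T\alpha/(1+\lambda))$. To promote the Lagrangian relaxation to strong duality I would verify a Slater condition (bidding zero is strictly feasible since $B>B_{\min}>0$) and show that the dual value $D(\lambda)$ is convex and continuous in $\lambda$, so minimization over $\lambda\ge 0$ is attained and the primal-dual gap is zero. The conclusion is that the best response of $(w,B)$ to $\beta^\mu$ is itself a value-pacing strategy with a constant multiplier $\lambda^\star(w,B;\mu)$, the unique value (on the interval where the budget binds) for which $\mathbb{E}_\alpha[\sigma^\mu_\alpha(\tfrac{w^T\alpha}{1+\lambda}) G^\mu_\alpha(\sigma^\mu_\alpha(\tfrac{w^T\alpha}{1+\lambda}))]=B$, and $0$ otherwise.

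\emph{Step 2: Fixed-point reformulation and a priori bound.} Define the operator $T$ on pacing functions by $T(\mu)(w,B) := \lambda^\star(w,B;\mu)$. By Step 1 a fixed point $\mu^\star=T(\mu^\star)$ immediately yields an SFPE, because each buyer's best response to $\beta^{\mu^\star}$ is $\beta^{\mu^\star}$ itself. Since paying more than $\sup_{w,\alpha} w^T\alpha$ per item is never optimal and $B\ge B_{\min}$, there is an explicit $\bar M$ (of order $Ud/B_{\min}$) such that $T$ always lands in the set $\mathcal{M}:=\{\mu:\Theta\to[0,\bar M]\}$; in particular the codomain is bounded.

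\emph{Step 3: Fixed point via a compactness/continuity argument.} This is where the main obstacle lies: the winning probabilities $G^\mu_\alpha$ and the bidding transforms $\sigma^\mu_\alpha$ depend on $\mu$ through the distribution $\lambda^\mu_\alpha$ of paced values, and small perturbations of $\mu$ can introduce atoms or flat spots that make $T$ discontinuous in the sup norm. My plan is to avoid strong topologies and work in the weaker ``Helly'' setting hinted at by the authors: restrict $T$ to a set $\mathcal{V}\subset\mathcal{M}$ of pacing functions whose dependence on the scalar budget $B$ (holding $w$ fixed) is monotone and uniformly of bounded variation, and show that $T(\mathcal{V})\subseteq\mathcal{V}$ by exploiting the monotonicity of the budget-tightness equation in $\lambda$ and in $B$. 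Helly's selection theorem then makes $\mathcal{V}$ sequentially compact in the topology of pointwise a.e.\ convergence. Continuity of $T$ along such convergent sequences can be extracted by dominated convergence: pointwise convergence of $\mu_k$ implies weak convergence of $\lambda^{\mu_k}_\alpha$ (for each $\alpha$, by continuity of the paced-value map in $\mu$), hence convergence of $H^{\mu_k}_\alpha$ at continuity points, hence convergence of the integrand defining the budget-tightness condition, and finally convergence of the unique root $\lambda^\star$. A Schauder-type fixed-point theorem applied to $T$ on $\mathcal{V}$ then yields the required $\mu^\star$.

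The hard part is Step 3: establishing the closed-under-$T$ structural class $\mathcal{V}$ and pushing through the continuity-along-subsequences argument in the presence of the reserve price and possible flats of $H^\mu_\alpha$. Everything else (duality, Slater, monotonicity of $\lambda^\star$, the a priori bound) is comparatively routine once the right class of pacing functions is isolated.
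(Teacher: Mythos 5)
Your overall architecture (Lagrangian duality to reduce each buyer's best response to a scalar multiplier, then a fixed point over a bounded-variation class of pacing functions) is the same as the paper's, but two of your justifications have genuine gaps.

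First, the appeal to a Slater condition in Step 1 does not work: the primal problem $Q^\mu(w,B)$ is \emph{not} convex (the objective is built from indicator functions of winning events and the decision variable is an arbitrary bid function), so Slater plus convexity and continuity of the dual does not yield a zero duality gap. The paper instead certifies optimality of $\sigma_\alpha^\mu(w^T\alpha/(1+t^*))$ directly via a sufficiency theorem for nonconvex problems (Theorem 5.1.5 of Bertsekas): one must exhibit primal feasibility, dual feasibility, Lagrangian optimality, and complementary slackness. The first and last of these are extracted from the first-order conditions of the dual minimization, which requires proving that $q^\mu(w,B,\cdot)$ is \emph{differentiable} with derivative equal to $B$ minus expected expenditure. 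That differentiability is itself nontrivial: it hinges on showing that the distribution of $w^T\alpha/r(\alpha)$ is atomless, which only holds for $G$-almost-every $(w,B)$ (this is the origin of the measure-one set $\Theta'$ in the paper). Your sketch skips this entirely, and your assertion that $\lambda^\star$ is the \emph{unique} root of the budget-tightness equation is unjustified; the paper works with the full $\arg\min$ correspondence and selects its smallest element.

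Second, in Step 3 your compactness class is under-specified in a way that would break the argument. Controlling monotonicity and variation of $\mu(w,B)$ only in the scalar $B$ (holding $w$ fixed) gives no compactness over the $(d+1)$-dimensional type space: the functions could oscillate arbitrarily in $w$. The paper proves that the smallest dual-optimal multiplier $\ell^\mu$ is monotone in \emph{every} coordinate (non-decreasing in each $w_i$, non-increasing in $B$, via increasing-differences comparative statics), uses this to bound the multivariable distributional total variation $V(\ell^\mu,\Theta)$, and then gets compactness of the BV ball in $L_1$ from the BV weak-star compactness theorem rather than a one-dimensional Helly argument. Moreover, because best responses need not be single-valued, the paper applies Kakutani--Fan--Glicksberg to an $\arg\min$ correspondence (with upper hemicontinuity obtained from Berge's maximum theorem applied to a single ``aggregated'' objective $f(\mu,\hat\mu)=\mathbb{E}[q^\mu(w,B,\hat\mu(w,B))]$, followed by a lemma converting aggregate optimality back to almost-sure pointwise optimality), rather than Schauder applied to a single-valued $T$; your route would additionally need measurability of $(w,B)\mapsto\lambda^\star(w,B;\mu)$ and continuity of the selected root, neither of which is automatic.
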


Before proceeding with the proof of Theorem~\ref{main_existence_result}, we note some of its practical prescriptions: (i) It recommends that buyers should pace their value to manage their budgets. As we will later show, the equilibrium pacing functions for first-price auctions are identical to the ones for second-price auctions. This suggests that pacing-based-budget-management techniques developed for second-price auctions (like \citealt{balseiro2019learning}) can be used for first-price auctions to compute the paced valued. (ii) Advertising platforms typically provide bidding landscapes to the buyers which allow them to compute the optimal bid for a given value. Given a context $\alpha$, if $\mathbb{P}_\alpha^\mu$ represents the equilibrium bidding landscape (distribution of highest competing bids), then we have
\begin{align*}
	\sigma_\alpha^\mu \left( x \right) \in \argmax_b \left( x - b \right) \mathbb{P}_\alpha^\mu(b)
\end{align*}
Thus, the paced value can be combined with the landscape to compute the optimal bid $\beta^\mu(w, B, \alpha)$.

We provide the proof of Theorem~\ref{main_existence_result} in the remaining subsections. First, in Subsection~\ref{subsection_strong_duality}, we show that, if all of the competing buyers are assumed to employ a value-pacing-based strategy, then strong duality holds for the budget-constrained utility maximization problem faced by each buyer type. This allows us to drastically simplify the equilibrium strategy space of each buyer type from a function (mapping contexts to bids) to a single scalar (the dual variable $\mu(w,B)$). Next, in Subsection~\ref{sec:fixed-point}, we prove the existence of a value-pacing-based equilibrium strategy by proving a fixed-point theorem in the dual space of pacing functions. Despite our simplifying move to the dual space, establishing a fixed point is by no means a straightforward task because we are still left with a dual variable for each buyer type and there are (uncountable) infinitely many of those. This leads to an infinite-dimensional fixed-point problem which requires careful topological analysis. We find that the commonly-employed general-purpose topologies fail for our problem, and this motivates us to carefully exploit the structure of pacing to select the right topology.


\subsection{Strong Duality and Best Response Characterization}\label{subsection_strong_duality}

We start by considering the optimization problem faced by an individual buyer with type $(w,B)$ when all competing buyers use the value-pacing-based strategy with pacing function $\mu: \Theta \to \mathbb{R}_{\geq 0}$. Denoting by $Q^\mu(w,B)$ the optimal expected utility of such a buyer, we have
\begin{align*}
    Q^\mu( w, B) = \max_{b: A \to \R_{\geq 0}} \quad &\mathbb{E}_{\alpha,\{\theta_i\}_{i=1}^{n-1}}\left[ (w^T \alpha - b(\alpha))\ \mathds{1}\left\{b(\alpha) \geq \max\left( r(\alpha), \left\{ \beta^\mu(\theta_i, \alpha) \right\}_i \right) \right\} \right]\\
    \textrm{s.t.} \quad &\mathbb{E}_{\alpha,\{\theta_i\}_{i=1}^{n-1}}\left[ b(\alpha)\ \mathds{1}\left\{b(\alpha) \geq \max\left( r(\alpha), \left\{ \beta^\mu(\theta_i, \alpha) \right\}_i \right) \right\}\right] \leq B.
\end{align*}
Our goal in this section is to show that the value-pacing-based strategy put forward in Definition~\ref{definition_equilibrium_strategy} is a best response when competitors are pacing their bids according to a pacing function $\mu$.

\begin{remark}
	Compare $Q^\mu(w,B)$ to the definition of a SFPE (Definition \ref{definition_equilibrium}), and observe that, if we were able to show that there exists $\mu: \Theta \to \R_{\geq 0}$ such that $\beta^\mu(w, B,\ \cdot)$ is an optimal solution to $Q^\mu(w, B)$ almost surely w.r.t. $(w,B) \sim G$, then $\beta^\mu$ would be an SFPE.
\end{remark}

For $\mu: \Theta \to \mathbb{R}_{\geq 0}$ and $(w,B) \in \Theta$, consider the Lagrangian optimization problem of $Q^\mu(w, B)$ in which we move the budget constraint to the objective using the Lagrange multiplier $t \geq 0$. We use $t$ to denote the multiplier of one buyer in isolation to distinguish from $\mu$, which is a function giving a multiplier for every buyer type. Denoting by $q^\mu(w,B,t)$ the dual function, we have that
\begin{align*}
	 q^\mu(w,B,t) &= \max_{b(\cdot)} \mathbb{E}_{\alpha, \{\theta_i\}_{i=1}^{n-1}}\left[ (w^T \alpha - (1+t)b(\alpha))\ \mathds{1}\left\{b(\alpha) \geq \max\left( r(\alpha), \left\{ \beta^\mu(\theta_i, \alpha) \right\}_i \right) \right\}\right] + t B\\
	 &= (1 + t) \max_{b(\cdot)} \mathbb{E}_{\alpha,\{\theta_i\}_{i=1}^{n-1}}\left[ \left( \frac{w^T \alpha}{1+t} - b(\alpha)\right)\ \mathds{1}\left\{b(\alpha) \geq \max\left( r(\alpha), \left\{ \beta^\mu(\theta_i, \alpha) \right\}_i \right) \right\}\right] + tB.
\end{align*}
The dual problem of $Q^\mu(w, B)$ is given by $\min_{t\ge0} q^\mu(w, B, t)$.

The next lemma states that the optimal solution to the Lagrangian optimization problem is a value-pacing-based strategy. More specifically,  for every pacing function $\mu: \Theta \to  \R_{\geq 0}$, buyer type $(w,B)$ and dual multiplier $t$, the value pacing based strategy $\sigma_\alpha^\mu \left( w^T \alpha / (1+t)\right)$ is an optimal solution to the Langrangian relaxation of $Q^\mu(w, B)$ corresponding to multiplier $t$. Note that, in general, $t$ need not be equal to $\mu(w,B)$.

\begin{lemma}\label{lagrangian_optimal}
	For pacing function $\mu: \Theta \to \mathbb{R}_{\geq 0}$, buyer type $(w,B) \in \Theta$ and dual multiplier $t \geq 0$,
	\begin{align*}
        \sigma_\alpha^\mu \left(\frac{w^T \alpha}{1+t}\right) \in \argmax_{b(\cdot)}\ \mathbb{E}_{\alpha,\{\theta_i\}_{i=1}^{n-1}}\left[ \left( \frac{w^T \alpha}{1+t} - b(\alpha)\right)\ \mathds{1}\left\{b(\alpha) \geq \max\left( r(\alpha), \left\{ \beta^\mu(\theta_i, \alpha) \right\}_i \right) \right\}\right].
    \end{align*}	
\end{lemma}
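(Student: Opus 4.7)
The plan is to reduce the functional optimization over $b : A \to \R_{\geq 0}$ to a pointwise problem in $\alpha$ and then recognize it as a standard first-price-auction best response with the ``apparent value'' $v := w^T\alpha/(1+t)$. Since the integrand is non-negative once we restrict to bids $b(\alpha) \leq w^T\alpha$ (clearly without loss of optimality) and bounded, Tonelli lets me interchange the expectation over $\alpha$ with the one over $\{\theta_i\}$ and reduces the problem to choosing, for each $\alpha$, a bid $b(\alpha)$ that maximizes
\begin{equation*}
\Pi_\alpha(b) \;:=\; \left(v - b\right)\Pr\!\left[b \geq \max\bigl(r(\alpha),\{\beta^\mu(\theta_i,\alpha)\}_{i=1}^{n-1}\bigr)\right].
\end{equation*}
Measurability of a pointwise selector will follow from the joint measurability of $\sigma_\alpha^\mu$ and $H_\alpha^\mu$ in $\alpha$ provided by Lemma~\ref{pacing_strategy_properties}.

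Next I would use part~(c) of Lemma~\ref{pacing_strategy_properties} (monotonicity of $\sigma_\alpha^\mu$) to identify the maximum competitor bid with $\sigma_\alpha^\mu(Y)$, where $Y := \max_i X_i$ and $X_i \overset{\text{i.i.d.}}{\sim} \lambda_\alpha^\mu$, so $Y \sim H_\alpha^\mu$. By part~(a), $H_\alpha^\mu$ is continuous for $F$-almost every $\alpha$, which allows me to ignore tie events. The pointwise problem is then precisely the best-response problem of a bidder with value $v$ in a budget-free first-price auction with reserve $r(\alpha)$ whose competitors use $\sigma_\alpha^\mu$. Since $\sigma_\alpha^\mu$ is, by construction, the symmetric equilibrium bid function of that auction when values are drawn from $\lambda_\alpha^\mu$, the natural guess is $b^\star(\alpha) = \sigma_\alpha^\mu(v)$.

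To verify, I would compute $\Pi_\alpha$ on bids of the form $b = \sigma_\alpha^\mu(z)$, $z \geq r(\alpha)$. Part~(c) gives $\Pr[\sigma_\alpha^\mu(z) \geq \sigma_\alpha^\mu(Y)] = H_\alpha^\mu(z)$, and the integral definition of $\sigma_\alpha^\mu$ yields
\begin{equation*}
\Pi_\alpha\!\left(\sigma_\alpha^\mu(z)\right) \;=\; (v-z)\,H_\alpha^\mu(z) + \int_{r(\alpha)}^z H_\alpha^\mu(s)\,ds,
\end{equation*}
so that $\Pi_\alpha(\sigma_\alpha^\mu(v)) - \Pi_\alpha(\sigma_\alpha^\mu(z)) = \int_z^v H_\alpha^\mu(s)\,ds - (v-z)\,H_\alpha^\mu(z) \geq 0$ by monotonicity of $H_\alpha^\mu$ (with the two cases $z \leq v$ and $z > v$ treated separately). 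Bids below $r(\alpha)$ yield zero and are dominated by $\sigma_\alpha^\mu(v)$; bids above $\sup_z \sigma_\alpha^\mu(z)$ win with the same probability as that supremum but pay strictly more; and bids inside a ``gap'' in the image of $\sigma_\alpha^\mu$ can be replaced by the lower endpoint of the gap without decreasing $\Pi_\alpha$. Hence optimality within the image implies global optimality of $\sigma_\alpha^\mu(v)$.

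The main obstacle is completing this verification without any regularity on $H_\alpha^\mu$ beyond continuity: the textbook first-order-condition derivation of the equilibrium bid in an i.i.d.~first-price auction is not directly available, because $H_\alpha^\mu$ need not be differentiable or even strictly increasing. The integral identity above sidesteps this, reducing everything to monotonicity of $H_\alpha^\mu$. A secondary care-point is the interaction with the infeasible tie-breaking rule of Definition~\ref{definition_equilibrium}, but almost-sure continuity of $H_\alpha^\mu$ (part~(a)) together with part~(d) make ties a null event, so the tie-breaking convention does not affect $\Pi_\alpha$.
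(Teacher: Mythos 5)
Your proposal is correct and follows essentially the same route as the paper: reduce to a pointwise problem in $\alpha$, use part (c) of Lemma~\ref{pacing_strategy_properties} to rewrite the winning event in terms of paced values, express the payoff of a bid $\sigma_\alpha^\mu(z)$ via the integral identity $(v-z)H_\alpha^\mu(z)+\int_{r(\alpha)}^{z}H_\alpha^\mu(s)\,ds$, and conclude by monotonicity of $H_\alpha^\mu$. The only cosmetic difference is that the paper disposes of bids outside the image of $\sigma_\alpha^\mu$ by the intermediate value theorem (using a.s.\ continuity of $\sigma_\alpha^\mu$ and $\sigma_\alpha^\mu(0)=0$) rather than your ``gap'' argument, but both work.
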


In the proof of Lemma \ref{lagrangian_optimal}, we actually show something stronger than the statement of Lemma \ref{lagrangian_optimal}: the value-pacing-based strategy is optimal point-wise for each $\alpha$ and not just in expectation over $\alpha$. This follows from the observation that once we fix an item $\alpha$, we are solving the best response optimization problem faced by a buyer with value $w^T \alpha / (1+t)$ in the standard i.i.d.~setting~\citep{krishna2009auction} with competing buyer values being drawn from $\lambda_\alpha^\mu$ and under the assumption that the competing buyers use the strategy $\sigma_\alpha^\mu$. If $\lambda_\alpha^\mu$ had a strictly positive density, then the optimality of $\sigma_\alpha^\mu \left(w^T \alpha / (1+t)\right)$ would be a direct consequence of the definition of a symmetric BNE in the standard i.i.d.~setting. Even though the standard results cannot be used directly because of the potential absence of a density in the situation outlined above, we show that it is possible to adapt the techniques used in the proof of Proposition 2.2 of \cite{krishna2009auction} to show Lemma~\ref{lagrangian_optimal}.

Using Lemma \ref{lagrangian_optimal}, we can simplify the expression for the dual function $q^\mu(w,B,t)$. First, note that because $\sigma_\alpha^\mu$ is non-decreasing the highest competing bid can be written as
\[
    \max_{i=1,\ldots,n-1} \left\{ \beta^\mu(\theta_i, \alpha) \right\}
    = \max_{i=1,\ldots,n-1} \left\{ \sigma_\alpha^\mu\left(\frac{w_i^T\alpha}{1+\mu(\theta_i)} \right) \right\}
    = \sigma_\alpha^\mu\left(Y \right)\,,
\]
where $Y\sim H^\mu_\alpha$ is the maximum of $n-1$ paced values. Therefore, using that $\sigma_\alpha^\mu\left(w^T \alpha / (1+t) \right)$ is an optimal bidding strategy we get that
\begin{align*}
    q^\mu(w, B, t)
    &= (1+t)\ \mathbb{E}_{\alpha} \mathbb{E}_{Y\sim H^\mu_\alpha}\left[ \left(\frac{w^T \alpha}{1+t} - \sigma_\alpha^\mu\left(\frac{w^T \alpha}{1+t}\right)\right)\ \mathds{1}\left\{\sigma_\alpha^\mu\left(\frac{w^T \alpha}{1+t}\right) \geq \max \left( r(\alpha), \sigma_\alpha^\mu(Y) \right) \right\} \right] + t B\\
    &= (1+t)\ \mathbb{E}_{\alpha} \mathbb{E}_{Y\sim H^\mu_\alpha}\left[ \left(\frac{w^T \alpha}{1+t} - \sigma_\alpha^\mu\left(\frac{w^T \alpha}{1+t}\right)\right)\ \mathds{1}\left\{\frac{w^T \alpha}{1+t} \geq \max\left( r(\alpha), Y\right) \right\} \right] + t B\\
    &= (1+t)\ \mathbb{E}_{\alpha}\left[ \left(\frac{w^T \alpha}{1+t} - \sigma_\alpha^\mu\left(\frac{w^T \alpha}{1+t}\right)\right)\ H^\mu_\alpha\left(\frac{w^T \alpha}{1+t}\right) \mathds{1}\left\{\frac{w^T \alpha}{1+t} \geq r(\alpha) \right\} \right] + t B\\
    &= (1+t)\ \mathbb{E}_\alpha\left[\mathds{1}\left\{\frac{w^T \alpha}{1+t} \geq r(\alpha) \right\} \int_{r(\alpha)}^{\frac{w^T\alpha}{1+t}} H^\mu_\alpha(s) ds\right] + tB\,,
\end{align*}
where the second equation follows from part (c) of Lemma \ref{pacing_strategy_properties}, the third from taking expectations with respect to $Y$, and the last from our formula for $\sigma_\alpha^\mu$.

We now present the main result of this subsection, which characterizes the optimal solution of $Q^\mu(w, B)$ in terms of the optimal solution of the dual problem. The idea of using value-pacing-based strategies as candidates for the equilibrium strategy owes its motivation to Proposition~\ref{optimal_solution}.
It establishes that if all the other buyers are using a value-pacing-based strategy, with some pacing function $\mu: \Theta \to \R_{\geq 0}$, then a value-pacing-based strategy is a best response for a given buyer $(w,B)$.

\begin{proposition} \label{optimal_solution}
     There exists $\Theta' \subset \Theta$ such that $G(\Theta') = 1$ and for all  pacing functions $\mu: \Theta \to \R_{\geq 0}$ and buyer types $(w,B) \in \Theta'$, if $t^*$ is an optimal solution to the dual problem, i.e., if $t^* \in \argmin_{t^* \geq 0} q^\mu(w, B, t)$, then $\sigma_\alpha^\mu\left(w^T \alpha / (1+t^*)\right)$ is an optimal solution for the optimization problem $Q^\mu( w,B)$.
\end{proposition}

In Proposition~\ref{optimal_solution}, the pacing parameter $t^*$ used for pacing in the best response can, in general, be different from $\mu(w,B)$. This caveat requires a fixed-point argument to resolve, which will be the subject matter of the next subsection.
\vspace{-0.5em}
\begin{remark}\label{remark_theta'}
	Restricting to the measure-one set $\Theta'$ is without loss. Recall that according to Definition~\ref{definition_equilibrium}, a strategy constitutes a SFPE if, almost surely over $(w,B) \sim G$, using $\beta^*$ is an optimal solution to their optimization problem when all other buyer types also use it. As a consequence of this definition, we will show that it suffices to show strong duality for a subset of buyer types $\Theta' \subset \Theta$ such that $G(\Theta') = 1$. In the absence of reserve prices $r(\alpha)$ for the items, Proposition~\ref{optimal_solution} holds for all $(w,B) \in \Theta$. Reserve prices introduce some discontinuities in the utility and payment term. The subset $\Theta' \subset \Theta$ captures a collection of buyer types for which these discontinuities are inconsequential, while maintaining $G(\Theta') =1$.
\end{remark}

Observe that $Q^\mu(w, B)$ is not a convex optimization problem, so in order to prove the above theorem, we cannot appeal to the well-known strong duality results established for convex optimization. Instead, we will use Theorem 5.1.5 of \citet{bertsekas1998nonlinear}, which states that, to prove optimality of $\sigma_\alpha^\mu\left(w^T \alpha/(1+t^*)\right)$ for $Q^\mu(w, B)$, it suffices to show primal feasibility of $\sigma_\alpha^\mu\left(w^T \alpha/(1+t^*)\right)$, dual feasibility of $t^*$, Lagrange optimality of $\sigma_\alpha^\mu\left(w^T \alpha/(1+t^*)\right)$ for multiplier $t^*$, and complementary slackness. Our approach will be to show these required properties by combining the differentiability of the dual function with first order optimality conditions for one dimensional optimization problems. The key observation here is that the derivative of the dual function is equal to the difference between the budget of the buyer and her expected expenditure. Therefore, at optimality, the first-order conditions of the dual problem imply feasibility of the value-based pacing strategy. To prove differentiability we leverage that in our game the distribution of competing bids is absolutely continuous, which is critical for our results to hold.

For $t^* \in \argmin_{t \geq 0} q^\mu(w, B, t)$, if we apply the first-order optimality conditions for an optimization problem with a differentiable objective function over the domain $[0, \infty)$, we get
\begin{align*}
	\frac{\partial q^\mu(w, B, t^*)}{\partial t} \geq 0, \qquad t^* \geq 0, \qquad t^*\cdot \frac{\partial q^\mu(w, B, t^*)}{\partial t} = 0	\,.
\end{align*}
The first condition can be shown to imply primal feasibility, the second implies dual feasibility, and the third implies complementary slackness. Combining this with Lemma~\ref{lagrangian_optimal}, which establishes Lagrange optimality, and applying Theorem 5.1.5 of \cite{bertsekas1998nonlinear} yields Proposition~\ref{optimal_solution}. The complete proof of Proposition~\ref{optimal_solution} can be found in Appendix~\ref{appendix_proof_existence}.

\subsection{Fixed Point Argument}\label{sec:fixed-point}

In light of Proposition~\ref{optimal_solution}, the proof of Theorem~\ref{main_existence_result} (the existence of a value-pacing-based SFPE) boils down to showing that there exists a pacing function $\mu: \Theta \to \R_{\geq 0}$ such that, almost surely w.r.t. $(w,B) \sim G$, $\mu(w,B)$ is an optimal solution to the dual optimization problem $\min_{t \geq 0} q^\mu(w, B, t)$. In other words, given that everybody else acts according to $\mu$, a buyer $(w,B)$ that wishes to minimize the dual function is best off acting according to $\mu$.
More specifically, in Proposition~\ref{optimal_solution} we showed that, starting from a pacing function $\mu:\Theta \to \R_{\geq 0}$, if $\mu^*(w,B)$ constitutes an optimal solution to the dual problem $\min_{t \geq 0} q^\mu(w, B, t)$ almost surely w.r.t. $(w,B) \sim G$, then $\sigma_\alpha^\mu \left(w^T \alpha / (1+\mu^*(w,B))\right)$ is an optimal solution for the optimization problem $Q^\mu(w,B)$ almost surely w.r.t. $(w,B) \sim G$.
In other words, bidding according to $\sigma_\alpha^\mu$ while pacing according to $\mu^*:\Theta \to \R_{\geq 0}$ is a utility-maximizing strategy for buyer $(w,B) \sim G$ almost surely, given that other buyers bid according to $\sigma_\alpha^\mu$ with paced values obtained from $\mu$.
The following theorem establishes the existence of a pacing function $\mu: \Theta \to \R_{\geq 0}$ for which $\mu$ itself fills the role of $\mu^*$ in the previous statement, thereby implying the optimality of $\sigma_\alpha^\mu \left(w^T \alpha/(1+ \mu(w,B))\right)$  almost surely w.r.t. $(w,B) \sim G$.

\begin{proposition} \label{main_fixed_point}
    There exists $\mu: \Theta \to \R_{\geq 0}$ such that $\mu(w,B) \in \argmin_{t \geq 0} q^\mu( w, B, t)$ almost surely w.r.t. $(w,B) \sim G$.
\end{proposition}

We prove the above statement using an infinite-dimensional fixed-point argument on the space of pacing functions with a carefully chosen topology. Informally, we need to show that the correspondence that maps a pacing function $\mu: \Theta \to \R_{\geq 0}$ to the set of dual-optimal pacing functions, $\mu^*: \Theta \to \R_{\geq 0}$ which satisfy $\mu^*(w,B) \in \argmin_{t \geq 0} q^\mu(w, B, t)$, has a fixed point. However, unlike finite-dimensional fixed-point arguments, establishing the sufficient conditions of convexity and compactness needed to apply infinite-dimensional fixed point theorems requires a careful topological argument.

Lemma \ref{compact_dual_space} in the appendix shows that all dual optimal functions $\mu^*: \Theta \to \R_{\geq 0}$ map to a range that is a subset of $[0, \omega/B_{\min}]$. Therefore, any pacing function $\mu: \Theta \to \R_{\geq 0}$ that is a fixed point, i.e., satisfies $\mu(w,B) \in \argmin_{t \geq 0} q^\mu(w, B, t)$ almost surely w.r.t. $(w,B) \sim G$, must also satisfy $\text{range}(\mu) \subset [0, \omega/ B_{\min}]$. Hence, it suffices to restrict our attention to pacing functions of the form $\mu: \Theta \to [0, \omega/ B_{\min}]$.

Consider the set of all potential pacing functions
\[
  \mathcal{X} = \{\mu \in L_1\left(\Theta \right) \mid \mu(w,B) \in [0, \omega/B_{\min}]\ \forall\ (w,B) \in \Theta\},
\]
where $L_1\left(\Theta \right)$ is the space of functions $f : \Theta \rightarrow \R$ with finite $L_1$ norm w.r.t. the Lebesgue measure. Here, by $L_1$ norm of $f$ w.r.t. the Lebesgue measure, we mean $\|f\|_{L_1} = \int_\Theta |f(\theta)| d\theta$. Our goal is to find a $\mu \in \X$ such that almost surely w.r.t $(w,B) \sim G$ we have
\[
\mu(w,B) \in \argmin_{t \in [0, \omega/ B_{\min}]} q^\mu(w, B, t).
\]
Dealing with infinitely many individual optimization problems $\min_{t \in [0, \omega/ B_{\min}]} q^\mu(w, B, t)$, one for each $(w,B)$, makes the analysis hard. To remedy this issue, we combine these optimization problems by defining the objective $f: \X \times \X \to \R$, for all $\mu, \hat{\mu} \in \X$, as follows
\begin{align*}
        f(\mu, \hat{\mu}) &\coloneqq \mathbb{E}_{(w,B)} [q^\mu(w, B, \hat{\mu}(w,B))].
\end{align*}
For a fixed $\mu \in \X$, we then get a single optimization problem $\min_{\hat{\mu} \in \X} f(\mu, \hat{\mu})$ over functions in $\X$, instead of one optimization problem for each of the infinitely-many buyer types $(w,B) \in \Theta$. Later, in Lemma~\ref{combined_to_individual}, we will show that any optimal solution to the combined optimization problem is also an optimal solution to the individual optimization problems almost surely w.r.t $(w,B) \sim G$. Thus, shifting our attention to the combined optimization problem is without any loss (because sub-optimality on zero-measure sets is tolerable).

 With $f$ as above, we proceed to define the correspondence that is used in our fixed-point argument. The \emph{optimal solution correspondence} $C^*: \mathcal{X} \rightrightarrows \mathcal{X}$ is given by $C^*(\mu) \coloneqq \arg \min_{\hat{\mu} \in \mathcal{X}} f(\mu, \hat{\mu})$ (which could be empty) for all $\mu \in \mathcal{X}$. In Lemma \ref{combined_to_individual}, we will show that the proof of Proposition~\ref{main_fixed_point} boils down to showing that $C^*$ has a fixed point, which will be our next step.

Our proof will culminate with an application of the Kakutani-Glicksberg-Fan theorem, on a suitable version of $C^*$, to show the existence of a fixed point. An application of this result (or any other infinite dimensional fixed point theorem) requires intricate topological considerations. In particular, we need to endow $\X$ with a topology that satisfies the following conditions:
\begin{itemize}
	\item[I.] $\X$ is compact, convex and $C^*(\mu)$ is a non-empty subset of $\X$ for all $\mu \in \X$.
	\item[II.] $C^*$ is a Kakutani map, i.e., it is upper hemicontinuous, and $C^*(\mu)$ is compact and convex for all $\mu \in \X$.
\end{itemize}

In the case of infinite dimensions, bounded sets in many spaces, such as the $L_p(\Omega)$ spaces, are not compact. In particular, $\X$ is not compact as a subset of $L_p(\Omega)$ for any $1 \leq p \leq \infty$. One possible way around it would be to consider the weak* topology on $\X \subset L_\infty(\Omega)$, in which bounded sets are compact. This choice runs into trouble because it is difficult to show the upper hemicontinuity of $C^*$ (property II) under the weak convergence notion of the weak* topology. Alternatively, one could impose structural properties and restrict to a subset of $\X$, such as the space of Lipschitz functions, in which both compactness and continuity can be established. The issue with this approach is that the correspondence operator may, in general, not preserve these properties, i.e., property I might not hold. For example, even if $\mu$ is Lipschitz, $C^*(\mu)$ might not contain any Lipschitz functions.

We would like to strike a delicate balance between properties I and II by picking a space in which we can establish compactness of $\X$ and upper hemicontinuity of $C^*$, while, at the same time, ensuring that $C^*(\mu)$ contains at least one element from this space. It turns out that the right space that works for our proof is the space of bounded variation. To motivate this topology on the space of pacing functions, we state some properties of the ``smallest'' dual optimal pacing function. For $\mu: \Theta \to [0, \omega/ B_{\min}]$, we define $\ell^\mu: \Theta \to [0, \omega/ B_{\min}]$ as
\begin{align*}
    \ell^\mu(w,B) \coloneqq \min \left\{s \in \argmin_{t \in [0, \omega/ B_{\min}]} q^\mu(w, B, t) \right\}
\end{align*}
for all $(w,B) \in \Theta$. The minimum always exists because $q^\mu(w, B, t)$ is continuous as a function of $t$ (see Corollary~\ref{continuity_dual} in the appendix for a proof) and the feasible set of the dual problem is compact.

We first show that $\ell^\mu$ varies nicely with $w$ and $B$ along individual components:

\begin{lemma}\label{monotonicity_pacing_function}
	For $\mu: \Theta \to [0, \omega/ B_{\min}]$, the following statements hold:
	\begin{enumerate}[topsep = 0pt]
		\item $\ell^\mu: \Theta \to [0, \omega/B_{\min}]$ is non-decreasing in each component of $w$.
		\item $\ell^\mu: \Theta \to [0, \omega/B_{\min}]$ is non-increasing as a function of $B$.
	\end{enumerate}
\end{lemma}

The proof applies results from comparative statics, which characterize the way the optimal solutions behave as a function of the parameters, to the family of optimization problems $\min_{t \in [0, \omega/ B_{\min}]} q^\mu(w, B, t)$ parameterized by $(w,B) \in \Theta$.

Now we wish to show bounded variation of $\ell^\mu$. It is a well-known fact that monotonic functions of one variable have finite total variation. Moreover, functions of bounded total variation also form the dual space of the space of continuous functions with the $L_{\infty}$ norm, which allows us to invoke the Banach-Alaoglu Theorem to establish compactness in the weak* topology. These results for single variable functions, although not directly applicable to the multivariable setting, act as a guide in choosing the appropriate topology for our setting.

Since pacing functions take as input several variables, we need to look at multivariable generalizations of total variation. To this end, we state one of the standard definitions (there are multiple equivalent ones) of total variation for functions of several variables (see section 5.1 of \citealt{evans2015measure}) and then follow it up by a lemma which gives a bound on the total variation of the component-wise monotonic function $\ell^\mu$.

\begin{definition} For an open subset $\Omega \subset \mathbb{R}^n$, the total variation of a function $u \in L_1(\Omega)$ is given by
	\begin{align*}
    V(u, \Omega) \coloneqq \sup \left\{ \int_{\Omega} u(\omega) \operatorname{div} \phi(\omega) d\omega\ \biggr\lvert\  \phi \in C_c^1(\Omega, \mathbb{R}^n), \|\phi\|_\infty \leq 1 \right\}
\end{align*}
where $C_c^1(\Omega, \R^n)$ is the space of continuously differentiable vector functions $\phi$ of compact support contained in $\Omega$ and $\operatorname{div} \phi = \sum_{i=1}^n \frac{\partial \phi_i}{\partial x_i}$ is the divergence of $\phi$.
\end{definition}

\begin{lemma}\label{topology_motivating_properties}
	For any pacing function $\mu: \Theta \to [0, \omega/B_{\min}]$, the following statements hold:
	\begin{enumerate}[topsep = 0pt]
		\item $\ell^\mu \in L_1(\Theta)$.
		\item $V(\ell^{\mu}, \Theta) \leq V_0$ where $V_0:= (d+1)U^{d+1} \omega/B_{\min}$ is a fixed constant.	
	\end{enumerate}
\end{lemma}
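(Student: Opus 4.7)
For part 1, I would observe that by construction $\ell^\mu$ takes values in the bounded interval $[0,\omega/B_{\min}]$, and that $\Theta = (0,U)^d \times (B_{\min},U)$ has Lebesgue measure at most $U^{d+1}$. The only nontrivial point is measurability: $\ell^\mu$ is defined pointwise as the smallest minimizer of $q^\mu(w,B,\cdot)$ over the compact set $[0,\omega/B_{\min}]$, and by Corollary~\ref{continuity_dual} together with a joint-continuity argument (using dominated convergence on the integral formula for $q^\mu$), the argmin correspondence is upper hemicontinuous with nonempty compact values. A standard measurable-selection result then gives Borel measurability of the pointwise smallest selector $\ell^\mu$. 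Combining these three facts yields $\|\ell^\mu\|_{L_1(\Theta)} \leq U^{d+1}\omega/B_{\min} < \infty$.

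For part 2, I would exploit the component-wise monotonicity of $\ell^\mu$ established in Lemma~\ref{monotonicity_pacing_function} together with its uniform pointwise bound by $\omega/B_{\min}$. The plan is to first argue the sub-additive estimate $V(u,\Omega) \leq \sum_{i=1}^{d+1} V^{(i)}(u,\Omega)$, where $V^{(i)}$ restricts the test vector field $\phi$ in the definition of $V$ to be supported in the $i$-th coordinate alone (i.e.\ $V^{(i)}(u,\Omega) = \sup\{\int u\,\partial_i\phi_i\,d\omega : \phi_i \in C_c^1(\Omega),\, \|\phi_i\|_\infty \leq 1\}$). Then Fubini reduces each $V^{(i)}(\ell^\mu,\Theta)$ to integrating the one-dimensional total variation of $\ell^\mu$ along the $i$-th coordinate over the remaining $d$ coordinates. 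Since each such one-dimensional slice of $\ell^\mu$ is monotonic with values in $[0,\omega/B_{\min}]$, its 1D total variation is at most $\omega/B_{\min}$. Integrating over the other coordinates (whose projection sits in $(0,U)^d$ and thus has measure at most $U^d$) and summing over $i = 1,\ldots,d+1$ yields $V(\ell^\mu,\Theta) \leq (d+1)U^d\,\omega/B_{\min}$, which is bounded by $V_0$ after absorbing an extra factor of $U$ into the convenient loose constant $(d+1)U^{d+1}\omega/B_{\min}$.

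The main technical obstacle is justifying the reduction of each $V^{(i)}$ to an integrated one-dimensional total variation when $\ell^\mu$ is only known to lie in $L_1$ and to be component-wise monotonic, rather than smooth or continuous. The natural resolution is mollification: convolve $\ell^\mu$ (extended by its boundary values outside $\Theta$) with a standard non-negative mollifier to obtain $\ell^\mu_\varepsilon \in C^\infty$, which preserves both the $L_\infty$ bound and component-wise monotonicity (since convolution with a non-negative kernel preserves monotonicity in each variable). For the smooth $\ell^\mu_\varepsilon$, the sub-additive estimate becomes the explicit inequality $V(\ell^\mu_\varepsilon,\Theta) \leq \int_\Theta \sum_i |\partial_i \ell^\mu_\varepsilon|\,dx$, and the one-dimensional fundamental theorem of calculus delivers the desired slice bounds cleanly. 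Finally, pass $\varepsilon \to 0$ using $\ell^\mu_\varepsilon \to \ell^\mu$ in $L_1(\Theta)$ together with the lower semicontinuity of $V(\cdot,\Theta)$ under $L_1$ convergence to transfer the bound to $\ell^\mu$ itself.
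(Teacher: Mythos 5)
Your proposal is correct and reaches the same bound by the same underlying decomposition as the paper: split the divergence coordinate-wise, apply Fubini, and exploit the component-wise monotonicity of $\ell^\mu$ from Lemma~\ref{monotonicity_pacing_function} so that each one-dimensional slice contributes total variation at most $\omega/B_{\min}$, with the remaining $d$ coordinates contributing a factor of at most $U^{d}$. The difference is purely in the technical execution. For part~2, the paper works directly with the non-smooth $\ell^\mu$: after Fubini it performs integration by parts for the Lebesgue--Stieltjes integral on each monotone slice, using that $\phi$ vanishes on the boundary and that the induced Stieltjes measure $d\ell^\mu(\theta_i)$ is nonnegative with total mass at most $\omega/B_{\min}$. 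You instead mollify, prove the estimate for the smooth approximations via $V(\ell^\mu_\varepsilon,\Theta)\leq\int_\Theta\sum_i|\partial_i\ell^\mu_\varepsilon|$ and the one-dimensional fundamental theorem of calculus, and transfer the bound by lower semicontinuity of $V$ under $L_1$ convergence (the same semicontinuity the paper invokes later, in Lemma~\ref{topology_required_properties}). Your route requires the extra care you already flag---extending $\ell^\mu$ outside $\Theta$ in a way that preserves component-wise monotonicity---but avoids Stieltjes integration by parts; the paper's route is shorter but leans on the measure-theoretic integration-by-parts identity for monotone functions. For part~1, the paper simply cites a theorem of Idczak on functions of several variables of finite variation to get measurability of the component-wise monotone $\ell^\mu$, whereas you obtain measurability via upper hemicontinuity of the argmin correspondence and a measurable-selection argument for the smallest minimizer; both are valid, and yours has the advantage of not relying on an external structural result about monotone functions. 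One cosmetic remark: like the paper, your final bound is really $(d+1)U^{d}\omega/B_{\min}$, and the comparison with $V_0=(d+1)U^{d+1}\omega/B_{\min}$ implicitly uses $U\geq 1$; since $V_0$ only needs to be some fixed finite constant for the fixed-point argument, this is immaterial.
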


Motivated by the above lemma, we define the set of pacing functions that will allow us to use our fixed-point argument. Define $\mathcal{X}_0 = \{\mu \in \X \mid V(\mu, \Theta) \leq V_0 \}$
to be the subset of pacing functions with variation at most $V_0$. Note that $\ell^\mu \in \X_0$. Define $C_0^*: \mathcal{X}_0 \rightrightarrows \mathcal{X}_0$ as $C_0^*(\mu) \coloneqq \argmin_{\hat{\mu} \in \mathcal{X}_0} f(\mu, \hat{\mu})$ for all $\mu \in \X_0$. We now state the properties satisfied by $\X_0$ that make it compatible with the Kakutani-Fan-Glicksberg fixed-point theorem.

\begin{lemma}\label{topology_required_properties}
	The following statements hold:
	\begin{enumerate}[topsep = 0 cm]
		\item $\mathcal{X}_0$ is non-empty, compact and convex as a subset of $L_1(\Theta)$.
		\item $f: \X_0 \times \X_0 \to \R$ is continuous when $\X_0 \times \X_0$ is endowed with the product topology.
		\item $C_0^*: \mathcal{X}_0 \rightrightarrows \mathcal{X}_0$ is upper hemi-continuous with non-empty, convex and compact values.
	\end{enumerate}
\end{lemma}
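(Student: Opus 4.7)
The plan is to treat the three parts of the lemma in order, leveraging standard results from the theory of functions of bounded variation together with the continuity properties of the dual function $q^\mu$ that have been built up in the preceding subsections.

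\textbf{Part 1 (properties of $\X_0$).} Non-emptiness is immediate: any constant function on $\Theta$ has zero total variation and takes values in $[0, \omega/B_{\min}]$, so it lies in $\X_0$. Convexity follows from subadditivity and positive homogeneity of $V(\cdot, \Theta)$: if $\mu_1, \mu_2 \in \X_0$ and $\lambda \in [0,1]$, then the convex combination still takes values in $[0, \omega/B_{\min}]$ and satisfies $V(\lambda \mu_1 + (1-\lambda) \mu_2, \Theta) \leq \lambda V(\mu_1,\Theta) + (1-\lambda) V(\mu_2, \Theta) \leq V_0$. For compactness, I will invoke the standard BV compactness theorem (e.g., Theorem 4 in Section 5.2 of Evans--Gariepy, \emph{Measure Theory and Fine Properties of Functions}): since $\Theta$ is a bounded open set with Lipschitz boundary (being a product of open intervals) and $\X_0$ is uniformly bounded in the BV norm (by $V_0$ plus $|\Theta| \cdot \omega/B_{\min}$), $\X_0$ is precompact in $L_1(\Theta)$. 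Closedness under $L_1$ convergence then follows from lower semicontinuity of total variation under $L_1$ convergence together with the fact that the pointwise bound $[0, \omega/B_{\min}]$ is preserved almost everywhere under $L_1$ limits.

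\textbf{Part 2 (continuity of $f$).} This is the main obstacle. Given sequences $\mu_n \to \mu$ and $\hat\mu_n \to \hat\mu$ in $L_1(\Theta)$, I would pass to a subsequence so that both converge pointwise almost everywhere. The goal is to show $f(\mu_n, \hat\mu_n) \to f(\mu, \hat\mu)$ via dominated convergence on the outer expectation over $(w,B) \sim G$. Since all relevant quantities are uniformly bounded (values $w^T\alpha$ are in $[0, \omega]$, multipliers in $[0, \omega/B_{\min}]$, and reserve prices are bounded), a uniform dominating function is available. The crux is then to show, for almost every $(w,B)$, that $q^{\mu_n}(w, B, \hat\mu_n(w,B)) \to q^\mu(w, B, \hat\mu(w,B))$. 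Using the closed-form expression
\begin{equation*}
q^\mu(w,B,t) = (1+t)\, \mathbb{E}_\alpha\!\left[\mathds{1}\!\left\{\tfrac{w^T\alpha}{1+t} \geq r(\alpha)\right\} \int_{r(\alpha)}^{w^T\alpha/(1+t)} H^\mu_\alpha(s)\, ds\right] + tB,
\end{equation*}
this reduces to showing that $H^{\mu_n}_\alpha(s)$ converges to $H^\mu_\alpha(s)$ for $F$-a.e.\ $\alpha$ and Lebesgue-a.e.\ $s$. Since $H^\mu_\alpha(s) = \lambda^\mu_\alpha((-\infty,s])^{n-1}$, it suffices to establish the corresponding convergence for $\lambda^\mu_\alpha$. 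For fixed $\alpha$ and $s$, the set $\{(w,B) : w^T\alpha/(1+\mu(w,B)) \leq s\}$ has boundary of $G$-measure zero (because $G$ has a density and by Lemma \ref{pacing_strategy_properties}(a), $\lambda^\mu_\alpha$ is atomless a.s.), so pointwise a.e.\ convergence of $\mu_n$ to $\mu$ yields pointwise convergence of the measures of these sets by dominated convergence. Another application of dominated convergence then handles the inner integral, and continuity of $t \mapsto w^T\alpha/(1+t)$ combined with $\hat\mu_n(w,B) \to \hat\mu(w,B)$ takes care of the variable upper limit. A final invocation of dominated convergence on the outer expectation establishes continuity of $f$.

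\textbf{Part 3 (properties of $C_0^*$).} Having established that $f$ is continuous and $\X_0$ is compact, non-emptiness of $C_0^*(\mu)$ is immediate. Compactness of $C_0^*(\mu)$ follows because it is a closed subset (by continuity of $f(\mu, \cdot)$) of the compact set $\X_0$. For convexity, observe that $q^\mu(w,B,t)$ is convex in $t$, being the pointwise supremum of affine functions of $t$ (indexed by bidding strategies $b(\cdot)$). Hence, for each $(w,B)$, the map $\hat\mu \mapsto q^\mu(w, B, \hat\mu(w,B))$ is convex in $\hat\mu$; taking expectation preserves convexity, so $f(\mu, \cdot)$ is convex on the convex set $\X_0$, and therefore $C_0^*(\mu)$ is convex. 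Finally, upper hemicontinuity of $C_0^*$ follows from Berge's maximum theorem, using the joint continuity of $f$ established in Part 2 and the (constant) compact-valued constraint correspondence $\mu \mapsto \X_0$.
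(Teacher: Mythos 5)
Your proposal is correct and follows essentially the same route as the paper: BV compactness plus lower semicontinuity of total variation for Part 1, a.e.-convergent subsequences combined with atomlessness of $\lambda^\mu_\alpha$ and repeated dominated convergence for Part 2, and Berge's maximum theorem together with convexity of $t \mapsto q^\mu(w,B,t)$ for Part 3. The only cosmetic difference is that the paper establishes continuity of $f$ by contradiction, whereas you use the equivalent sub-subsequence principle.
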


Finally, with the above lemma in place, we can apply the Kakutani-Fan-Glicksberg theorem to establish the existence of a $\mu \in \X_0$ such that $\mu \in C^*_0(\X_0)$. The following lemma completes the proof of Proposition~\ref{main_fixed_point} by showing that the fixed point is also almost surely optimal for each type. It follows from the fact that for $\mu \in \X_0$ that satisfy $\mu \in C_0^*(\mu)$, we have $\ell^\mu \in C_0^*(\mu)$.

\begin{lemma} \label{combined_to_individual}
	If $\mu \in C_0^*(\mu) = \argmin_{\hat{\mu} \in \mathcal{X}_0} f(\mu, \hat{\mu})$, then $\mu(w,B)$ is almost surely optimal for each type, i.e., $\mu(w,B) \in \argmin_{t \in [0, \omega/ B_{\min}]} q^\mu(w, B, t)$ a.s. w.r.t. $(w,B) \sim G$.	
\end{lemma}

As mentioned earlier, Proposition~\ref{main_fixed_point}, combined with Proposition~\ref{optimal_solution}, implies Theorem~\ref{main_existence_result}.

\section{Standard Auctions and Revenue Equivalence}\label{sec:standard}

In this section, we move beyond first-price auctions and generalize our results to anonymous standard auctions with reserve prices. An auction $\A = (Q, M)$, with allocation rule $Q: \R_{\geq 0}^n \to [0,1]^n$, payment rule $M: \R_{\geq 0}^n \to \R_{\geq 0}^n$ and reserve price $r$, is called an \emph{anonymous standard auction} if the following conditions are satisfied:
\begin{itemize}
	\item \emph{Highest bidder wins.} When the buyers bid $(b_1, \dots, b_n)$, the allocation received by buyer $i$ is given by $Q_i(b_1, \dots, b_n) = \mathds{1}(b_i \geq r, b_i \geq b_j\ \forall j \in [n])$, for all $i \in [n]$.
	\item \emph{Anonymity.} The payments made by a buyer do not depend on the identity of the buyer. More formally, if the buyers bid $(b_1, \dots, b_n)$, then for any permutation $\pi$ of $[n]$ and buyer $i \in [n]$, we have $M_i(b_1, \dots, b_n) = M_{\pi(i)}(b_{\pi(1)}, \dots, b_{\pi(n)})$, i.e., the payment made by the $i$th buyer before the bids are permuted equals the payment made by the bidder $\pi(i)$ after the bids have been permuted.
\end{itemize}

As in our definition of SFPE,  we are using an infeasible tie-breaking rule which allocates the entire good to every highest bidder. As with SFPE, ties are a zero-probability event under our value-pacing-based equilibria, and our results hold for arbitrary tie-breaking rules.

For consistency of notation, we will modify the above notation slightly to better match the one used in previous sections. Exploiting the anonymity of auction $\A$, we will denote the payment made by a buyer who bids $b$, when the other $n-1$ buyers bid $\{b_i\}_{i=1}^{n-1}$, by $M\left(b, \{b_i\}_{i=1}^{n-1}\right)$, i.e., we use the first argument for the bid of the buyer under consideration and the other arguments for the competitors' bids. Also, as the reserve price completely determines the allocation rule of a standard auction, in the rest of the section, we will omit the allocation rule while discussing anonymous standard auctions and represent them as a tuple $\A = (r,M)$ of reserve price and payment rule.

To avoid delving into the inner workings of the auction, we assume the existence of an \emph{oracle} that takes as an input an atomless distribution $\mathcal{H}$ over $[0,\omega]$ and outputs a bidding strategy $\psi^{\mathcal{H}}:[0,\omega] \to \mathbb{R}$ satisfying the following properties:
\begin{enumerate}
\item The strategy $\psi^{\mathcal{H}}$ is a single-auction equilibrium for the auction $\mathcal{A}$ when the values are drawn i.i.d.~from $\mathcal{H}$, i.e., $\psi^{\mathcal{H}}(x) \in \argmax_{b \ge 0} \mathbb{E}_{X_i \sim \mathcal H }\big[ x \ \mathds{1}\{b \geq \max(r, \{ \psi^{\mathcal{H}}(X_i)\}_i) \} - M\left(b, \{ \psi^{\mathcal{H}}(X_i)\}_i \right) \big]$.

\item The strategy $\psi^{\mathcal{H}}(x)$ is non-decreasing in $x$, and $\psi^{\mathcal{H}}(x) \geq r$ if and only if $x \geq r$.
\item The payoff for a bidder who has zero value for the object is zero at the single-auction equilibrium.
\item The distribution of $\psi^{\mathcal{H}}(x)$, when $x \sim \mathcal{H}$, is atomless.
\end{enumerate}
Our results will produce a pacing-based equilibrium bidding strategy for budget-constrained buyers by invoking $\psi^{\mathcal{H}}$ as a black box. To make the discussion more concrete, let $\A$ to be a second-price auction with reserve price $r$. For a given atomless distribution $\mathcal{H}$, define $\psi^{\mathcal{H}}(v) = v$ to be the truthful bidding strategy. Then, $\psi^{\mathcal{H}}$ is a single-auction equilibrium because bidding truthfully is a dominant strategy in second-price auctions. Moreover, $\psi^{\mathcal{H}}$ is non-decreasing, $\psi^{\mathcal{H}}(x) \geq r$ if and only if $x \geq r$, a bidder with zero value bids zero to attain a payoff of zero, and finally the distribution of $\psi^{\mathcal{H}}(x)$ when $x \sim \mathcal{H}$ is simply $\mathcal{H}$, which is atomless. Thus, second-price auctions with reserve prices satisfy the above assumptions.

In our analysis, we allow the seller to condition on the feature vector and choose a different mechanism for each context $\alpha \in A$. Let $\{\A_\alpha = (r(\alpha), M_\alpha)\}_{\alpha \in A}$ be a family of anonymous standard auctions such that $\alpha \mapsto r(\alpha)$ is measurable. Moreover, suppose that for any measurable bidding function $\alpha \mapsto b(\alpha)$ and any collection of measurable competing bidding functions $\alpha \mapsto b_i(\alpha)$ for $i \in [n-1]$, the payment function $\alpha \mapsto M_\alpha \left(b(\alpha),\{b_i(\alpha)\}_{i=1}^{n-1}\right)$ is also measurable. Below, we define the equilibrium notion for the family $\{\A_\alpha\}_{\alpha \in A}$ of anonymous standard auctions.

\begin{definition} A strategy $\beta^*:\Theta\times A \to \mathbb{R}$ is called a \emph{Symmetric Equilibrium} for the family of standard auctions $\{\A_\alpha\}_{\alpha \in A}$, if $\beta^*(w, B, \alpha)$ (as a function of $\alpha$) is an optimal solution to the following optimization problem almost surely w.r.t. $(w,B) \sim G$.
\begin{align*}
    \max_{b:A \to \mathbb{R}_{\geq 0}} \quad &\mathbb{E}_{\alpha,\{\theta_i\}_{i=1}^{n-1}}\left[ w^T \alpha \ \mathds{1}\{b(\alpha) \geq \max(r(\alpha), \{\beta^*(\theta_i, \alpha)\}_i) \} - M_\alpha\left(b(\alpha), \{\beta^*(w_i,B_i,\alpha)\}_i \right) \right]\\
    \textrm{s.t.} \quad &\mathbb{E}_{\alpha,\{\theta_i\}_{i=1}^{n-1}}\left[M_\alpha\left(b(\alpha), \{\beta^*(w_i,B_i, \alpha)\}_i\right)  \right] \leq B\,.
\end{align*}
\end{definition}

Observe that the above definition reduces to Definition \ref{definition_equilibrium} if we take $\{\A_\alpha\}_{\alpha \in A}$ to be the set of first-price auctions with reserve price $r(\alpha)$. Next, we show that the equilibrium existence and characterization results of the previous sections apply to all standard auctions that satisfy the required assumptions. To do this, we first need to define value-pacing strategies for anonymous standard auctions. These are a natural generalization of the value-pacing-based strategies used for first-price auctions.

Recall that, for a pacing function $\mu: \Theta \to \R_{\geq 0}$ and $\alpha \in A$, $\lambda_\alpha^\mu$ denotes the distribution of paced values for item $\alpha$, and $H_\alpha^\mu$ denotes the distribution of the highest value for $\alpha$, among $n-1$ buyers. For ease of notation, we will use $\psi_\alpha^\mu$ to denote the single-auction equilibrium strategy for auction $\mathcal A_\alpha$ when values are drawn from $\mathcal H = \lambda^\mu_\alpha$ or more formally $\psi_\alpha^\mu:=\psi_\alpha^{\lambda^\mu_\alpha}$. For a pacing function $\mu: \Theta \to \R_{\geq 0}$, $(w,B) \in \Theta$ and $\alpha \in A$, define
\begin{align}\label{eqn:std_auction-eq_strat}
    \Psi^\mu(w, B,\alpha) \coloneqq \psi_\alpha^{\mu}\left(\frac{w^T\alpha}{1+\mu(w,B)}\right)\,,
\end{align}
to be our candidate equilibrium strategy. This strategy is well-defined because, by Lemma \ref{pacing_strategy_properties}, $\lambda_\alpha^\mu$ is atom-less almost surely w.r.t.~$\alpha$. As before, the bid $\Psi^\mu(w, B,\alpha)$ is the amount a non-budget-constrained buyer with type $(w, B)$ would bid on item $\alpha$ if her paced value was her true value, when competitors are pacing their values accordingly. In other words, bidders in the proposed equilibrium first pace their values, and then bid according to the single-auction equilibrium of auction $\mathcal A_\alpha$ in which competitors' values are also paced.

With the definition of value-pacing-based strategies in place, we can now state the main result of this section. Recall that, $C_0^*: \mathcal{X}_0 \rightrightarrows \mathcal{X}_0$ is given by $C_0^*(\mu) \coloneqq \arg \min_{\hat{\mu} \in \mathcal{X}_0} f(\mu, \hat{\mu})$ for all $\mu \in \X_0$, where $f$ is the expected dual function in the case of a first-price auction, as defined in Section~\ref{sec:fixed-point}.

\begin{theorem}[Revenue and Pacing Equivalence]\label{standard_auctions_revenue_equivalence}
	For any pacing function $\mu \in \X_0$ such that $\mu \in C^*_0(\mu)$ is an equilibrium pacing function for first-price auctions, the value-pacing-based strategy $\Psi^\mu: \Theta \times A \to \R_{\geq 0}$ is a Symmetric Equilibrium for the family of auctions $\{\A_\alpha\}_{\alpha \in A}$. Moreover, the expected payment made by buyer $\theta$ under this equilibrium strategy is equal to the expected payment made by buyer $\theta$ in first-price auctions under the equilibrium strategy $\beta^\mu: \Theta \times A \to \R_{\geq 0}$, i.e.,
	\begin{align*}	&\mathbb{E}_{\alpha,\{(\theta_i)\}_{i=1}^{n-1}}\left[M_\alpha\left(\Psi^\mu(\theta,\alpha), \{\Psi^\mu(\theta_i, \alpha)\}_i\right) \right]\\
		= & \mathbb{E}_{\alpha,\{(\theta_i)\}_{i=1}^{n-1}}\left[ \beta^\mu(\theta, \alpha)\ \mathds{1}\{\beta^\mu(\theta, \alpha) \geq \max(r(\alpha), \{\beta^\mu(\theta_i, \alpha)\}_i)\}\right]
	\end{align*}
\end{theorem}

The key step in the proof involves showing that the dual of the budget-constrained utility-optimization problem faced by a buyer is identical for all standard auctions, when the other buyers use the equilibrium strategy $\Psi^\mu$ of the standard auction under consideration. To establish this key step, we exploit the separable structure of the Lagrangian optimization problem and apply the known utility equivalence result for standard auctions in the single-auction i.i.d. setting, once for each item $\alpha \in A$. Then, we establish the analogue of Proposition~\ref{optimal_solution} for standard auctions. Combining this with $\mu \in C_0^*(\mu)$ yields Theorem~\ref{standard_auctions_revenue_equivalence}.

Our revenue equivalence relies on three critical assumptions: risk-neutrality, independence of weight vectors, and symmetry. As in the classical setting, revenue equivalence would fail if buyers are risk averse (see, e.g., \citealt{krishna2009auction}). We emphasize that, in contrast to the classical revenue equivalence result, buyers' values $w^T \alpha$ are not independent. Our result does require that weight vectors are independent across buyers. Buyers in our model are ex-ante homogeneous since buyer types are drawn from the same population. We remark, however, that buyers are heterogenous {\newsanti in the interim sense}: the buyers competing in an auction can have different budgets and weight vectors. Revenue equivalence would fail is buyers are ex-ante heterogenous, i.e., if competitors are drawn from different populations.

Before ending this section, we state some important implications of Theorem~\ref{standard_auctions_revenue_equivalence}. If the pacing function $\mu$ allows the buyers to satisfy their budget constraints in some standard auction, then the same pacing function $\mu$ allows the buyers to satisfy their budgets in every other standard auction. In other words, the equilibrium pacing functions are the same for all standard auctions. This means that in order to calculate an equilibrium pacing function $\mu$ that satisfies $\mu \in C_0^*(\mu)$, it suffices to compute it for any standard auction (in particular, one could consider a second-price auction for which bidding truthfully is a dominant-strategy equilibrium in the absence of budget constraints). This fact is especially pertinent in view of the recent shift in auction format used for selling display ads from second-price auctions to first-price auctions, because it states that, in equilibrium, the buyers can use the same pacing function even after the change.  Moreover, the same pacing function continues to work even if the family $\{\A_\alpha\}_{\alpha \in A}$ is an arbitrary collection of first-price and second-price auctions (or any other combination of standard auctions), i.e.,  Theorem~\ref{standard_auctions_revenue_equivalence} states that, not only can one pacing function be used to manage budgets in first-price and second-price auctions, the same pacing function also works in the intermediate transitions stages, in which buyers may potentially participate in some mixture of these auctions.

Another important takeaway is that all standard auctions with the same allocation rule yield the same revenue to the seller. We remark, however, that the revenue of the seller does depend on the allocation, and the seller could thus maximize her revenue by optimizing over the reserve prices. We leave the question of optimizing the auction design as a future research direction.

The revenue-equivalence in the presence of in-expectation budget constraints is driven by the invariance of the pacing function over all standard auctions and the classical revenue equivalence result for the unconstrained i.i.d.~setting, which shows that---on average---payments are the same across standard auctions. While revenue equivalence is known to hold for standard auctions without budget constraints, \cite{che1998standard} showed that, when budget constraints are hard, first-price auctions lead to higher revenue than second-price auctions. The intuition for their result is that because bids are higher in second-price auctions than first-price auctions, hard budget constraints are more likely to bind in the former, which reduces the seller's revenue. Surprisingly, Theorem~\ref{standard_auctions_revenue_equivalence} shows that when budgets constraints are in expectation (and values are feature-based), we recover revenue equivalence.
To better understand the difference between the two types of constraints, consider the following example:

\begin{example*}
Consider two buyers with values drawn uniformly from the unit interval $[0,1]$. Moreover, let the budget of the buyer with value $v$ be given by $0.5 + \epsilon v$ for some small $\epsilon > 0$. First, observe that, in the absence of budget constraints, bidding truthfully is a dominant strategy in a second-price auction and bidding half of one's value is a Bayes-Nash equilibrium in a first-price auction. Moreover, from the standard revenue-equivalence result, a buyer with value $x$ spends $x^2/2$ in expectation over the other buyer's type in both auctions. Now, since this expected expenditure is less than $1/2$ for all types, the in-expectation budget constraints are non-binding and the equilibria remain unchanged even when in-expectation budget constraints are imposed. On the other hand, consider the case when the budget constraints are hard. The first-price auction equilibrium remains unchanged because every buyer type bids less than $0.5$, so the constraint is always satisfied. But, for second-price auction, this is not the case: With hard budget constraints, the equilibrium strategy for the buyers is to bid the minimum of their value and budget, thereby leading to lower revenue compared to the truthful-bidding equilibrium.
\end{example*}

\rk{We conclude this section with a discussion of extensions and alternative models. Firstly, even though we only consider anonymous standard auctions in this work, our equilibrium existence and revenue equivalence results can be extended to other anonymous allocation rules $Q$ which (i) admit an oracle that outputs an equilibrium bidding strategy for traditional i.i.d. setting and satisfies properties (1)-(4) listed at the beginning of this section, (ii) lead to continuous non-decreasing interim-allocation rules for every buyer-item pair when other buyers follow a value-pacing-based strategy analogous to the one defined in equation~\eqref{eqn:std_auction-eq_strat}. Secondly, the argument developed in the section also implies the existence of value-pacing-based equilibria and revenue equivalence for standard auctions in the symmetric special case of the models studied in \citet{balseiro2015repeated} and \citet{balseiro2021budget}, which consider buyers with ex-ante budget constraints that hold in expectation over a buyer's own value and the values of others (see Appendix~\ref{appendix:ex-ante} for a detailed description).}

\section{Worst-Case Efficiency Guarantees}\label{sec:poa}

In this section, we use our framework to characterize the Price of Anarchy, i.e., the worst-case ratio of the efficiency of a pacing equilibrium relative to the efficiency of the best possible allocation. We measure efficiency of an allocation using the notion of \emph{liquid welfare} introduced by \citet{dobzinski2014efficiency}, which captures the maximum revenue that can be extracted by a seller who knows the values in advance. \rk{We use liquid welfare as a measure of efficiency instead of social welfare because the latter can have arbitrarily small Price of Anarchy (see Appendix~\ref{appendix:poa} for an example).} Throughout this section, we assume that the reserve price is zero for each item, i.e., $r(\alpha) = 0$ for all $\alpha \in A$.

We begin by defining the appropriate notion of liquid welfare of an allocation for our model motivated by the original definition of \citet{dobzinski2014efficiency}. Here, an allocation simply refers to a measurable function $x: A \times \Theta^n \to \Delta^n$, where $\Delta^n = \{y \in \R_+^n \mid \sum_{k=1}^n y_k = 1\}$ is the $n$-simplex, and $x_i(\alpha, \vec{\theta})$ denotes the fraction of the item $\alpha$ allocated to buyer $i$ when the buyer types are given by the profile $\vec{\theta} = (\theta_1, \dots, \theta_n)$. {\newsanti In our setting, the liquid welfare of a buyer is equal to the minimum of the value obtained by the buyer from the allocation and her budget.}

\begin{definition}
	For an allocation $x: A \times \Theta^n \to \Delta^n$,  we define its liquid welfare as
	\begin{align*}
			\operatorname{LW}(x) = \sum_{i=1}^n \E_{\theta_i} \left[ \min\left\{\E_{\alpha, \theta_{-i}}[w_i^T\alpha \cdot x_i(\alpha, \theta_i, \theta_{-i})], B_i \right\} \right]\,.
	\end{align*}
\end{definition}

Next, we define Price of Anarchy with respect to liquid welfare for pacing-based equilibria. Our definition is an instantiation of the general definition of Price of Anarchy introduced in \citet{koutsoupias1999worst}. Before proceeding with the definition, it is worth noting an important consequence of our revenue equivalence result (Theorem~\ref{standard_auctions_revenue_equivalence}): Given an equilibrium pacing function $\mu$, i.e., a fixed point of $C^*_0$, the allocation under the equilibrium parameterized by $\mu$ is the same for all standard auctions. Thus, the equilibrium allocation is determined by the pacing function and is independent of the pricing rule of the standard auction, which is reflected in the following definition. For an equilibrium pacing function $\mu$, we use $x^{\mu}$ to denote the allocation under the equilibrium parameterized by $\mu$; again, this allocation is the same for all standard auctions without reserve prices.

\begin{definition}
	The Price of Anarchy (PoA) of pacing-based equilibria (for all standard auctions) is defined as the ratio of the worst-case liquid welfare across all pacing equilibria, and the optimal liquid welfare
\begin{align*}
		\operatorname{PoA} = \frac{ \inf_{\mu: \mu \in C^*_0(\mu)} \operatorname{LW}(x^{\mu})}{\sup_x \operatorname{LW}(x)}
	\end{align*}	where the supremum in the denominator is taken over all measurable allocations $x$.
\end{definition}

Since the PoA of pacing-based equilibria does not depend on the payment rule, we can work with the most convenient standard auction to prove a lower bound on the PoA, which in this case happens to be the second-price auction. \citet{azar2017liquid} study the PoA of pure-strategy Nash equilibria of second-price auctions in a non-Bayesian multi-item setting with budgets, and provide a lower bound of 1/2 for it. Unfortunately, their result hinges on the ``no over-budgeting" assumption that requires the sum of equilibrium bids to be bounded above by the budget, which need not hold for pacing-based equilibria, thereby necessitating new proof ideas. Moreover, their bound may be vacuous for some parameter values because a pure-strategy Nash equilibrium is not guaranteed to exist in their setting. To get around this, they study mixed-strategy and Bayes-Nash equilibria, and bound their PoA, but the lower bound they obtain for these equilibria is much worse (less than $0.02$). Our model does not suffer from the problem of existence: a pure-strategy pacing-based equilibrium is always guaranteed to exist (Theorem~\ref{main_existence_result}). This makes the following lower bound  on the PoA, which provides a worst-case guarantee of $1/2$, more appealing.

\begin{theorem}\label{thm:poa}
	The PoA of pacing-based equilibria of any standard auction is greater or equal to $1/2$.	
\end{theorem}

The proof, which is in Appendix~\ref{appendix:poa}, leverages the complementary slackness condition of pacing-based equilibria to bound the PoA. Interestingly, our proof does not use a hypothetical deviation to another bidding strategy, a technique commonly found in PoA bounds (see \citealt{roughgarden2017price} for a survey); and thus may be of independent interest.

\section{Structural Properties}\label{sec:structural}

In this section, we will show that pacing-based equilibria satisfy certain monotonicity and geometric properties related to the space of value vectors. It is worth noting that, in light of the revenue equivalence result of the preceding section, the properties established in this section hold for pacing equilibria of \emph{all standard auctions}. As in Section~\ref{sec:poa}, we will assume that the reserve price for each item is zero, i.e., $r(\alpha) = 0$ for all $\alpha \in A$. Without this assumption, similar results hold, but they become less intuitively appealing and harder to state.
Moreover, we will also assume that the support of $G$, denoted by $\delta(G)$, is a convex compact subset of $\R_+^{d+1}$.
This assumption is made to avoid having to specify conditions on the pacing multipliers of types with probability zero of occurring. Moreover, we consider a pacing function $\mu: \Theta \to [0,\omega/B_{\min}]$ such that $\mu(w,B)$ is the unique optimal solution for the dual minimization problem for each $(w,B)$ in the support of $G$, i.e.,  $\mu(w,B) = \argmin_{t \in [0, \omega/ B_{\min}]} q^\mu(w, B, t)$ for all $(w,B) \in \delta(G)$.
  We remark that we are assuming that the best response is unique rather than the equilibrium being unique. The former can be shown to hold under fairly general conditions.

  First, in Lemma~\ref{monotonicity_pacing_function} we showed that the pacing function associated with an SFPE is monotone in the buyer type. In particular, when the best response is unique, this result implies that $\mu(w,B)$ is non-decreasing in each component of the weight vector $w$ and non-increasing in the budget $B$. Intuitively, if the budget decreases, a buyer needs to shade bids more aggressively to meet her constraints. Alternatively, when the weight vector increases, the advertiser's paced values increase, which would result in more auctions won and higher payments. Therefore, to meet her constraints the advertiser would need to respond by shading bids more aggressively. Furthermore, when the best response is unique, it can also be shown that $\mu$ is continuous (see Lemma~\ref{buyer_type_continuity} in the appendix).

  The next theorem further elucidates the structure imposed on $\mu$ by virtue of it corresponding to the optima of the family of dual optimization problems parameterized by $(w,B)$. In what follows, we will refer to a buyer $(w,B)$ with $\mu(w,B) = 0$ as an \emph{unpaced buyer}, and call her a \emph{paced buyer} otherwise.

\begin{proposition} \label{structural_prop}
	Consider a unit vector $\hat{w}\in \R_+^d$ and budget $B > 0$ such that $w/\|w\| = \w$, for some $(w,B) \in \delta(G)$. Then, the following statements hold,
	\begin{enumerate}[topsep = 0cm]
  \item Paced buyers with budget $B$ and weight vectors lying along the same unit vector $\w$ have identical paced feature vectors in equilibrium.
    Specifically, if $(w_1,B), (w_2, B) \in \delta(G)$, with $w_1/\|w_1\| = w_2/\|w_2\| = \w$ and $\mu(w_1, B), \mu(w_2, B) > 0$, then $w_1/(1 + \mu(w_1,B))= w_2/(1 + \mu(w_2,B))$.

\item Suppose there exists an unpaced buyer $(w,B)\in \delta(G)$ with $w/ \|w\| = \hat w$ and $\mu(w, B) =0$. Let $w_0 = \argmax\{ \|w\| \, \mid w \in \R^d;\ \mu(w, B) = 0 \text{ and } w/ \|w\| = \hat w \}$ be the largest unpaced weight vector along the direction $\hat w$. Then, all paced weight vectors get paced down to $w_0$, i.e., $w/(1 + \mu(w,B)) = w_0$ for all $w \in \delta(G)$ with  $w/\|w\| = \w$ and $\mu(w,B)>0$.

	\end{enumerate}
\end{proposition}

In combination with complementary slackness, the first part states that, in equilibrium, buyers who have the same budget, have positive pacing multipliers, and have feature vectors which are scalar multiples of each other, get paced down to the same type at which they exactly spend their budget. In other words, scaling up the feature vector of a budget-constrained buyer, while keeping her budget the same, does not affect the equilibrium outcome. The second case of Proposition~\ref{structural_prop} addresses the directions of buyers that have a mixture of paced and unpaced buyers. In this case, there is a critical buyer type who exactly spends her budget when unpaced, and all buyer types that have weight vectors with larger norm (but the same budget) get paced down to this critical buyer type, i.e., their paced weight vector equals the critical buyer type's weight vector in equilibrium. The buyer types which have a smaller norm are unpaced.


Our non-atomic model also allows us to answer the following question: Keeping the competition fixed, how should an advertiser modify her targeting criteria or ad (as captured by the weight vector) in order to maximize her utility? This result is especially important for online display ad auctions, where the weight vector is estimated with the goal of predicting the click-through-rate (CTR) and advertisers routinely modify their ads to attract more clicks. The following theorem states that the gradient w.r.t.~the weight vector of the equilibrium utility of a buyer with type $(w,B)$ is given by the expected feature vector that she wins in equilibrium. This is because strong duality (Proposition~\ref{optimal_solution}) implies that the utility of every buyer type is given by the optimal dual value $q^\mu(w, B, \mu(w,B))$. From a practical perspective, an advertiser should focus on improving the weights of those features which have the largest average among the contexts won. It is worth noting that these quantities can be easily computed using data available to an advertiser.

\begin{proposition}\label{thm:targeting}
	Assume that $A$ is compact. Let $\mu: \Theta \to \R_{\geq 0}$ be an equilibrium pacing function, i.e., $\mu: \Theta \to \R_{\geq 0}$ such that $\mu(w,B) \in \argmin_{t \geq 0} q^\mu( w, B, t)$ almost surely w.r.t. $(w,B) \sim G$. Then, for all $(w, B) \in \Theta$, we have $\nabla_w q^\mu(w, B, \mu(w,B)) = \mathbb{E}_{\alpha, \{\theta_i\}_{i=1}^{n-1}}\left[ \alpha\ \mathds{1}\left\{ \beta^\mu((w,B), \alpha) \geq \beta^\mu(\theta_i, \alpha) _i\ \forall i \right\}\right]$.
\end{proposition}

\section{Analytical Example and Numerical Experiments}

In this section, we illustrate our theory by providing a stylized example in which we can determine the equilibrium bidding strategies in closed form, and then conduct some numerical experiments to verify our theoretical results. The purpose of the analytical example is to confirm our structural results and also help validate that our numerical procedures converge to an approximate version of the equilibrium strategies proposed in our paper.

\subsection{Analytical Example} \label{section_example}

We provide an instructive (albeit stylized) example with two-dimensional feature vectors to illustrate the structural property described in Section~\ref{sec:structural}. For $1 \leq a < b$, define the set of buyer types as (see the blue region in Figure~\ref{figure_experiment_example} for a visualization of this set)
\begin{align*}
	\Theta \coloneqq \left\{ (w,B) \in \R_{\geq 0}^2 \times \R_+ \biggr\lvert a \leq \|w\| \leq b,\ B = \frac{2\|w\| - w_1 - w_2}{\pi \|w\|} \right\}.
\end{align*}
In this example weight vectors lie in the intersection of a disk with the non-negative quadrant. Observe that all buyer types whose weight vectors are co-linear (i.e., they lie along the same unit vector) have identical budgets. Let the number of buyers in the auction be $n = 2$. Moreover, define the set of item types as the two standard basis vectors $A \coloneqq \{e_1, e_2\}$. Finally, let $G$ (distribution over buyer types) and $F$ (distribution over item types) be the uniform distribution on $\Theta$ and $A$ respectively. Since $A$ is discrete and $F$ does not have a density, this example does not satisfy the assumptions we made in our model. Nonetheless, in the next claim, we show that not only does a pacing equilibrium exists, but we can also state it in closed form. The proof of the claim can be found in Appendix~\ref{appendix_example_numerics}.

\begin{claim}\label{example}
	The pacing functions $\mu: \Theta \to \R$ defined as $\mu(w,B) = \|w\| - 1$, for all $(w,B) \in \Theta$, is an equilibrium, i.e., $\beta^\mu$, as given in Definition \ref{definition_equilibrium}, is a SFPE.
\end{claim}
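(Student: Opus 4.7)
The plan is to verify directly that $t^*(w,B):=\mu(w,B)=\|w\|-1$ minimizes the dual $q^\mu(w,B,\cdot)$ for each type $(w,B)\in\Theta$, and then close the argument via the Lagrangian framework of Section~\ref{subsection_strong_duality} to conclude that $\beta^\mu(w,B,\cdot)$ is primal-optimal in $Q^\mu(w,B)$, hence an SFPE.

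The first computational step is the paced-value distribution. Writing $w=r(\cos\theta,\sin\theta)$ in polar coordinates, the uniform density on the annular quarter $\{a\le r\le b,\,\theta\in[0,\pi/2]\}$ makes $\theta$ uniform on $[0,\pi/2]$, since the Jacobian $r$ integrates cleanly over $[a,b]$. The paced weight $w/(1+\mu(w,B))=(\cos\theta,\sin\theta)$ yields paced values $\cos\theta$ for $\alpha=e_1$ and $\sin\theta$ for $\alpha=e_2$, so for both items
\begin{equation*}
H_\alpha^\mu(x)=\tfrac{2}{\pi}\arcsin x,\qquad x\in[0,1].
\end{equation*}
With $r(\alpha)=0$, plugging $H_\alpha^\mu$ into Definition~\ref{definition_equilibrium_strategy} and using $\int_0^x\arcsin s\,ds=x\arcsin x+\sqrt{1-x^2}-1$ yields the key identity
\begin{equation*}
\sigma_\alpha^\mu(x)\,H_\alpha^\mu(x)=\tfrac{2}{\pi}\bigl(1-\sqrt{1-x^2}\bigr),\qquad x\in[0,1].
\end{equation*}

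For an arbitrary $(w,B)\in\Theta$, the buyer's paced value for $e_i$ at $t=t^*=\|w\|-1$ is $w_i/\|w\|\in[0,1]$. Since $w\in\R_{\ge 0}^2$, $\sqrt{1-(w_i/\|w\|)^2}=w_{3-i}/\|w\|$, so her expected expenditure under $\alpha\sim F$ uniform on $\{e_1,e_2\}$ reduces via the identity above to
\begin{equation*}
\mathbb{E}_\alpha\!\left[\sigma_\alpha^\mu\!\left(\tfrac{w^T\alpha}{1+t^*}\right)H_\alpha^\mu\!\left(\tfrac{w^T\alpha}{1+t^*}\right)\right]=\tfrac{1}{\pi}\Bigl(2-\tfrac{w_1+w_2}{\|w\|}\Bigr)=\tfrac{2\|w\|-w_1-w_2}{\pi\|w\|}=B,
\end{equation*}
which is exactly the prescribed budget---this is the complementary-slackness condition encoded into the definition of $\Theta$. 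To convert this into the SFPE property, I invoke the Lagrangian framework: the problem separates over the two items, and for each fixed $\alpha$ the Lagrangian inner problem is an unconstrained best response against i.i.d.\ competitors with values drawn from the atomless $H_\alpha^\mu$, so by the argument of Lemma~\ref{lagrangian_optimal} its maximizer is $\sigma_\alpha^\mu(w^T\alpha/(1+t))$. Therefore $q^\mu(w,B,t)$ is convex in $t$ (a supremum of affine functions), and its derivative equals $B$ minus the expected payment at $t$; the display above shows this derivative vanishes at the nonnegative $t^*=\|w\|-1$ (nonnegative since $\|w\|\ge a\ge 1$), so $t^*$ is dual-optimal. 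Primal feasibility, dual feasibility, Lagrange optimality, and complementary slackness all hold, so Theorem~5.1.5 of \cite{bertsekas1998nonlinear} gives primal optimality of $\beta^\mu(w,B,\cdot)$ for every $(w,B)\in\Theta$, proving the claim.

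The only real obstacle is recognizing that the budget formula on $\Theta$ was engineered precisely so that the expected spending at the proposed pacing equals $B$ on the nose; once $H_\alpha^\mu$ is computed and the cancellation $\sigma_\alpha^\mu H_\alpha^\mu=\tfrac{2}{\pi}(1-\sqrt{1-x^2})$ is noted, the identity $\sqrt{1-(w_i/\|w\|)^2}=w_{3-i}/\|w\|$ makes the verification mechanical. A secondary subtlety is that the example violates the modelling assumption that $F$ has a density, but this is inconsequential: only atomlessness of $H_\alpha^\mu$ for each of the two items $\alpha$ is used in the Lagrangian argument of Section~\ref{subsection_strong_duality}, and that holds here.
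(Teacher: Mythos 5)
Your proposal is correct and follows essentially the same route as the paper: compute $H_\alpha^\mu(s)=\tfrac{2}{\pi}\arcsin s$ from the uniform angular distribution of the paced weights, observe that continuity of $H_\alpha^\mu$ is all the strong-duality machinery of Section~\ref{subsection_strong_duality} needs (so the failure of $F$ to have a density is harmless), and then verify that the expected expenditure at $t^*=\|w\|-1$ equals $B$ exactly, which delivers primal feasibility and complementary slackness alongside dual feasibility and Lagrange optimality. The only cosmetic difference is that you route the verification through dual first-order optimality of $t^*$ before invoking Theorem 5.1.5 of Bertsekas, whereas the paper checks the four sufficient conditions directly; the substance is identical.
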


Since $H_\alpha^\mu(\cdot)$ is a strictly increasing function for all $\alpha \in A$, it is easy to check that $\mu(w,B)$ is the unique optimal to the dual optimization problem $\min_{t \in [0, \omega/ B_{\min}]} q(\mu, w, B, t)$ for all $(w,B) \in \delta(G)$. Therefore, this example falls under the purview of part 1 of Proposition~\ref{structural_prop}. As expected, conforming to Proposition~\ref{structural_prop}, the buyers whose weight vectors are co-linear get paced down to the same point on the unit arc, as shown in Figure \ref{figure_experiment_example}.

%
%
%


\subsection{Numerical Experiments}

We now describe the simulation-based experiments we conducted to verify our theoretical results. As is necessitated by computer simulations, we studied a discretized version of our problem in these experiments. More precisely, in our experiments, we used discrete approximations to the buyer type distribution $G$ and the item type distribution $F$. Moreover, for all item types $\alpha$, we set the reserve price $r(\alpha) = 0$. One of the primary objectives of our simulations is to demonstrate that, despite the discretization, a buyer type can obtain her optimal bidding strategy by finding the optimal solution to the dual problem, as our theory suggests. In other words, to compute an equilibrium it suffices to best-respond in the dual space which has the advantage of being much simpler than the primal space.  To do so, for each discretized instance, we run best-response dynamics in the dual space by iterating over buyer types; computing each buyer type's optimal dual solution while keeping everyone else's pacing-based strategy fixed and then using this optimal dual solution to determine her pacing-based bidding strategy. This approach is not guaranteed to converge. In fact, due to the discretization, strong duality may fail to hold and a pure strategy equilibrium may not even exist. Nevertheless, despite the lack of theoretical guarantees, our experiments demonstrate that our analytical results and the dual best-response algorithm they inspire continue to work well in discrete settings.

As a first step, and to validate our best-response dynamics, we ran the algorithm on the discrete approximation of the example discussed in Subsection~\ref{section_example}, for which we had already analytically determined a pacing equilibrium in Claim \ref{example}.
The problem was discretized by picking 320 points lying in the set of buyer types $\Theta$ defined in Subsection~\ref{section_example}.
In Figure \ref{figure_experiment_example}, we provide plots for the case when $a = 2, b = 3$.
We see that the theoretical predictions from Claim~\ref{example} are replicated almost exactly by the solution computed by best-response iteration on the discretized problem. Moreover, co-linear buyer types converge to the same paced type vector, thereby validating Proposition~\ref{structural_prop}.

\usetikzlibrary{patterns}
\begin{figure}[]
	\centering
	\begin{subfigure}{.49\textwidth}
  		\centering
  		\begin{tikzpicture}
			\draw [gray, line width = 1 cm] (2.5,0) arc (0:90:2.5cm);
			\draw[thick,->] (0,0) -- (3.5,0);
			\draw[thick,->] (0,0) -- (0,3.5);
			\foreach \x in {0,1,2,3}
		    \draw (\x cm,1pt) -- (\x cm,-1pt) node[anchor=north] {$\x$};
			\foreach \y in {0,1,2,3}
		    \draw (1pt,\y cm) -- (-1pt,\y cm) node[anchor=east] {$\y$};
		
		    \draw (1,0) arc (0:90:1cm);
			\draw[thick,->] (0,0) -- (3.5,0);
			\draw[thick,->] (0,0) -- (0,3.5);
			\foreach \x in {0,1,2,3}
		    \draw (\x cm,1pt) -- (\x cm,-1pt) node[anchor=north] {$\x$};
			\foreach \y in {0,1,2,3}
		    \draw (1pt,\y cm) -- (-1pt,\y cm) node[anchor=east] {$\y$};
		
		    \draw [red, dotted, line width = 0.05 cm] (1.414, 1.414) -- (2.1213, 2.1213);
		    \node [red] at (0.707, 0.707) {\textbullet};
		    \draw [thick, ->] (1.25, 1.25) -- (0.85, 0.85);
		\end{tikzpicture}
	\end{subfigure}
	\begin{subfigure}{.49\textwidth}
  		\centering
  		\includegraphics[trim=1.5cm 0cm 1.5cm 0.5cm,clip, width=.8\linewidth]{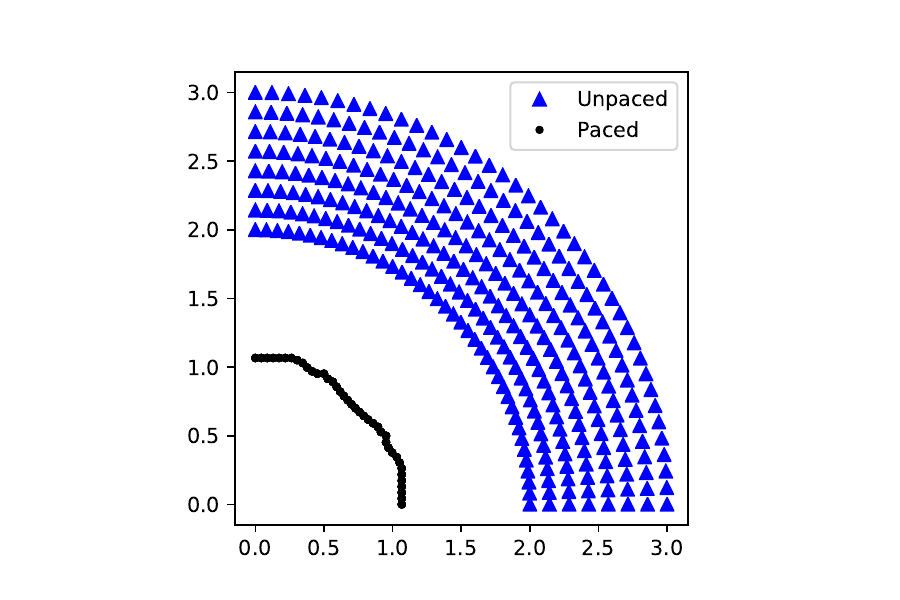}
	\end{subfigure}
	\caption{The example from Section~\ref{section_example} with $a = 2, b =3$. The unpaced and paced buyer weight vectors are uniformly distributed in the gray (triangle) and black (circle) region, respectively. Each plot shows the distribution of two-dimensional buyer weight vectors. The weight vectors before pacing are depicted in gray (triangles) and the paced weight vectors are depicted in black (circles). The left plot shows the theoretical results of Subsection~\ref{section_example}. In the left plot, the buyer weight vectors lying on the dotted line get paced down to the point. The right plot shows the results of best-response iteration on the corresponding discretized problem.}
	\label{figure_experiment_example}
\end{figure}

We conducted experiments to verify the structural properties described in Proposition~\ref{structural_prop}. Here we consider instances with $n = 3$ buyers per auction, $d = 2$ features, the buyer type distribution $G$ given by the uniform distribution on $(1,2) \times (1,2) \times \{0.6\}$ and the item type distribution $F$ given by the uniform distribution on the one-dimensional simplex $\{(x,y) \mid x,y \geq 0;\ x+y = 1\}$. These were discretized taking a uniform grid with 10 points along each dimension. The results are portrayed in Figure \ref{figure_simulation}.
\begin{figure}
	\centering
	\begin{subfigure}{.49\textwidth}
  		\centering
  		\includegraphics[trim=1cm 0.5cm 0.5cm 0.5cm,clip, width=.8\linewidth]{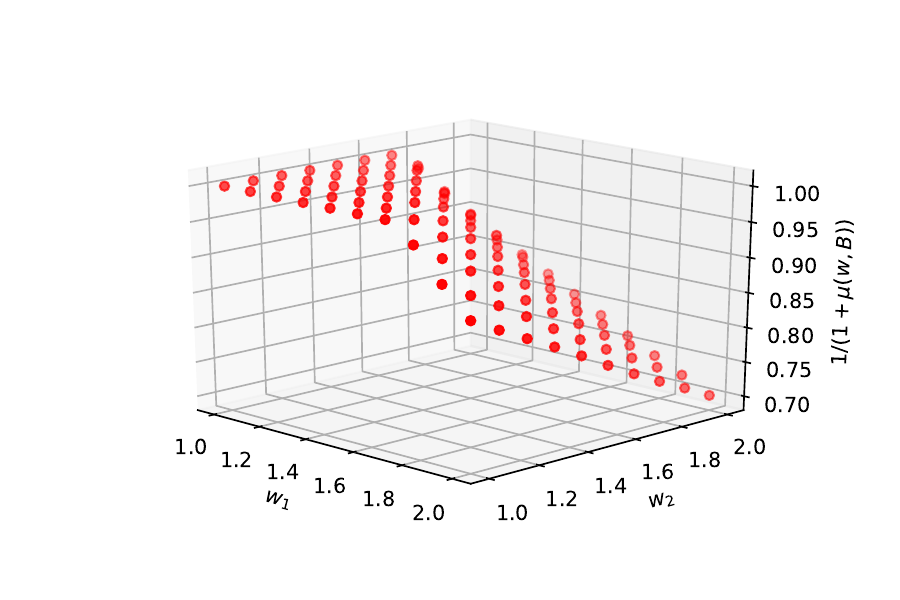}
	\end{subfigure}
	\begin{subfigure}{.49\textwidth}
  		\centering
  		\includegraphics[trim=0.5cm 0.2cm 1cm 0.5cm,clip, width=.8\linewidth]{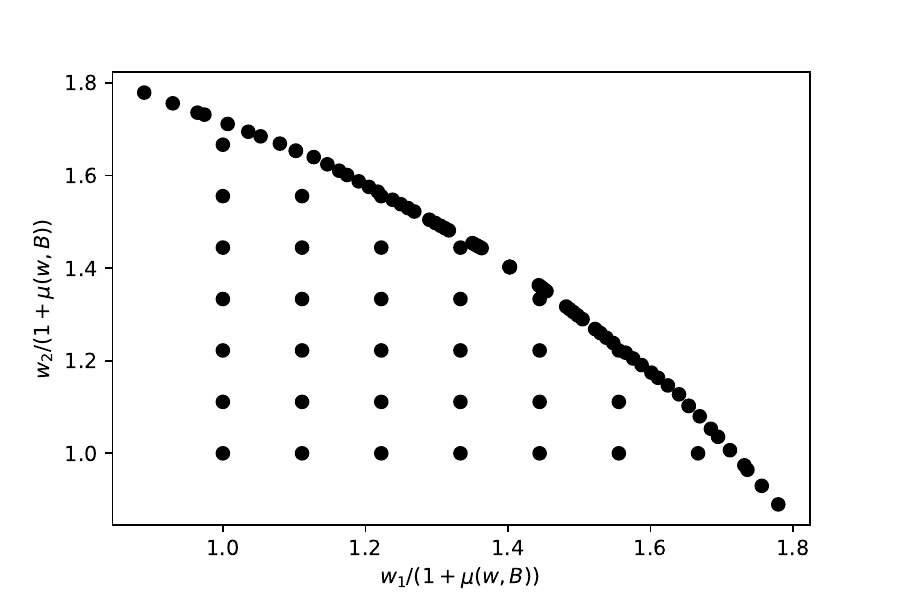}
	\end{subfigure}
	\caption{The left plot depicts how the multiplicative shading factor $1/(1 +\mu(w,B))$ varies with buyer weight vector $w$ (budget $B = 0.6$ is the same for every buyer type). On the right, we plot the paced weight vectors of the buyer types.}
	\label{figure_simulation}
\end{figure}
The structural properties discussed in Proposition~\ref{structural_prop} are clearly evident in Figure \ref{figure_simulation}. In this scenario, the buyer types are uniformly distributed on $(1,2) \times (1,2) \times \{0.6\}$ and, as a consequence, all buyers have identical budgets equal to 0.6. At equilibrium, it can be seen that the co-linear buyer types (i.e., buyers whose weight vectors $w$ are co-linear) who have a positive multiplier get paced down to the critical buyer type who exactly spends her budget. Moreover, at equilibrium, the boundary that separates the paced buyer types from the unpaced buyer types---the curve in which the critical buyer types lie---can be clearly observed in the left-hand plot in Figure \ref{figure_simulation}. Finally, we constructed random discrete instances by uniformly sampling 50 buyer weight vectors and 20 item feature vectors from the square $(1,2) \times (1,2)$, and setting the number of buyers to be $N=3$ and the budget of all buyer types to be $B = 2$. We found that our dual-based dynamics always converged within 250 iterations to pacing-based bidding strategies which on average were within 2.5\% of the utility-maximizing budget feasible bidding strategy.

\section{Conclusion and Future Work}

This paper introduces a natural contextual valuation model and characterizes the equilibrium bidding behavior of budget-constrained buyers in first-price auctions in this model. We extend this result to other standard auctions and establish revenue equivalence among them. Due to the extensive focus on second-price auctions, previous works endorse bid-pacing as the framework of choice for budget management in the presence of strategic buyers. Our results suggest that value-pacing, which coincides with bid-pacing in second-price auctions, is an appropriate framework to manage budgets across all standard auctions.

An important open question we leave unanswered is that of optimizing the reserve prices to maximize seller revenue under equilibrium bidding. In general, optimizing under equilibrium constraints is usually challenging, so it is interesting to explore whether our model possesses additional structure that allows for tractability. Another related question is that of characterizing the revenue-optimal mechanism for our model. Our contextual-value model can capture multi-item auctions with additive valuations as a special case (by interpreting each context as a different item), which is a notoriously hard setting for revenue maximization, even in the absence of budget constraints.
  Investigating dynamics in first-price auctions with strategic budget-constrained buyers is another interesting open direction worth exploring. We also leave open the question of efficient computation of the pacing-based equilibria discussed in this paper. Addressing this question will likely require choosing a suitable method of discretization and tie-breaking, without which equilibrium existence may not be guaranteed (see, e.g., \citealt{conitzer2017multiplicative,Babaioff0HIL21}). Finally, another interesting research direction is to develop conditions that guarantee uniqueness of an equilibrium. In light of recent results by \citet{conitzer2017multiplicative}, we conjecture that, without further assumptions, the equilibrium would generally not be unique.

\singlespace

\bibliographystyle{plainnat}
\bibliography{refs.bib}

\appendix

\newpage
\pagenumbering{arabic}\renewcommand{\thepage}{ec \arabic{page}}

\begin{centering}
\LARGE
Electronic Companion:\\[1em]
Contextual Standard Auctions with Budgets\\[1em]
\large
Santiago Balseiro, Christian Kroer, Rachitesh Kumar\\[1em]
\today\\
\end{centering}
\renewcommand{\theequation}{\thesection-\arabic{equation}}

\rk{
\section{Counter Example for Deterministic Context}\label{appendix:det-context-counter}

\begin{example*}Consider an auction with $n=2$ budget-constrained buyers per auction. Buyers draw their value $v$ uniformly from the interval $[0,1]$ and each with a budget of $1/8$, i.e., $T = \{(v,1/8) \in \R^2 \mid 0 \leq v \leq 1\}$ is the type space where the first component denotes the value and the second one denotes the budget. (A uniform distribution of values can be achieved by a number of fixed contexts and weight vector distributions, for example suppose the item context is $\alpha = (1)$ and the weight vectors $w$ are distributed uniformly in $[0,1]$. This would yield values $v=w^T \alpha$ that are uniformly distributed) As in our model, the buyers would like to satisfy their budget constraint in expectation at the interim stage: A buyer with value $v$ would like to spend less than $1/8$ in expectation over the value of the other buyer. Moreover, assume that the ties are broken uniformly. We will show that there does not exist a symmetric continuous non-decreasing Bayes-Nash equilibrium strategy $\beta: T \to \R_{\geq 0}$ for this example.

  Let $F$ denote the distribution of bids under $\beta$. We first show that $F$ must contain an atom. For contradiction, suppose not, i.e., $F$ is atomless. Since $F$ is atomless $\beta$ should strictly increasing. Then, the probability that a buyer with value $v$ wins the item in equilibrium is given by $v$. This follows because the bidder with the highest value wins when strategies are symmetric and strictly increasing together with the fact that values are uniformly distributed. Therefore, if the buyer with value $1$ bids $b$, her expected expenditure is given by $b$, which must be less than or equal to $1/8$ due to the budget constraint. Hence, $\beta(v)\leq 1/8$ for all $v \in [0,1]$. It is easy to see that the optimal bid for any buyer with value $v \in [1/2,1]$, in response to the other buyer using $\beta$, is $1/8$. This contradicts the assumption that $F$ is atomless.

  Hence, $F$ has an atom $b^*$. As $\beta$ is non-decreasing, there exists an interval $[x, x + \epsilon]$, where $\epsilon > 0$, such that $\beta(v) = b^*$ for all $v \in [x, x + \epsilon]$ and $\beta(v) < b^*$ for all $v < x$. If $b^* = x = 0$, then bidding infinitesimally more than $b^*$ is strictly better for a buyer with value $x + \epsilon$ because her probability of winning increases by at least $\epsilon/2$ without violating her budget constraint, thereby contradicting the fact that $\beta$ is a BNE. Hence, we have $0 < b^* = \beta(x) < x$, because if $b^* = x$, then bidding slightly less than $b^*$ would give the buyer with value $x$ a higher utility. Finally, the continuity of $\beta$ implies that, for a buyer with a value that is infinitesimally smaller than $x$, it is optimal to bid $b^*$ since it increases her probability of winning by at least $\epsilon/2$ with only an infinitesimal increase in bid. This contradicts the definition of a BNE, thereby implying that no symmetric continuous non-decreasing BNE strategy exists for this example.\qed
\end{example*}

It is worth noting that a BNE does exist for the above example if the seller employs the second-price auction. In particular, we claim that the following strategy forms a BNE for the  second-price auction:
\begin{align*}
	\beta(v) = \begin{cases}
 	1/4 \quad \text{if } v \geq 1/4\\
 	v \quad \text{if } v < 1/4
 \end{cases}
\end{align*}

First, observe that if a buyer bids $b > 1/4$, her total expected expenditure (expectation over the other buyer's value) is given by
\begin{align*}
	\frac{3}{4} \cdot \frac{1}{4} + \int_{0}^{1/4} v dv = \frac{7}{32}
\end{align*}
which is strictly greater than $1/8$. Therefore no buyer can bid strictly more than $1/4$ without violating her budget constraint. Moreover, a buyer who bids exactly $1/4$ spends
\begin{align*}
	\frac{1}{2} \cdot \frac{3}{4} \cdot \frac{1}{4} + \int_{0}^{1/4} v dv = \frac{1}{8}
\end{align*}
and satisfies her budget constraint.

Consider a buyer with value $v > 1/4$ and suppose the competing buyer bids using $\beta$. As argued above, her budget constraints her to select a bid $b \leq 1/4$. Her utility from bidding $b < 1/4$ is given by $v \cdot b - b^2/2$, which is at most $v \cdot (1/4) - (1/4)^2/2 = v/4 - 1/32$. On the other hand, the utility she receives from bidding $b = 1/4$ is given by
\begin{align*}
	v\left[ \frac{1}{2} \cdot \frac{3}{4} + \frac{1}{4} \right]	- \frac{1}{8}
\end{align*}
which is strictly greater than $v/4 - 1/32$ because $v > 1/4$. Next, consider a buyer with value $v \leq 1/4$. If she ignores her budget constraint, it is a weakly dominant strategy to bid her value. As we have shown above, bidding her value also respects her budget constraint and is therefore a best response. Hence, we have shown that, if the other buyer bids using $\beta$, it is a best response for any buyer with value $v > 1/4$ to bid $1/4$ and for any buyer with value $v \leq 1/4$ to bid $v$, as desired.
}

\section{Existence of Symmetric First-Price Equilibrium} \label{appendix_proof_existence}

\subsection{Preliminaries on Continuity}

The following lemma establishes the almost sure continuity of the CDF of the distribution of the maximum of paced values $H_\alpha^\mu$, which is used extensively in our analysis.

\begin{lemma}\label{pacing_strategy_properties}
	For every $\mu: \Theta \to \R_{\geq 0}$, the following properties hold:
	\begin{itemize}[topsep = 0cm]
		\item[a.] $\lambda_\alpha^\mu$ and $H_\alpha^\mu$ have continuous CDFs almost surely w.r.t. $\alpha \sim F$
		\item[b.] $\sigma_\alpha^\mu$ is continuous almost surely w.r.t. $\alpha \sim F$
		\item[c.] $\sigma_\alpha^\mu$ is non-decreasing. Furthermore, for $x \in [0, \omega]$ and $\alpha \in A$ such that $H_\alpha^\mu$ is continuous, the following statement holds almost surely w.r.t. $Y \sim H_\alpha^\mu$,
			\begin{align*}
        		\mathds{1}\{x \geq r(\alpha), x \geq Y \}= \mathds{1}\left\{ \sigma_\alpha^\mu(x) \geq r(\alpha),\ \sigma_\alpha^\mu(x) \geq \sigma_\alpha^\mu(Y)\right\}
    		\end{align*}
		\item[d.] Almost surely w.r.t. $\alpha \sim F$, when $x_1, x_2 \sim \lambda^\mu_\alpha$ i.i.d., the probability of $\sigma^\mu_\alpha(x_1) = \sigma_\alpha^\mu(x_2)$ is zero.
	\end{itemize}
\end{lemma}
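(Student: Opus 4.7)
The plan is to establish part (a) first, and then derive (b), (c), (d) from it together with elementary case analysis of the explicit formula for $\sigma_\alpha^\mu$.

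\emph{Part (a).} The key reduction is that a probability measure on $\mathbb{R}$ is atomless iff the probability that two i.i.d.\ samples coincide is zero. So the goal becomes to prove $\mathbb{P}_{V,V'\sim\lambda_\alpha^\mu,\text{ i.i.d.}}[V=V']=0$ for $F$-a.e.\ $\alpha$, which by Fubini reduces to $\mathbb{P}[V=V']=0$ under the joint law of $\alpha\sim F$ and two independent types $(w,B),(w',B')\sim G$. Conditioning on the buyer types, the event $V=V'$ is the hyperplane condition $\alpha^\top z=0$ with $z:=(1+\mu(w',B'))\,w-(1+\mu(w,B))\,w'$, which has $F$-probability zero whenever $z\neq 0$ since $F$ has a density. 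The residual case $z=0$ is the coincidence $\psi(w,B)=\psi(w',B')$ of the paced weight vectors $\psi(w,B):=w/(1+\mu(w,B))$. For each $y\in\mathbb{R}^d\setminus\{0\}$, the level set $\psi^{-1}(y)\subset\Theta$ lies inside the two-dimensional slice $\{(t y,B):t\geq 1,\,B\in(B_{\min},U)\}$, so using $d\geq 2$ (hence $\Theta$ is at least $3$-dimensional) together with the density of $G$, every such level set has $G$-measure zero. Integrating over $(w',B')$ gives $(G\otimes G)\{\psi(w,B)=\psi(w',B')\}=0$, completing the proof that $\lambda_\alpha^\mu$ has a continuous CDF. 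Continuity of $H_\alpha^\mu$ is then immediate from $H_\alpha^\mu(x)=\lambda_\alpha^\mu((-\infty,x])^{n-1}$.

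\emph{Parts (b)--(d).} For (b), continuity of $\sigma_\alpha^\mu$ follows from the continuity of $H_\alpha^\mu$ established in (a): the integral formula depends continuously on $x$ at every point where $H_\alpha^\mu(x)>0$, and the branch $x<r(\alpha)$ contributes the trivially continuous $\sigma_\alpha^\mu(x)=x$. For (c), the easy bound $\int_{r(\alpha)}^x H_\alpha^\mu(s)\,ds\leq H_\alpha^\mu(x)(x-r(\alpha))$ gives $\sigma_\alpha^\mu(x)\geq r(\alpha)$ whenever $x\geq r(\alpha)$; monotonicity follows from either direct differentiation or from the conditional-expectation identity $\sigma_\alpha^\mu(x)=\mathbb{E}[\max(Y,r(\alpha))\mid Y<x]$ (enlarging the conditioning event averages in higher $Y$-values). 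The indicator equality reduces, by case analysis on $x,Y$ relative to $r(\alpha)$, to the only subtle case where $\sigma_\alpha^\mu(x)=\sigma_\alpha^\mu(Y)$ with $x\neq Y$—this forces both values into a flat interval of $\sigma_\alpha^\mu$, which coincides with a flat (i.e., zero-mass) interval of $H_\alpha^\mu$, a zero-probability event for $Y\sim H_\alpha^\mu$. Part (d) is immediate from the same observation: $\sigma_\alpha^\mu(x_1)=\sigma_\alpha^\mu(x_2)$ forces either $x_1=x_2$ (probability zero by (a) applied to the continuous CDF of $\lambda_\alpha^\mu$) or $x_1,x_2$ both in a common flat interval of $\sigma_\alpha^\mu$, again a zero-mass event for $\lambda_\alpha^\mu$.

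\emph{Main obstacle.} The crux is part (a), because $\mu$ is only assumed measurable and no smoothness of $\psi$ can be invoked to control its level sets. The trick is to exploit the independent randomness in $\alpha$ via Fubini—collapsing each potential atom into a hyperplane condition in $\alpha$—combined with the dimensional gap between $\Theta$ (ambient dimension $d+1\geq 3$) and the ray-times-interval slice ($2$-dimensional) on which every fiber of $\psi$ lives. It is precisely the standing assumption $d\geq 2$ that makes this dimensional gap effective.
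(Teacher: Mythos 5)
Your proposal is correct, and for part (a) it takes a genuinely different route from the paper. The paper's proof fixes two linearly independent directions $\alpha_1,\alpha_2$, shows that the weight vectors producing atoms in both directions lie in a $(d-1)$-dimensional subspace (hence are $G_w$-null), then invokes a countability lemma (an uncountable family of positive reals contains a divergent countable subfamily) to conclude that only countably many directions $\alpha/\|\alpha\|$ can carry an atom, and finally kills the cone over those directions using the density of $F$. You instead use the characterization that a law on $\R$ is atomless iff two i.i.d.\ draws coincide with probability zero, and push everything through Fubini: conditioning on the two types, the coincidence event becomes the hyperplane condition $\alpha^T z=0$, which is $F$-null unless $z=0$, and $z=0$ is precisely coincidence of the paced weight vectors, whose fibers sit inside a $2$-dimensional ray-times-interval slice of the $(d+1)$-dimensional type space and are therefore $G$-null. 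Both arguments rest on the same two ingredients ($d\geq 2$ and the densities of $F$ and $G$), but yours dispenses with the countability machinery and is shorter; the paper's version has the mild advantage of explicitly exhibiting the exceptional item set as a cone over countably many directions. For parts (b)--(d) you follow essentially the paper's route (monotonicity via the conditional-expectation identity $\sigma_\alpha^\mu(x)=\mathbb{E}[\max(Y,r(\alpha))\mid Y<x]$, and reduction of both the indicator identity and the tie event to two points lying in a common flat stretch of $\sigma_\alpha^\mu$). The one step you assert without justification---that a flat interval of $\sigma_\alpha^\mu$ above $r(\alpha)$ must be an $H_\alpha^\mu$-null interval---is exactly the computational core of the paper's proof of (c) and (d); it is true, but it requires the explicit manipulation of the integral formula, handled separately for $H_\alpha^\mu(x)>0$ and $H_\alpha^\mu(x)=0$, so it should be spelled out in a full write-up.
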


Part (a) states that the distributions of paced values are atomless almost surely w.r.t. the items $\alpha \sim F$. This property is crucial because it allows us to leverage the known result establishing the existence of a symmetric equilibrium in the i.i.d. setting under arbitrary tie-breaking rules, which holds only if the distribution of values is atom-less.
Part (b) is a direct consequence of the definition of $\sigma_\alpha^\mu$. Part (c) follows from part (a). Part (c) says that when everyone uses the strategy $\sigma_\alpha^\mu$, with probability $1$, a buyer who has paced value $x$ for item $\alpha$ has the highest bid if and only if she has the highest paced value, which plays a key role in our analysis. Finally, part (d), says that ties are a zero probability event when players use the value-pacing-based strategy. We will need the following lemma to prove Lemma~\ref{pacing_strategy_properties}


\begin{lemma}\label{lemma_infinite_sequence}
    Consider a set $Y = \{y_\alpha\}_{\alpha \in I}$ with $y_\alpha > 0$, where $I$ is an index set. If $I$ is uncountable, then there exists a countable sequence $\{\alpha_n\}_{n\in \mathbb{N}} \subset I$ such that $\sum_{n \in \mathbb{N}} y_{\alpha_n} = \infty$.
\end{lemma}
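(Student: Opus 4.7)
The plan is a pigeonhole argument by thresholding the values $y_\alpha$. Since each $y_\alpha > 0$, I would partition (or more precisely, cover) the index set according to the scale of $y_\alpha$ by defining, for each positive integer $k$,
\begin{align*}
    I_k \coloneqq \{\alpha \in I : y_\alpha > 1/k\}.
\end{align*}
Because every $y_\alpha$ is strictly positive, there exists some $k$ with $y_\alpha > 1/k$, so $I = \bigcup_{k=1}^\infty I_k$.

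Next, I would invoke the fact that a countable union of at-most-countable sets is at most countable. Since $I$ is assumed uncountable, not every $I_k$ can be countable; pick any $k^* \in \mathbb{N}$ for which $I_{k^*}$ is uncountable (in particular, infinite). From $I_{k^*}$ I can extract a countable sequence of distinct indices $\{\alpha_n\}_{n\in \mathbb{N}} \subset I_{k^*}$. By construction, $y_{\alpha_n} > 1/k^*$ for every $n$, so
\begin{align*}
    \sum_{n \in \mathbb{N}} y_{\alpha_n} \;\geq\; \sum_{n \in \mathbb{N}} \frac{1}{k^*} \;=\; \infty,
\end{align*}
which is the desired conclusion.

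There is essentially no technical obstacle here; the only conceptual subtlety is recognizing that although each individual $y_\alpha$ could be arbitrarily small, the uncountability of $I$ forces uncountably many of the $y_\alpha$'s to exceed a common positive threshold $1/k^*$, and this uniform lower bound is what produces divergence along the extracted sequence. This lemma will then be used in the proof of Lemma~\ref{pacing_strategy_properties} to rule out the possibility of uncountably many ``bad'' directions $\alpha$ along which $\lambda_\alpha^\mu$ has an atom, since each such atom would contribute a fixed positive probability mass under $G$ and their collective mass cannot exceed one.
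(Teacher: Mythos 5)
Your proof is correct and follows essentially the same argument as the paper's: decompose $I$ into the level sets $\{\alpha : y_\alpha \geq 1/n\}$, use that a countable union of countable sets is countable to find an uncountable (hence infinite) level set, and extract a sequence with a uniform positive lower bound. The only (immaterial) difference is your use of a strict inequality in defining the level sets.
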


\begin{proof}
    Rewrite $I$ as $I = \cup_{n \in \mathbb{Z}_+} \{\alpha \in I \mid y_\alpha \geq 1/n\}$. It is a well-known fact that a countable union of countable sets is countable (see Theorem 2.12 of \cite{rudin1964principles}). Therefore, in order for $I$ to be uncountable, there must exist $n_0$ such that $\{\alpha \in I \mid y_\alpha \geq 1/n_0\}$ is uncountable. It follows that we can find a countable sequence $\{y_{\alpha_n}\}_{n \in \mathbb{N}}$ such that $y_{\alpha_n} \geq 1/n_0$ for all $n \in \mathbb{N}$. For this sequence, $\sum_{n \in \mathbb{N}} y_{\alpha_n} = \infty$.
\end{proof}

We now state the proof of Lemma~\ref{pacing_strategy_properties}.

\begin{proof}[Proof of Lemma~\ref{pacing_strategy_properties}]
\
	\begin{itemize}
		\item[a.] 
%
%
	Consider a pacing function $\mu: \Theta \to \mathbb{R}_{\geq 0}$. Let $\alpha_1, \alpha_2 \in A$ be linearly independent feature vectors and $x_1, x_2 \in [0,\omega]$ be two possible item values. We consider the set of buyer types which have paced value $x_1$ for $\alpha_1$ and paced value $x_2$ for $\alpha_2$, i.e.,  define
	\begin{align*}
		S \coloneqq \left\{ (w,B) \in \Theta \bigg\lvert\ \frac{w^T\alpha_1}{1 + \mu(w,B)} = x_1;\ \frac{w^T\alpha_2}{1 + \mu(w,B)} = x_2 \right\}
	\end{align*}
	
	Observe that, for $(w,B) \in S$ and $c \coloneqq x_1/x_2$, we have $w^T\alpha_1 = c \cdot w^T\alpha_2$.
	Therefore, the set $T = \{w \in \Theta_w \mid w^T(\alpha_1 - c \alpha_2) = 0\}$ is a superset of the set $S_w$.
	Hence, $S \subset T \times (B_{\min}, U)$,
	which in combination with the assumption that $G$ has a density implies $G(S) = 0$.
		
		   Define $J = \left\{\alpha/\|\alpha\|\ \mid \exists\ x_\alpha> 0 \textrm{ s.t. } G(\{(w,B) \mid w^T\alpha/(1 + \mu(w,B) = x_\alpha\}) >0)\right\}$. Suppose $J$ is uncountable. Then, by Lemma~\ref{lemma_infinite_sequence}, there exists a countable sequence $\{\alpha_m\}_{m \in \mathbb{N}}$ and $\{x_{\alpha_m}\}_{m \in \mathbb{N}}$ such that $\alpha_i/\|\alpha_i\| \neq \alpha_j/\|\alpha_j\|$ for all $i \neq j$ and
		    \begin{align*}
		        \sum_{m} G(\{(w,B) \mid w^T\alpha_m/(1 + \mu(w,B) = x_{\alpha_m}\}) >0) = \infty.
		    \end{align*}
		
		   Set $S_m \coloneqq \{(w,B) \mid w^T\alpha_m/(1 + \mu(w,B) = x_{\alpha_m}\}$. We have shown above that $G(S_i \cap S_j) = 0$ for all $i \neq j$. Therefore, for all $m \geq 1$, we have $G(S_m \cap\ (\cup_{j < m} S_j)) = 0$, which implies $G(S_m \cap\ (\cup_{j < m} S_j)^C) = G(S_m)$. This contradicts $G(\cup_m S_m) \leq 1$ as $G(\cup_m S_m) = \sum_m G(S_m \cap\ (\cup_{j < m} S_j)^C) = \sum_{m} G(\alpha_m^T s = x_{\alpha_m}) = \infty$. Hence, $J$ is countable. Observe that
		    \begin{align*}
		        \left\{\frac{\alpha}{\|\alpha\|}\ \biggr\lvert\ \textrm{ $\lambda^\mu_\alpha$ has an atom}\right\} \subset J
		    \end{align*}
		    As $F$ has a density, we get $F(\textrm{cone}(J)) = 0$. Therefore, $\lambda^\mu_\alpha$ has no atoms almost surely w.r.t. $\alpha \in A$, i.e., $\lambda_\alpha^\mu$ has a continuous CDF almost surely w.r.t. $\alpha \sim F$. Moreover, this implies that $H_\alpha^\mu$ has a continuous CDF almost surely w.r.t. $\alpha \sim F$.
		
		   \item[b.] Follows from the fact that the integral of every bounded function is continuous.
		
		   \item[c.] Using Lemma 2.2.8 from \cite{durrett2019probability}, we can write
		   	\begin{align*}
		   		\sigma_\alpha^\mu(x) = x - \int_{r(\alpha)}^{x}  \frac{H_\alpha^\mu(s)}{H_\alpha^\mu(x)} ds = r(\alpha) + \int_{r(\alpha)}^{x}  \frac{H_\alpha^\mu(x) - H_\alpha^\mu(s)}{H_\alpha^\mu(x)} ds = \mathbb{E}_{Y \sim H_\alpha^\mu}\left[ \max\{Y, r(\alpha)\} \mid Y< x \right]
		   	\end{align*}
		   	From the last term, it can be easily seen that $\sigma^\mu_\alpha$ is non-decreasing.
		   	
		   	Observe that $\mathds{1}(x \geq r(\alpha), x \geq Y) \leq \mathds{1}\left(\sigma_\alpha^\mu(x) \geq r(\alpha), \sigma_\alpha^\mu(x) \geq \sigma_\alpha^\mu(Y)\right)$ always holds as $\sigma^\mu_\alpha$ is non-decreasing and $\sigma^\mu_\alpha(r(\alpha)) = r(\alpha)$. Moreover,
		    \begin{align*}
		        \mathds{1}(x \geq r(\alpha), x \geq Y) < \mathds{1}\left(\sigma_\alpha^\mu(x) \geq r(\alpha), \sigma_\alpha^\mu(x) \geq \sigma_\alpha^\mu(Y)\right) &\implies x \geq r(\alpha), x<Y,\ \sigma_\alpha^\mu(x) \geq \sigma_\alpha^\mu(Y)\\
		        &\implies x \geq r(\alpha), x<Y,\ \sigma_\alpha^\mu(x) = \sigma_\alpha^\mu(Y)
		    \end{align*}
		    because $\sigma_\alpha^\mu(x) \geq r(\alpha)$ if and only if $x \geq r(\alpha)$, and $\sigma_\alpha^\mu$ is non-decreasing.
		
		    Therefore, it is enough to show for $\alpha \in A$ such that $H_\alpha^\mu$ is continuous and $x \geq r(\alpha)$, we have
		    \begin{align*}
		        H^\mu_\alpha\left(\{y \in[0,\omega] \mid x < y, \sigma_\alpha^\mu(x) = \sigma_\alpha^\mu(y)\}\right) = 0
		    \end{align*}
		    Suppose the above statement doesn't hold for some $\alpha \in A$ such that $H_\alpha^\mu$ is continuous and $x \geq r(\alpha)$. Then, for $y = \sup \{t > x \mid \sigma_\alpha^\mu(t) = \sigma_\alpha^\mu(x)\}$, we have $\sigma_\alpha^\mu(y) = \sigma_\alpha^\mu(x)$ (as $\sigma_\alpha^\mu$ is continuous) and $H_\alpha^\mu((x,y]) > 0$. First, consider the case when $H_\alpha^\mu(x) > 0$. Observe that
		    \begin{align*}
		        \sigma^\mu_\alpha(y) - \sigma^\mu_\alpha(x) &= y - x  -\int_{r(\alpha)}^{y}  \frac{H^\mu_\alpha(s)}{H^\mu_\alpha(y)} ds + \int_{r(\alpha)}^{x}  \frac{H^\mu_\alpha(s)}{H^\mu_\alpha(x)} ds \\
		        &= y - x - \int_x^{y}  \frac{H^\mu_\alpha(s)}{H^\mu_\alpha(y)} ds + \left(\frac{1}{H^\mu_\alpha(x)} - \frac{1}{H^\mu_\alpha(y)}\right)\int_{r(\alpha)}^{x}  H^\mu_\alpha(s) ds \\
		        &> \left(\frac{1}{H^\mu_\alpha(x)} - \frac{1}{H^\mu_\alpha(y)}\right)\int_{r(\alpha)}^{x}  H^\mu_\alpha(s) ds
		    \end{align*}
		    where the last inequality follows from $H^\mu_\alpha(y) - H^\mu_\alpha(x) = H^\mu_\alpha((x,y]) > 0$.
		    Therefore, $\sigma^\mu_\alpha(y) > \sigma^\mu_\alpha(x)$ because $H^\mu_\alpha(y) > H^\mu_\alpha(x)$, which contradicts $\sigma^\mu_\alpha(y) = \sigma^\mu_\alpha(x)$.
		
		    Next, consider the case when $H^\mu_\alpha(x) = 0$. Then, $H^\mu_\alpha(x) = 0$ and $H^\mu_\alpha(y) = H^\mu_\alpha(x) + H^\mu_\alpha((x,y]) > 0$. Note that
		    \begin{align*}
		        \sigma_\alpha^\mu(y)H^\mu_\alpha(y) = y H^\mu_\alpha(y) - \int_{r(\alpha)}^{y}  H^\mu_\alpha(s) ds = \int_{r(\alpha)}^{y}  [H^\mu_\alpha(y) - H^\mu_\alpha(s)] ds + r(\alpha)H^\mu_\alpha(y)
		    \end{align*}
		    Hence, $\sigma_\alpha^\mu(y) = 0$ if and only if $H^\mu_\alpha(s) = H^\mu_\alpha(y)$ for all $s \in [r(\alpha),y]$ and $r(\alpha) = 0$. As $H^\mu_\alpha(0) = 0$ and $H^\mu_\alpha(y)>0$, we get $\sigma_\alpha^\mu(y) > 0$, which contradicts $\sigma_\alpha^\mu(y) = \sigma_\alpha^\mu(x)$.
    		\item[d.] Consider a $\alpha \in A$ such that $\lambda_\alpha^\mu$ has a continuous CDF and $P_{x \sim \lambda_\alpha^\mu}(\sigma_\alpha^\mu(x) = c) > 0$ for some $c \geq 0$. Then, by the definition of $\sigma^\mu_\alpha$, it must be that $c \geq r(\alpha)$. Moreover, if we let $x_0 = \inf\{x \mid \sigma_\alpha^\mu(x) = c\}$, then $P_{x \sim \lambda_\alpha^\mu}(\sigma_\alpha^\mu(x) = c) > 0$ implies $H^\mu_\alpha\left(\{y \in[0,\omega] \mid x_0 < y, \sigma_\alpha^\mu(x_0) = \sigma_\alpha^\mu(y)\}\right) > 0$. This contradicts the fact we proved as part of the proof of part (c): for $\alpha \in A$ such that $H_\alpha^\mu$ is continuous and $x \geq r(\alpha)$, we have
		    \begin{align*}
		        H^\mu_\alpha\left(\{y \in[0,\omega] \mid x < y, \sigma_\alpha^\mu(x) = \sigma_\alpha^\mu(y)\}\right) = 0
		    \end{align*}
		    	Therefore, when $x \sim \lambda_\alpha^\mu$, the CDF of $\sigma^\mu_\alpha(x)$ is continuous, and hence, if $x_1, x_2 \sim \lambda_\alpha^\mu$ i.i.d., then the probability of $\sigma^\mu_\alpha(x_1) = \sigma_\alpha^\mu(x_2)$ is zero. Part (d) follows from combining this fact with part (a).\qedhere
	\end{itemize}
\end{proof}

\subsection{Strong Duality and Characterizing an Optimal Pacing Strategy}

We begin with the proof of Lemma~\ref{lagrangian_optimal}.

\begin{proof}[Proof of Lemma~\ref{lagrangian_optimal}]
	Note that bidding more than the highest competing bid with a positive probability is not optimal, i.e., if $\mathbb{P}_\alpha(b(\alpha) > \sigma^\mu_\alpha(\omega)) > 0$, then $b$ is not optimal. Therefore, we can restrict our attention to $b$ such that $0 \leq b(\alpha) \leq \sigma^\mu_\alpha(\omega)$ a.s. w.r.t. $\alpha \sim F$. Now, consider such a $b$. As $\sigma^\mu_\alpha(0)= 0$ and $\sigma_\alpha$ is continuous a.s. w.r.t. $\alpha \sim F$, by the Intermediate Value Theorem, there exists $z(\alpha) \in [0,\omega]$ such that $\sigma^\mu_\alpha(z(\alpha)) = b(\alpha)$.
	
	Therefore, with $x(\alpha) \coloneqq w^T\alpha/(1 + t)$, we have
    \begin{align*}
        &\max_{b(.)}\ \mathbb{E}_{\alpha,\{\theta_i\}_{i=1}^{n-1}}\left[ \left(\frac{w^T \alpha}{1+t} - b(\alpha)\right)\ \mathds{1}\{b(\alpha) \geq \max(r(\alpha), \{\beta(\theta_i, \alpha)\}_i) \} \right]\\
        =\ &\max_{b(.)}\ \mathbb{E}_\alpha
        \mathbb{E}_{Y\sim H^\mu_\alpha} \left[\left(x(\alpha) - b(\alpha)\right) \mathds{1}\left\{ b(\alpha) \geq \max(r(\alpha), \sigma^\mu_\alpha(Y)) \right\}\right]\\
        =\ &\max_{z(.)}\ \mathbb{E}_\alpha
        \mathbb{E}_{Y\sim H^\mu_\alpha} \left[\left(x(\alpha) - \sigma^\mu_\alpha(z(\alpha))\right) \mathds{1}\left\{\sigma^\mu_\alpha(z(\alpha)) \geq \max(r(\alpha), \sigma^\mu_\alpha(Y)) \right\}\right]\\
        =\ &\max_{z(.)}\ \mathbb{E}_\alpha
        \mathbb{E}_{Y\sim H^\mu_\alpha} \left[\left(x(\alpha) - \sigma^\mu_\alpha(z(\alpha))\right) \mathds{1}\left\{z(\alpha) \geq \max(r(\alpha), Y) \right\}\right]\\
        =\ &\max_{z(.)}\ \mathbb{E}_\alpha\left[\left(x(\alpha) - \sigma^\mu_\alpha(z(\alpha))\right) H^\mu_\alpha\left(z(\alpha)\right))\ \mathds{1}\left\{z(\alpha) \geq r(\alpha)\right\}\right]
    \end{align*}
    where the third equality follows from part (c) of Lemma~\ref{pacing_strategy_properties}. Hence, to prove the claim, it is enough to show that for all $\alpha \in A$, we have
    \begin{align*}
        x(\alpha) \in arg\max_{z(.)}\ \left(x(\alpha) - \sigma^\mu_\alpha(z(\alpha))\right) H^\mu_\alpha\left(z(\alpha)\right)) \mathds{1}\left\{z(\alpha) \geq r(\alpha) \right\}
    \end{align*}
    The above statement holds trivially for $\alpha$ such that $x(\alpha) < r(\alpha)$, because $\sigma^\mu_\alpha(t) \geq r(\alpha)$ when $t \geq r(\alpha)$. Consider $\alpha \in A$ for which $x(\alpha)\geq r(\alpha)$. Then, for $z(\alpha) \geq r(\alpha)$,
    \begin{align*}
        \left(x(\alpha) - \sigma^\mu_\alpha(z(\alpha))\right) H^\mu_\alpha\left(z(\alpha)\right)) &= x(\alpha) H^\mu_\alpha(z(\alpha)) - z(\alpha)H^\mu_\alpha(z(\alpha)) + \int_{r(\alpha)}^{z(\alpha)}  H^\mu_\alpha(s) ds\\
        &= (x(\alpha) - z(\alpha)) H^\mu_\alpha(z(\alpha)) + \int_{r(\alpha)}^{z(\alpha)}  H^\mu_\alpha(s) ds
    \end{align*}

    Therefore, for $z(\alpha) \geq r(\alpha)$, we have
    \begin{align*}
        \left(x(\alpha) - \sigma^\mu_\alpha(x(\alpha))\right) H^\mu_\alpha(x(\alpha)) - \left(x(\alpha) - \sigma^\mu_\alpha(z(\alpha))\right) H^\mu_\alpha(z(\alpha)) = (z(\alpha) - x(\alpha))H^\mu_\alpha(z(\alpha)) - \int_{x(\alpha)}^{z(\alpha)}  H^\mu_\alpha(s) ds \geq 0
    \end{align*}
    where the inequality holds regardless of $z(\alpha) \geq x(\alpha)$ or $x(\alpha) \geq z(\alpha)$. Furthermore, for $z(\alpha) < r(\alpha)$, we have
    \begin{align*}
        \left(x(\alpha) - \sigma^\mu_\alpha(x(\alpha))\right) H^\mu_\alpha\left(x(\alpha)\right)) \mathds{1}\left\{x(\alpha) \geq r(\alpha) \right\} \geq \left(x(\alpha) - \sigma^\mu_\alpha(z(\alpha))\right) H^\mu_\alpha\left(z(\alpha)\right)) \mathds{1}\left\{z(\alpha) \geq r(\alpha) \right\} = 0
    \end{align*}
    Hence, $z(\alpha) = x(\alpha)$ is optimal, which completes the proof.
\end{proof}

In the rest of the sub-section, we build towards the proof of Proposition~\ref{optimal_solution}. Recall that the dual objective function is given by
\begin{align*}
	q^\mu(w, B, t) = (1+t)\mathbb{E}_\alpha\left[\mathds{1}\left\{\frac{w^T \alpha}{1+t} \geq r(\alpha) \right\} \int_{r(\alpha)}^{\frac{w^T\alpha}{1+t}} H^\mu_\alpha(s) ds\right] + tB
\end{align*}

We will prove Proposition~\ref{optimal_solution} by first establishing the differentiability of the dual objective function, and then invoking the first-order optimality conditions for the dual-optimal solution. Lemma~\ref{differentiability_dual} will establish the differentiability of the dual objective function. To prove it, we will need the convexity of the dual objective function (Part 1 of Lemma~\ref{compact_dual_space}), the existence of a bounded dual-optimal solution (Part 2 of Lemma~\ref{compact_dual_space}), and the differentiability of the indicator function
\begin{align*}
	\mathds{1}\left\{\frac{w^T \alpha}{1+t} \geq r(\alpha) \right\}	
\end{align*}
as a function of $t$ almost surely w.r.t. $(w,B) \sim G$, which is implied by the continuity of the CDF of $w^T\alpha/r(\alpha)$, when $\alpha \sim F$ (Lemma~\ref{continuity_K}).

\begin{lemma}\label{compact_dual_space}
    For $\mu: \Theta \to \R_{\geq 0}$ and $(w,B) \in \Theta$:
    \begin{enumerate}[topsep = 0 cm]
    	\item $q^\mu(w, B, t)$ is convex as a function of $t$.
    	\item $\min_{t \geq 0} q^\mu(w,B,t) = \min_{t \in [0,\omega/B]} q^\mu(w,B,t)$
    \end{enumerate}
\end{lemma}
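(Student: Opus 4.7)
My plan is to tackle the two parts independently, both relying on structural observations rather than any second-derivative computation on the closed form.

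For part 1, I would return to the unreduced Lagrangian representation given earlier in Section~\ref{subsection_strong_duality}, namely
\[
q^\mu(w,B,t) = \sup_{b: A \to \R_{\geq 0}} \mathbb{E}_{\alpha,\{\theta_i\}_{i=1}^{n-1}}\bigl[(w^T\alpha - (1+t)b(\alpha))\,\mathds{1}\{b(\alpha) \geq \max(r(\alpha),\{\beta^\mu(\theta_i,\alpha)\}_i)\}\bigr] + tB.
\]
For each fixed measurable bidding strategy $b(\cdot)$, the expression inside the supremum is affine in $t$: the indicator and the $w^T\alpha$-contribution of the expectation do not depend on $t$, while the $-(1+t)b(\alpha)$ term and $tB$ contribute only linearly in $t$. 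A pointwise supremum of affine functions is convex, so $q^\mu(w,B,\cdot)$ is convex in $t$. (Equivalently, one can directly differentiate the closed form and check $\partial^2 q^\mu / \partial t^2 \ge 0$, but the Lagrangian approach is cleaner.)

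For part 2, the strategy is to bound $q^\mu(w,B,t)$ below by a quantity that grows linearly in $t$, bound $q^\mu(w,B,0)$ above by $\omega$, and conclude that no $t > \omega/B$ can be a minimizer. Using the simplified closed form, since $H^\mu_\alpha(s) \in [0,1]$ and the integrand of the expectation is non-negative whenever the indicator is $1$, the first term of $q^\mu(w,B,t)$ is non-negative for every $t$, giving $q^\mu(w,B,t) \geq tB$. At $t=0$, the same expectation is bounded above by $\mathbb{E}_\alpha[w^T\alpha]$, which in turn is bounded above by $\omega$ (the essential supremum of $w^T\alpha$ on $\Theta \times A$, guaranteed to be finite by the boundedness assumptions on $\Theta$ and $A$), so $q^\mu(w,B,0) \leq \omega$. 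Hence for any $t > \omega/B$ we obtain $q^\mu(w,B,t) \geq tB > \omega \geq q^\mu(w,B,0)$, so such a $t$ cannot be optimal. Combining this with the continuity of $q^\mu(w,B,\cdot)$ on the compact set $[0, \omega/B]$, the minimum over $[0, \infty)$ is attained and coincides with the minimum over $[0, \omega/B]$.

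There is no significant obstacle here: part 1 is just the standard property that dual functions of primal maximizations are convex in the multiplier, and part 2 is a coercivity argument that exploits the boundedness of both weights and items through $\omega$. The only care needed is to make sure the non-negativity of $q^\mu - tB$ follows cleanly from the closed-form expression, which is immediate once the indicator is accounted for.
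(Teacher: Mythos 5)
Your proposal is correct and follows essentially the same route as the paper: part 1 is the standard ``dual function is a supremum of affine functions of the multiplier'' argument (which the paper states in one line), and part 2 is exactly the paper's coercivity bound $q^\mu(w,B,t) \geq tB > \omega \geq q^\mu(w,B,0)$ for $t > \omega/B$, using $0 \leq \mathds{1}\{w^T\alpha/(1+t) \geq r(\alpha)\}\int_{r(\alpha)}^{w^T\alpha/(1+t)} H^\mu_\alpha(s)\,ds \leq \omega$. No gaps.
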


\begin{proof}\
	\begin{enumerate}[topsep = 0 cm]
    	\item The objective function of the dual problem of a maximization problem is convex.
    	\item As $H^\mu_\alpha(s) \leq 1$ for all $\alpha \in A$ and $s \in \mathbb{R}$, the following inequalities hold
		    \begin{align*}
		        0 \leq \mathds{1}\left\{\frac{w^T \alpha}{1+t} \geq r(\alpha) \right\} \int_{r(\alpha)}^{\frac{w^T\alpha}{1+t}} H^\mu_\alpha(s) ds \leq \omega \qquad \forall\ t\geq 0, \alpha \in A
		    \end{align*}
		
		    If $t > \omega/B$, then, $q^\mu(w,B,t) \geq tB > \omega \geq q^\mu(w, B, 0)$. Hence, $q^\mu(w,B,t)$, as a function of $t$, has its minimum in the interval $[0, \omega/B]$.
    \end{enumerate}	
\end{proof}

Let $K$ be the distribution of $\alpha/r(\alpha)$ when $\alpha \sim F$, assuming $1/r(\alpha) = 1$ when $r(\alpha) = 0$. For $w \in \Theta_w$, let $K_w$ be the distribution of $w^T\gamma$ when $\gamma \sim K$, i.e., $K_w(\mathcal{B}) \coloneqq K(\{\gamma \mid w^T\gamma \in \mathcal{B}\})$ for all Borel sets $\mathcal{B} \subset \mathbb{R}$.

\begin{lemma}\label{continuity_K}
    $K_w$ has a continuous CDF almost surely w.r.t. $w \sim G_w$.
\end{lemma}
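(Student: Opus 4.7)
The plan is to follow the same two-step strategy used in the proof of part (a) of Lemma~\ref{pacing_strategy_properties}, with the roles of $F$ and $G_w$ interchanged. The key enabling observation is that the map $\alpha\mapsto\alpha/r(\alpha)$ is a positive rescaling (since $r(\alpha)>0$ on $A$) and therefore direction-preserving. Consequently, for any cone $C\subseteq\R^d$ we have $K(C)=F(\{\alpha\in A:\alpha/r(\alpha)\in C\})=F(C\cap A)$, because $\alpha/r(\alpha)$ lies in a cone if and only if $\alpha$ does.

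The first step is to prove an intersection bound analogous to the one in Lemma~\ref{pacing_strategy_properties}(a): for any linearly independent $w_1,w_2\in\R^d$ and scalars $c_1,c_2$, the set $V=\{\gamma:w_1^T\gamma=c_1,\ w_2^T\gamma=c_2\}$ satisfies $K(V)=0$. Indeed, $V$ is a $(d-2)$-dimensional affine subspace of $\R^d$, hence $\operatorname{cone}(V)$ is contained in a $(d-1)$-dimensional linear subspace (the span of any offset vector together with the linear part of $V$) and therefore has Lebesgue measure zero. The direction-preservation identity together with the density of $F$ yields $K(V)\le K(\operatorname{cone}(V))=F(\operatorname{cone}(V)\cap A)=0$.

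The second step is to argue that only countably many unit directions carry atoms. Set $J\coloneqq\{w/\|w\|:w\in\Theta_w,\ K_w\text{ has an atom}\}$. For each $\hat w\in J$, choose one atom $c_{\hat w}$ of $K_{\hat w}$ and let $y_{\hat w}\coloneqq K_{\hat w}(\{c_{\hat w}\})>0$. If $J$ were uncountable, Lemma~\ref{lemma_infinite_sequence} would produce a countable subsequence $\{\hat w_n\}\subset J$ with $\sum_n y_{\hat w_n}=\infty$. Distinct unit vectors in $\R_+^d$ are not scalar multiples of one another and so are linearly independent; by the intersection bound above, the hyperplane slices $S_n\coloneqq\{\gamma:\hat w_n^T\gamma=c_{\hat w_n}\}$ are pairwise $K$-almost disjoint. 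Countable additivity would then give $K(\bigcup_n S_n)=\sum_n K(S_n)=\sum_n y_{\hat w_n}=\infty$, contradicting that $K$ is a probability measure. Hence $J$ is countable.

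Finally, $\{w\in\Theta_w:K_w\text{ has an atom}\}\subseteq\operatorname{cone}(J)$ is a countable union of rays in $\R^d$, hence Lebesgue-null since $d\ge 2$; because $G_w$ has a density, this set has $G_w$-measure zero, completing the proof. The main point requiring care in executing this plan is the direction-preservation identity $K(C)=F(C\cap A)$ for cones $C$, which is precisely what allows the density of $F$ to be transferred into the nullity of $K$ on low-dimensional cones; everything else is a close mirror of the argument in Lemma~\ref{pacing_strategy_properties}(a).
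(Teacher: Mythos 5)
Your proposal is correct and follows essentially the same route as the paper's proof, which likewise mirrors Lemma~\ref{pacing_strategy_properties}(a) with the roles of $F$ and $G_w$ swapped: the direction-preservation identity $K(C)=F(C\cap A)$ for cones $C$ is exactly the paper's observation that $\alpha/r(\alpha)$ and $\alpha$ have the same image under $x\mapsto x/\|x\|$, and the countability-of-atomic-directions argument via Lemma~\ref{lemma_infinite_sequence} and pairwise almost-disjointness is identical. No gaps.
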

\begin{proof}

    Let $w_1, w_2 \in \Theta_w$ be linearly independent weight vectors and $x_1, x_2 \in \R_{\geq 0}$. We consider the set of items $\alpha$ which satisfy $w_1^T\alpha/r(\alpha) = x_1$ and $w_2^T\alpha/r(\alpha) = x_2$. Define
	\begin{align*}
		S \coloneqq \left\{ \alpha \in A \bigg\lvert\ \frac{w_1^T\alpha}{r(\alpha)} = x_1;\ \frac{w_2^T\alpha}{r(\alpha)} = x_2 \right\}
	\end{align*}
	
	Observe that, for $\alpha \in S$ and $c \coloneqq x_1/x_2$, we have $w_1^T\alpha = c \cdot w_2^T\alpha$. Therefore, the set $T = \{\alpha \in A \mid (w_1 - c w_2)^T\alpha = 0\}$ is a superset of the set $S$. Hence, since $F$ has a density, we get $F(S) = 0$.

   Define $J = \left\{w/\|w\|\ \Big \lvert\ \exists\ x_w> 0 \textrm{ s.t. } F(w^T\alpha/r(\alpha) = x_w) >0)\right\}$. Suppose $J$ is uncountable. Then, by Lemma~\ref{lemma_infinite_sequence}, there exists a countable sequence $\{w_m\}_{m \in \mathbb{N}}$ and $\{x_{w_m}\}_{m \in \mathbb{N}}$ such that $w_i/\|w_i\| \neq w_j/\|w_j\|$ for all $i \neq j$ and
    \begin{align*}
        \sum_{m} F(w_m^T\alpha/r(\alpha) = x_{w_m}) = \infty.
    \end{align*}

   Set $S_m \coloneqq \{\alpha \mid w_m^T \alpha/r(\alpha) = x_{w_m}\}$. We have shown above that $F(S_i \cap S_j) = 0$ for all $i \neq j$. Therefore, for all $m \geq 1$, we have $F(S_m \cap\ (\cup_{j < m} S_j)) = 0$, which implies $F(S_m \cap\ (\cup_{j < m} S_j)^C) = F(S_m)$. This contradicts $F(\cup_m S_m) \leq 1$ as $F(\cup_m S_m) = \sum_m F(S_m \cap\ (\cup_{j < m} S_j)^C) = \sum_{m} F(\alpha_m^T s = x_{\alpha_m}) = \infty$. Hence, $J$ is countable. Observe that
    \begin{align*}
        \left\{\frac{w}{\|w\|}\ \biggr\lvert\ \textrm{ $K_w$ has an atom}\right\} \subset J
    \end{align*}
    As $G_w$ has a density, we get $G_w(\textrm{cone}(J)) = 0$. Therefore, $K_w$ has no atoms almost surely w.r.t. $w \sim G_w$, i.e., $K_w$ has a continuous CDF almost surely w.r.t. $w \sim G_w$.
\end{proof}

\begin{definition}\label{definition_theta'}
	Define $\Theta' \subset \Theta$ to be the set of $(w,B) \in \Theta$ for which $K_w$ has a continuous CDF.
\end{definition}

The following lemma establishes differentiability of the dual objective function.

\begin{lemma}\label{differentiability_dual}
    For all  pacing functions $\mu: \Theta \to \R_{\geq 0}$ and buyer types $(w,B) \in \Theta'$, the dual objective $q^\mu(w,B,t)$ is differentiable as a function of $t$ for $t >-1/2$. Moreover,
    \begin{align*}
        \frac{\partial q^\mu(w,B,t)}{\partial t} = B - \mathbb{E}_\alpha\left[ \sigma_\alpha^\mu \left(\frac{w^T \alpha}{1+t}\right) H_\alpha^\mu \left(\frac{w^T \alpha}{1+t}\right) \mathds{1} \left\{\frac{w^T \alpha}{1+t} \geq r(\alpha) \right\} \right]
    \end{align*}
\end{lemma}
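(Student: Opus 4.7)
The plan is to differentiate under the expectation sign and then simplify using the definition of $\sigma_\alpha^\mu$. Writing $x(t, \alpha) := w^T\alpha/(1+t)$ and
\[
g(t, \alpha) := \mathds{1}\{x(t,\alpha) \geq r(\alpha)\}\int_{r(\alpha)}^{x(t,\alpha)} H_\alpha^\mu(s)\,ds,
\]
we have $q^\mu(w, B, t) = (1+t)\,\mathbb{E}_\alpha[g(t, \alpha)] + tB$, so the only nontrivial task is to differentiate $(1+t)\,\mathbb{E}_\alpha[g(t,\alpha)]$. Pointwise in $\alpha$, the map $t \mapsto g(t, \alpha)$ is differentiable whenever $x(t, \alpha) \neq r(\alpha)$: on $\{x(t,\alpha) > r(\alpha)\}$ the indicator is locally constant and a direct calculation gives $\partial_t g(t,\alpha) = -H_\alpha^\mu(x(t,\alpha))\,x(t,\alpha)/(1+t)$, while on $\{x(t, \alpha) < r(\alpha)\}$ we get $0$. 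The boundary set $\{\alpha : w^T\alpha = (1+t) r(\alpha)\}$ is, via the substitution $\gamma = \alpha/r(\alpha) \sim K$, precisely $\{w^T\gamma = 1+t\}$; because $(w,B) \in \Theta'$ the distribution $K_w$ has a continuous CDF, so this set is $F$-null. Thus the pointwise derivative exists $F$-a.s.

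Next I would justify the exchange of derivative and expectation by a dominated convergence argument applied to difference quotients. Fix $t > -1/2$ and pick $\delta > 0$ with $(t-\delta, t+\delta) \subset (-1/2, \infty)$; on this neighborhood $1+s > 1/2$, so $x(s,\alpha) \leq 2 w^T\alpha$ is uniformly bounded in $s$ and $\alpha$ (recall $w \in (0,U)^d$ and $\alpha$ lies in a bounded set). Using $H_\alpha^\mu \in [0,1]$, the integrand $(1+s)g(s,\alpha)$ is uniformly bounded, and a short computation shows the difference quotients $[(1+s+h)g(s+h,\alpha) - (1+s)g(s,\alpha)]/h$ are dominated by a constant (independent of $h$ and $\alpha$) for all small $h$. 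Dominated convergence then yields
\[
\frac{\partial q^\mu(w,B,t)}{\partial t} = B + \mathbb{E}_\alpha\!\left[g(t,\alpha) - x(t,\alpha)\,H_\alpha^\mu(x(t,\alpha))\,\mathds{1}\{x(t,\alpha) \geq r(\alpha)\}\right].
\]

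Finally, I would invoke the definition of $\sigma_\alpha^\mu$: for $x \geq r(\alpha)$, the relation $\sigma_\alpha^\mu(x) H_\alpha^\mu(x) = x H_\alpha^\mu(x) - \int_{r(\alpha)}^x H_\alpha^\mu(s)\,ds$ holds (and remains valid when $H_\alpha^\mu(x) = 0$, since then $H_\alpha^\mu \equiv 0$ on $[r(\alpha), x]$ and both sides vanish by the convention $\sigma_\alpha^\mu(x) = 0$). Substituting this into the displayed expression collapses the bracketed term to $-\mathds{1}\{x(t,\alpha) \geq r(\alpha)\}\sigma_\alpha^\mu(x(t,\alpha)) H_\alpha^\mu(x(t,\alpha))$, which yields the claimed formula. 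The main obstacle is the indicator function: its discontinuity could in principle contribute a singular term in the derivative, and the restriction $(w,B) \in \Theta'$ is exactly what rules this out by forcing the jump locus to be $F$-null.
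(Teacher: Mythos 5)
Your proof is correct and shares the paper's overall skeleton -- the same function $g(t,\alpha)$, the same identification of the problematic boundary set $\{\alpha : w^T\alpha = (1+t)r(\alpha)\}$ as $F$-null via $K_w$ and the definition of $\Theta'$, and the same final algebra relating $g - xH^\mu_\alpha(x)\mathds{1}\{x\ge r\}$ to $-\sigma^\mu_\alpha(x)H^\mu_\alpha(x)\mathds{1}\{x\ge r\}$. Where you differ is in how you justify interchanging $\partial_t$ and $\mathbb{E}_\alpha$: the paper observes that $t\mapsto g(t,\alpha)$ is convex (as a non-decreasing convex function of the convex map $t\mapsto w^T\alpha/(1+t)$) and invokes Theorem 7.46 of Shapiro et al.\ on differentiability of expectations of convex random functions, whereas you run dominated convergence directly on the difference quotients, using that $x\mapsto \mathds{1}\{x\ge r\}\int_r^x H^\mu_\alpha$ is $1$-Lipschitz and $t\mapsto x(t,\alpha)$ has derivative bounded by $4\omega$ on a neighborhood where $1+t\ge 1/2$, so the quotients are uniformly bounded. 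Your route is more elementary and self-contained; the paper's is shorter once the convexity observation is made and generalizes more readily to settings without a uniform Lipschitz bound. One small point you should make explicit: the pointwise computation $\partial_t g(t,\alpha)=-H^\mu_\alpha(x)\,x/(1+t)$ on $\{x>r(\alpha)\}$ uses the fundamental theorem of calculus at $x=x(t,\alpha)$, which requires $H^\mu_\alpha$ to be continuous at that point (at an atom of $H^\mu_\alpha$ the left and right derivatives of $x\mapsto\int_r^x H^\mu_\alpha$ differ). This holds for $F$-a.e.\ $\alpha$ by part (a) of Lemma~\ref{pacing_strategy_properties}, so it costs nothing, but it is a second null set beyond the reserve-price boundary and deserves a citation.
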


\begin{proof}
	Fix a pacing function $\mu: \Theta \to \R_{\geq 0}$ and a buyer $(w,B) \in \Theta'$. Define
    \begin{align*}
        g(t,\alpha) &\coloneqq \mathds{1}\left\{\frac{w^T \alpha}{1+t} \geq r(\alpha) \right\}\int_{r(\alpha)}^{\frac{w^T\alpha}{1+t}} H^\mu_\alpha(s) ds \qquad \forall\ t > -1/2, \alpha \in A
    \end{align*}

    Note that $x \mapsto \mathds{1}(x \geq r(\alpha)) \int_{r(\alpha)}^x H^\mu_\alpha(s) ds$ is a non-decreasing convex function because $H^\mu_\alpha$ is non-decreasing. Moreover, it is easy to verify using the second order sufficient condition that $t \mapsto \frac{w^T\alpha}{1+t}$ is convex. As $t \mapsto g(t ,\alpha)$ is a composition of these aforementioned functions, it is convex for each $\alpha$.

    Fix $t_0 > -1/2$. Using Lemma~\ref{continuity_K} and the definition of $\Theta'$, we can write
    \begin{align*}
        F\left(\left\{ \alpha \in A \mid \frac{w^T\alpha}{1+t_0} = r(\alpha) \right\}\right) &= F\left(\left\{ \alpha \in A \mid r(\alpha) > 0; \frac{w^T\alpha}{r(\alpha)} = 1+t_0 \right\}\right) + F\left(\left\{ \alpha \in A \mid r(\alpha) = 0; w^T\alpha = 0 \right\}\right)\\
         &\leq K\left(\left\{ \gamma \mid w^T\gamma = 1+t_0 \right\}\right) + F\left(\left\{ \alpha \in A \mid w^T\alpha = 0 \right\}\right) \\
         &= K_w(1+t_0) + 0 \\
         &= 0
    \end{align*}

    Using Theorem 7.46 of \cite{shapiro2009lectures}, we get that $\mathbb{E}_\alpha[g(t,\alpha)]$ is differentiable w.r.t $t$ at $t_0$ and
    \begin{align*}
        \frac{\partial}{\partial t} \mathbb{E}_\alpha[g(t_0,\alpha)] = \mathbb{E}_\alpha\left[\frac{\partial g(t_0,\alpha)}{\partial t}\right].
    \end{align*}

    Therefore, the dual objective $q^\mu(w,B,t)$ is differentiable as a function of $t$ for $t >-1/2$, and
    \begin{align*}
        \frac{\partial q^\mu(w, B, t_0)}{\partial t} &= \mathbb{E}_{\alpha}\left[ g(t_0,\alpha) \right] + (1+t) \frac{\partial}{\partial t} \mathbb{E}_\alpha[g(t_0,\alpha)] + B\\
        &= \mathbb{E}_{\alpha}\left[ g(t_0,\alpha) \right] + (1+t) \mathbb{E}_\alpha\left[\frac{\partial g(t_0,\alpha)}{\partial t}\right] + B\\
        &= \mathbb{E}_\alpha \left[\mathds{1}\left\{\frac{w^T \alpha}{1+t_0} \geq r(\alpha) \right\}\int_{r(\alpha)}^{\frac{w^T\alpha}{1+t_0}} H^\mu_\alpha(s) ds\right]\\
         &\ + (1+t)\mathbb{E}_\alpha \left[\frac{-w^T\alpha}{(1+t_0)^2} H^\mu_\alpha\left(\frac{w^T \alpha}{1+t_0}\right) \mathds{1}\left\{\frac{w^T \alpha}{1+t_0} \geq r(\alpha) \right\}\right] + B\\
        &= B - \mathbb{E}_\alpha \left[\left\{\frac{w^T\alpha}{1+t_0} H^\mu_\alpha\left(\frac{w^T\alpha}{1+t_0}\right) - \int_{r(\alpha)}^{\frac{w^T\alpha}{1+t_0}} H^\mu_\alpha(s) ds\right\}\mathds{1}\left\{\frac{w^T \alpha}{1+t_0} \geq r(\alpha) \right\} \right]
    \end{align*}
\end{proof}

\begin{corollary}\label{continuity_dual}
	For all  pacing functions $\mu: \Theta \to \R_{\geq 0}$ and buyer types $(w,B) \in \Theta'$, $q^\mu(w,B,t)$ is continuous as a function of $t$ for $t$ for $t >-1/2$.
\end{corollary}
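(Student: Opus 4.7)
The plan is to derive this as an immediate consequence of Lemma~\ref{differentiability_dual}. That lemma already establishes that for every pacing function $\mu$ and every buyer type $(w,B) \in \Theta'$, the map $t \mapsto q^\mu(w,B,t)$ is differentiable on the open set $\{t : t > -1/2\}$. Since differentiability at a point implies continuity at that point (a standard fact from real analysis), differentiability on the whole interval $(-1/2, \infty)$ yields continuity on that interval.

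Concretely, I would proceed in one short step: fix $\mu : \Theta \to \R_{\geq 0}$ and $(w,B) \in \Theta'$; invoke Lemma~\ref{differentiability_dual} to conclude that $q^\mu(w,B,\cdot)$ is differentiable on $(-1/2,\infty)$; then note that any function differentiable at $t_0$ satisfies $\lim_{t \to t_0} q^\mu(w,B,t) = q^\mu(w,B,t_0)$, which is the definition of continuity at $t_0$. Since $t_0 \in (-1/2,\infty)$ was arbitrary, continuity on the full interval follows.

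There is no real obstacle here, since all of the substantive work (differentiating under the expectation via Theorem~7.46 of Shapiro et al.\ and verifying that the event $\{w^T\alpha/(1+t) = r(\alpha)\}$ has $F$-measure zero via Lemma~\ref{continuity_K} and the definition of $\Theta'$) has already been carried out inside the proof of Lemma~\ref{differentiability_dual}. The corollary is included primarily to make explicit the continuity property, which is what is actually used later (e.g., in the argument that $\ell^\mu$ is well-defined, since one needs $q^\mu(w,B,\cdot)$ to attain its minimum on the compact set $[0,\omega/B_{\min}]$).
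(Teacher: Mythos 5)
Your proposal is correct and matches the paper exactly: the paper also derives Corollary~\ref{continuity_dual} as an immediate consequence of the differentiability established in Lemma~\ref{differentiability_dual}, since differentiability on $(-1/2,\infty)$ implies continuity there. Nothing further is needed.
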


\begin{corollary} \label{existence_dual_opt}
    For all  pacing functions $\mu: \Theta \to \R_{\geq 0}$ and buyer types $(w,B) \in \Theta'$, $arg\min_{t \in [0,\omega/B]} q^\mu(w,B,t)$ is non-empty and compact.
\end{corollary}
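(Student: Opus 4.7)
The plan is to deduce this corollary as a direct consequence of the continuity of $q^\mu(w,B,\cdot)$ established in Corollary~\ref{continuity_dual}, together with the compactness of the feasible interval $[0,\omega/B]$. The two required assertions, non-emptiness and compactness of the argmin, are the standard conclusions of the extreme value theorem and the closedness of level sets of continuous functions, respectively.

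First, I would fix an arbitrary pacing function $\mu:\Theta\to\R_{\geq 0}$ and buyer type $(w,B)\in\Theta'$. By Corollary~\ref{continuity_dual}, the map $t\mapsto q^\mu(w,B,t)$ is continuous on $(-1/2,\infty)$, and in particular on the compact interval $[0,\omega/B]\subset(-1/2,\infty)$ (recall that $B\geq B_{\min}>0$). The extreme value theorem then guarantees that $q^\mu(w,B,\cdot)$ attains its infimum on $[0,\omega/B]$, so that $\arg\min_{t\in[0,\omega/B]} q^\mu(w,B,t)$ is non-empty.

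For compactness, let $m^* := \min_{t\in[0,\omega/B]} q^\mu(w,B,t)$. Then
\begin{align*}
	\arg\min_{t \in [0,\omega/B]} q^\mu(w,B,t) \;=\; \{t\in[0,\omega/B] : q^\mu(w,B,t) \leq m^*\},
\end{align*}
which is the preimage of the closed set $(-\infty,m^*]$ under the continuous map $q^\mu(w,B,\cdot)$ intersected with the closed interval $[0,\omega/B]$. Hence it is closed, and being a subset of the bounded interval $[0,\omega/B]$, it is bounded. By the Heine--Borel theorem, it is compact, completing the proof.

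No step here is an obstacle; the real work has already been done in Lemma~\ref{differentiability_dual} and Corollary~\ref{continuity_dual}, where the continuity (indeed differentiability) of $q^\mu(w,B,\cdot)$ on $(-1/2,\infty)$ was established for $(w,B)\in\Theta'$. The only mild subtlety worth flagging is that continuity must be invoked on a neighborhood of $[0,\omega/B]$ in order to cover the endpoints; this is why it is useful that Lemma~\ref{differentiability_dual} gives differentiability on the open set $(-1/2,\infty)$ rather than merely on $[0,\infty)$.
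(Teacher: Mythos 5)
Your proof is correct and follows essentially the same route as the paper, which simply cites the Weierstrass (extreme value) theorem for this corollary; you have just spelled out the standard details (non-emptiness from continuity on the compact interval $[0,\omega/B]$, compactness of the argmin as a closed level set of a continuous function inside a compact set). Nothing further is needed.
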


Corollary~\ref{continuity_dual} is a direct consequence of Lemma~\ref{differentiability_dual} and Corollary~\ref{existence_dual_opt} follows from Weierstrass Theorem. Finally, having established the required lemmas, we are ready to prove Proposition~\ref{optimal_solution}.

\begin{proof}[Proof of Proposition~\ref{optimal_solution}]
	Let $t^* \in \argmin_{t \in [0,\omega/B]} q^\mu(w,B,t)$. According to Theorem 5.1.5 from \cite{bertsekas1998nonlinear}, in order to prove Proposition~\ref{optimal_solution}, it suffices to show the following conditions:
    \begin{itemize}
        \item[(i)] Primal feasibility:
        $$\mathbb{E}_{\alpha,\{\theta_i\}_{i=1}^{n-1}}\left[ \sigma^\mu_\alpha\left(\frac{w^T \alpha}{1+t^*}\right)\ \mathds{1}\left\{\sigma^\mu_\alpha\left(\frac{w^T \alpha}{1+t^*}\right) \geq \max\left(r(\alpha), \{\beta^\mu(\theta_i, \alpha)\}_i\right) \right\}\right] \leq B$$
        \item[(ii)] Dual feasibility: $t^* \geq 0$
        \item[(iii)] Lagrangian Optimality:
        $$\sigma^\mu_\alpha\left(\frac{w^T \alpha}{1+t^*}\right) \in \argmax_{b(.)}\ \mathbb{E}_{\alpha, \{\theta_i\}_{i=1}^{n-1}}\left[ (w^T \alpha - (1+t)b(\alpha))\ \mathds{1}\{b(\alpha) \geq \max(r(\alpha), \{\beta^\mu(\theta_i, \alpha)\}_i)\}\right] + t B$$
        \item[(iv)] Complementary slackness:
        $$t^*.\left\{B - \mathbb{E}_{\alpha,\{\theta_i\}_{i=1}^{n-1}}\left[ \sigma^\mu_\alpha\left(\frac{w^T \alpha}{1+t^*}\right)\ \mathds{1}\left\{\sigma^\mu_\alpha\left(\frac{w^T \alpha}{1+t^*}\right) \geq \max\left(r(\alpha), \{\beta^\mu(\theta_i, \alpha)\}_i\right) \right\}\right]\right\} = 0$$
    \end{itemize}

    First, we simplify the expression for the expected expenditure used in the sufficient conditions (i)-(iv) stated above:
    \begin{align*}
        &\mathbb{E}_{\alpha,\{(\theta_i)\}_{i=1}^{n-1}}\left[ \sigma^\mu_\alpha\left(\frac{w^T \alpha}{1+t^*}\right)\ \mathds{1}\left\{\sigma^\mu_\alpha\left(\frac{w^T \alpha}{1+t^*}\right) \geq \max\left(r(\alpha), \{\beta^\mu(\theta_i, \alpha)\}_i\right) \right\}\right]\\
        = &\mathbb{E}_{\alpha,\{(w_i,B_i)\}_{i=1}^{n-1}}\left[ \sigma^\mu_\alpha\left(\frac{w^T \alpha}{1+t^*}\right)\ \mathds{1}\left\{\sigma^\mu_\alpha\left(\frac{w^T \alpha}{1+t^*}\right) \geq \max\left( r(\alpha),  \left\{\sigma^\mu_\alpha\left(\frac{w_i^T \alpha}{1+\mu(w_i,B_i)}\right)\right\}_i \right) \right\} \right]\\
        = &\mathbb{E}_{\alpha,\{(w_i,B_i)\}_{i=1}^{n-1}}\left[ \sigma^\mu_\alpha\left(\frac{w^T \alpha}{1+t^*}\right)\ \mathds{1}\left\{\frac{w^T \alpha}{1+t^*} \geq \max\left(r(\alpha), \left\{ \frac{w_i^T \alpha}{1+\mu(w_i,B_i)}\right\}_i \right) \right\}\right]\\
        = &\mathbb{E}_\alpha\left[\sigma^\mu_\alpha\left(\frac{w^T \alpha}{1+t^*}\right)H^\mu_\alpha\left(\frac{w^T \alpha}{1+t^*}\right) \mathds{1}\left\{\frac{w^T \alpha}{1+t^*} \geq r(\alpha) \right\}\right]
    \end{align*}

    In the rest of the proof, we establish the aforementioned sufficient conditions (i)-(iv). Note that $t^*$ satisfies the following first order conditions of optimality
		\begin{equation}\label{dual_optimality_conditions}
		    \frac{\partial q^\mu(w, B, t^*)}{\partial t} \geq 0 \qquad t^* \geq 0 \qquad t^*\cdot \frac{\partial q^\mu(w,B,t^*)}{\partial t} = 0
		\end{equation}

	Using Lemma~\ref{differentiability_dual}, we can write
		\begin{align*}
		    \frac{\partial q^\mu(w, B, t^*)}{\partial t} = B - \mathbb{E}_\alpha\left[\sigma^\mu_\alpha\left(\frac{w^T \alpha}{1+t^*}\right)H^\mu_\alpha\left(\frac{w^T \alpha}{1+t^*}\right)\mathds{1}\left\{\frac{w^T \alpha}{1+t^*} \geq r(\alpha) \right\}\right]
		\end{align*}

    To establish the sufficient conditions (i)-(iv), observe that (after simplification) conditions (i), (ii) and (iv) are the same as (\ref{dual_optimality_conditions}), and condition (iii) is a direct consequence of Lemma~\ref{lagrangian_optimal}, thereby completing the proof of Proposition~\ref{optimal_solution}.
\end{proof}

\subsection{Fixed Point Argument}

\begin{proof}[Proof of Lemma~\ref{monotonicity_pacing_function}]\
	\begin{enumerate}
		\item First, observe that
			    \begin{align*}
			        q^\mu(w, B, t) &= \mathbb{E}_\alpha \left[ \mathds{1}\left\{\frac{w^T \alpha}{1+t} \geq r(\alpha) \right\} \int_{r(\alpha)}^{\frac{w^T\alpha}{1+t}} (1+t) H_\alpha^\mu(s) ds + tB \right]\\
			        &= \mathbb{E}_\alpha \left[ \mathds{1}\left\{\frac{w^T \alpha}{1+t} \geq r(\alpha) \right\} \int_{(1+t)r(\alpha)}^{w^T\alpha} H_\alpha^\mu\left(\frac{y}{1+t}\right) dy + tB \right]
			    \end{align*}
			
			    Consider $(w^L,B), (w^H,B) \in \Theta'$ such that $w^L_i < w^H_i$ and $w^L_{-i} = w^H_{-i}$, for some $i \in [d]$. Moreover, consider $t^L, t^H \in [0, \omega/ B_{\min}]$ such that $t^L < t^H$. As $H_\alpha^\mu$ is a non-decreasing function, it is straightforward to check that $-q^\mu(w, B, t)$ has increasing differences w.r.t. $w_i$ and $t$:
			    \begin{align*}
			        q^\mu(w^H, B, t^L) - q^\mu(w^L, B, t^L) \geq q^\mu(w^H, B, t^H) - q^\mu(w^L, B, t^H)
			    \end{align*}
			
			    Theorem 10.7 of \cite{sundaram1996first} in combination with the definition of $\ell^\mu$ imply $\ell^\mu(w^H, B) \geq \ell^\mu(w^L, B)$.
		\item Consider $(w,B^L), (w,B^H) \in \Theta'$ such that $B^L < B^H$ and and $t^L, t^H \in [0, \omega/ B_{\min}]$ such that $t^L < t^H$. Then, $-q^\mu(w, B, t)$ has increasing differences w.r.t. $-B$ and $t$:
			    \begin{align*}
			        q^\mu(w, B^H, t^H) - q^\mu(w, B^L, t^H) = (B^H - B^L)t^H \geq (B^H - B^L)t^L = q^\mu(w, B^H, t^L) - q^\mu(w, B^L, t^L)
			    \end{align*}
			
			    Theorem 10.7 of \cite{sundaram1996first} and the definition of $\ell^\mu$ imply $\ell^\mu(w, B^H) \leq \ell^\mu(w, B^L)$.
	\end{enumerate}
\end{proof}

\begin{proof}[Proof of Lemma~\ref{topology_motivating_properties}]\
	\begin{enumerate}
		\item Theorem 1 of \cite{idczak1994functions} implies measurability of $\ell^\mu$. Moreover, $\ell^\mu$ is bounded by definition.
		\item Consider $\phi \in C_c^1(\Theta, \mathbb{R}^n)$ such that $\|\phi\|_\infty \leq 1$. Then,
			    \begin{align*}
			        V(\ell^\mu, \Theta) &= \int_\Theta \ell^\mu(\theta) \textrm{\textbf{div}}\phi(\theta) d\theta\\
			        &= \sum_{i = 1}^{d+1} \int_\Theta \ell^\mu(\theta) \frac{\partial \phi(\theta)}{\partial \theta_i} d\theta\\
			        &= \sum_{i = 1}^{d+1} \int_{\theta_{-i}} \int_{\theta_i} \ell^\mu(\theta) \frac{\partial \phi(\theta)}{\partial \theta_i} d\theta_i d\theta_{-i}\\
			        &= \sum_{i = 1}^{d+1} \int_{\theta_{-i}} \int_{\theta_i} -\phi(\theta_i, \theta_{-i}) d\ell^\mu(\theta_i) d\theta_{-i}\\
			        &\leq \sum_{i = 1}^{d+1} \int_{\theta_{-i}} \int_{\theta_i}  d\ell^\mu(\theta_i) d\theta_{-i}\\
			        &\leq \sum_{i = 1}^{d+1} \int_{\theta_{-i}} \frac{\omega}{B_{\min}} d\theta_{-i}\\
			        &\leq (d+1) U^{d+1} \frac{\omega}{B_{\min}}
			    \end{align*}
			    where the third equality follows from Fubini's Theorem. The sufficient conditions for Fubini's Theorem to hold are satisfied because $| \ell^\mu \textrm{\textbf{div}}\phi |$ is bounded. Moreover, the fourth equality follows from the integration by parts for Lebesgue-Stieltjes integral and the fact that $\phi$ evaluates to $0$ at the boundaries of $\Theta$ because $\phi$ is compactly supported.\qedhere
	\end{enumerate}
\end{proof}

\begin{proof}[Proof of Lemma~\ref{topology_required_properties}]\
	We start by noting that, as $G$ has a density, if a sequence converges almost surely (or in $L_1$) under the Lebesgue measure on $\Theta$, then it converges almost surely (or in $L_1$) under $G$.
	\begin{enumerate}
		\item If $\mu(\theta) = 0$ for all $\theta \in \Theta$, then $\mu \in \X_0$. Hence, $\X_0$ is non-empty. Consider $\mu_1, \mu_2 \in \X_0$ and $a \in [0,1]$. Then, $a \mu_1 + (1 - a) \mu_2 \in [0, \omega/ B_{\min}]$ and for $\phi \in C_c^1(\Omega, \mathbb{R}^n)$ s.t. $\|\phi\|_\infty \leq 1$, we have
		    \begin{align*}
		        \int_{\Omega} \{a \mu_1 + (1-a) \mu_2\}(\theta) \textrm{\textbf{div}}\phi(\theta) d\theta &= a \int_{\Omega} \mu_1(\theta) \textrm{\textbf{div}}\phi(\theta) d\theta + (1 - a) \int_{\Omega} \mu_2(\theta) \textrm{\textbf{div}}\phi(\theta) d\theta \\
		        &\leq \frac{(d+1)U^{d+1} \omega}{B_{\min}}
		    \end{align*}
		
		    Hence, $\X_0$ is convex.\\
		
		    Consider a sequence $\{\mu_n\} \subset \X_0$ and $\mu \in L_1(\Theta)$ such that $\mu_n \xrightarrow{L_1} \mu$. Then, there exists a subsequence $\{n_k\}$ such that $\mu_{n_k} \xrightarrow{\textrm{a.s.}} \mu$ as $k \to \infty$. Hence, $\text{range}(\mu) \subset [0, \omega/ B_{\min}]$. Moreover, by the semi-continuity of total variation (Remark 3.5 of \cite{ambrosio2000functions}), we have
		    \begin{align*}
		        V(\mu, \Theta) \leq \liminf_{n \to \infty} V(\mu_n, \Theta) \leq (d+1) U^{d+1} \omega/ B_{\min}
		    \end{align*}
		    Therefore, $\X_0$ is closed. To see why $\X_0$ is compact, consider a sequence $\{\mu_n\} \subset \X_0$. Then, by Theorem 3.23 of \cite{ambrosio2000functions}, there exists a subsequence $\{n_k\}$ and $\mu \in BV(\Theta)$ such that $\mu_{n_k}$ converges to $\mu$ in the weak* topology, which implies convergence in $L_1(\Theta)$ (Proposition 3.13 of \cite{ambrosio2000functions}). Combining this with the fact that $\X_0$ is closed, completes the proof of compactness of $\X_0$.
		
		\item For contradiction, suppose $f$ is not continuous.  Then, there exists $\epsilon > 0$, a sequence $\{(\mu_n,\hat{\mu}_n)\}_n \subset \mathcal{X}_0 \times \mathcal{X}_0$ and $(\mu,\hat{\mu}) \in \mathcal{X}_0 \times \mathcal{X}_0$ such that $\lim_{n \to \infty} (\mu_n,\hat{\mu}_n) = (\mu,\hat{\mu})$ and $|f(\mu_n, \hat{\mu}_n) - f(\mu, \hat{\mu})| \geq \epsilon$ for all $n \in \mathbb{N}$. As $\mu_n \xrightarrow{L_1} \mu$, there exists a subsequence $\{n_k\}_k$ such that $\mu_{n_k} \xrightarrow{a.s.} \mu$ when $k \to \infty$. Moreover, $\hat{\mu}_n \xrightarrow{L_1} \hat{\mu}$ implies $\hat{\mu}_{n_k} \xrightarrow{L_1} \hat{\mu}$. Therefore, there exists a subsequence $\{n_{k_l}\}_l$ such that $\hat{\mu}_{n_{k_l}} \xrightarrow{a.s.} \hat{\mu}$ and $\mu_{n_{k_l}} \xrightarrow{a.s.} \mu$ as $l \to \infty$. Here, we have repeatedly used the fact that $L_1$ convergence implies the existence of a subsequence that converges a.s. Hence, after relabelling for ease of notation, we can write that there exists $\epsilon > 0$, a sequence $\{(\mu_n,\hat{\mu}_n)\}_n \subset \mathcal{X}_0 \times \mathcal{X}_0$ and $(\mu,\hat{\mu}) \in \mathcal{X}_0 \times \mathcal{X}_0$ such that $\mu_n \xrightarrow{a.s.} \mu$, $\hat{\mu}_n \xrightarrow{a.s.} \hat{\mu}$ and $|f(\mu_n, \hat{\mu}_n) - f(\mu, \hat{\mu})| \geq \epsilon$ for all $n \in \mathbb{N}$.
		
			First, observe that $\mu_n \xrightarrow{a.s.} \mu$ implies $w^T\alpha/(1+\mu_n(w,B)) \xrightarrow{a.s.} w^T\alpha/(1+\mu(w,B))$ and hence, $\lambda^{\mu_n}_\alpha \xrightarrow{d} \lambda^{\mu}_\alpha$ for all $\alpha \in A$. As $\lambda^\mu_\alpha$ is continuous almost surely w.r.t. $\alpha$, by the definition of convergence in distribution, we get that $\lim_{n \to \infty} \lambda^{\mu_n}_\alpha(s) = \lambda^{\mu}_\alpha(s)$ for all $s \in \mathbb{R}$ a.s. w.r.t. $\alpha \sim F$. Therefore, $\lim_{n \to \infty} H^{\mu_n}_\alpha(s) = H^{\mu}_\alpha(s)$ for all $s \in \mathbb{R}$, a.s. w.r.t. $\alpha \sim F$.
		
			Also, note that $\lambda^{\hat{\mu}_n}_\alpha$ and $\lambda^{\hat{\mu}}_\alpha$ are atom-less almost surely w.r.t. $\alpha$. Let $\bar{A} \subset A$ be the set of $\alpha$ such that $\lim_{n \to \infty} H^{\mu_n}_\alpha(s) = H^{\mu}_\alpha(s)$ for all $s \in \mathbb{R}$ and $\{\lambda^{\hat{\mu}_n}_\alpha$, $\lambda^{\hat{\mu}}_\alpha\}$ are atom-less. Therefore, $F(\bar{A}) = 1$.  For $s \in \mathbb{R}$ and $\alpha \in \bar{A}$, we get
		    \begin{align*}
		        \lim_{n \to \infty}  \mathds{1}\left\{ \frac{w^T\alpha}{1+\hat{\mu}_n(w,B)} \geq s \geq r(\alpha) \right\} = \mathds{1}\left\{\frac{w^T\alpha}{1+\hat{\mu}(w,B)} \geq s \geq r(\alpha)\right\}
		    \end{align*}
		    a.s. w.r.t. $(w,B) \sim G$. Note that the set of measure zero on which the above equality doesn't hold may depend on $\alpha, s$.
		
		    Fix $s \in [0, \omega]$ and $\alpha \in \bar{A}$. Combining these a.s. convergence statements, we get
		    \begin{align*}
		        \lim_{n \to \infty}& (1+\hat{\mu}_n(w,B)) H_\alpha^{\mu_n}(s) \mathds{1}\left\{ \frac{w^T\alpha}{1+\hat{\mu}_n(w,B)} \geq s \geq r(\alpha) \right\}\\
		        = &(1+\hat{\mu}(w,B)) H_\alpha^\mu(s) \mathds{1}\left\{\frac{w^T\alpha}{1+\hat{\mu}(w,B)} \geq s \geq r(\alpha)\right\}
		    \end{align*}
		    a.s. w.r.t. $(w,B) \sim G$.
		
		    Furthermore, we can use the Dominated Convergence Theorem (as the sequence is bounded) to show
		    \begin{align*}
		        \lim_{n \to \infty} &\mathbb{E}_{(w,B)}\left[ (1+\hat{\mu}_n(w,B)) H_\alpha^{\mu_n}(s) \mathds{1}\left\{ \frac{w^T\alpha}{1+\hat{\mu}_n(w,B)} \geq s \geq r(\alpha) \right\} \right]\\
		        =& \mathbb{E}_{(w,B)}\left[ (1+\hat{\mu}(w,B)) H_\alpha^\mu(s) \mathds{1}\left\{ \frac{w^T\alpha}{1+\hat{\mu}(w,B)} \geq s \geq r(\alpha) \right\} \right]
		    \end{align*}

		    Keep $s \in [0,\omega]$ fixed and apply the Dominated Convergence Theorem for a second time to obtain,
		    \begin{align*}
		        \lim_{n \to \infty}& \mathbb{E}_\alpha \left[ \mathbb{E}_{(w,B)}\left[ (1+\hat{\mu}_n(w,B)) H_\alpha^{\mu_n}(s) \mathds{1}\left\{ \frac{w^T\alpha}{1+\hat{\mu}_n(w,B)} \geq s \geq r(\alpha) \right\} \right] \right]\\
		        =& \mathbb{E}_\alpha \left[ \mathbb{E}_{(w,B)}\left[ (1+\hat{\mu}(w,B)) H_\alpha^\mu(s) \mathds{1}\left\{ \frac{w^T\alpha}{1+\hat{\mu}(w,B)} \geq s \geq r(\alpha) \right\} \right] \right]
		    \end{align*}
		
		    Finally, apply the Dominated Convergence Theorem for the third time to obtain,
		    \begin{align*}
		        \lim_{n \to \infty}& \int_0^{\omega} \mathbb{E}_\alpha \left[ \mathbb{E}_{(w,B)}\left[ (1+\hat{\mu}_n(w,B)) H_\alpha^{\mu_n}(s) \mathds{1}\left\{ \frac{w^T\alpha}{1+\hat{\mu}_n(w,B)} \geq s \geq r(\alpha) \right\} \right] \right] ds\\
		        =& \int_0^{\omega} \mathbb{E}_\alpha \left[ \mathbb{E}_{(w,B)}\left[ (1+\hat{\mu}(w,B)) H_\alpha^\mu(s) \mathds{1}\left\{ \frac{w^T\alpha}{1+\hat{\mu}(w,B)} \geq s \geq r(\alpha) \right\} \right] \right] ds
		    \end{align*}
		
		    As we are dealing with non-negative random variables, we can apply Fubini's Theorem to rewrite the above statement as
		    \begin{align*}
		    	&\lim_{n \to \infty} \mathbb{E}_{(w,B)}\mathbb{E}_\alpha \left[ (1+\hat{\mu}_n(w,B)) \mathds{1}\left(\frac{w^T\alpha}{1+\hat{\mu}_n(w,B)} \geq r(\alpha) \right) \int_{r(\alpha)}^{\frac{w^T\alpha}{1+\hat{\mu}_n(w,B)}}  H_\alpha^{\mu_n}(s) ds \right]\\
		        =& \lim_{n \to \infty} \mathbb{E}_{(w,B)}\mathbb{E}_\alpha \left[ \int_0^{\omega} (1+\hat{\mu}_n(w,B)) H_\alpha^{\mu_n}(s) \mathds{1}\left\{ \frac{w^T\alpha}{1+\hat{\mu}_n(w,B)} \geq s \geq r(\alpha) \right\} ds \right]\\
		        =&\ \mathbb{E}_{(w,B)} \mathbb{E}_\alpha \left[ \int_0^{\omega} (1+\hat{\mu}(w,B)) H_\alpha^\mu(s) \mathds{1}\left\{ \frac{w^T\alpha}{1+\hat{\mu}(w,B)} \geq s \geq r(\alpha) \right\} ds \right]\\
		        =&\ \mathbb{E}_{(w,B)}\mathbb{E}_\alpha \left[ (1+\hat{\mu}(w,B)) \mathds{1}\left\{\frac{w^T\alpha}{1+\hat{\mu}(w,B)} \geq r(\alpha) \right\} \int_{r(\alpha)}^{\frac{w^T\alpha}{1+\hat{\mu}(w,B)}}  H_\alpha^{\mu}(s) ds \right]
		    \end{align*}
		
		    Moreover, applying Dominated Convergence Theorem to $\hat{\mu}_n \xrightarrow{a.s.} \hat{\mu}$ yields $\lim_{n \to \infty} E_{(w,B)}[\hat{\mu}_n(w,B)B] = E_{(w,B)}[\hat{\mu}(w,B)B]$. Together, the above statements imply
		     $\lim_{n \to \infty} f(\mu_n, \hat{\mu}_n) = f(\mu, \hat{\mu})$, which is a contradiction.
		
		\item Part (2) allows us to invoke the Berge Maximum Theorem (Theorem 17.31 of \cite{aliprantis2006infinite}), which implies that $C^*_0$ is upper hemi-continuous with non-empty and compact values. Next, we show that $C^*_0(\mu)$ is also convex. Fix $\mu \in \mathcal{X}$. Consider $\hat{\mu}_1, \hat{\mu}_2 \in C^*(\mu)$ and $\lambda \in [0,1]$. Then, by part (1) of Lemma~\ref{compact_dual_space}, we have
		    \begin{align*}
		        f\left(\mu, \lambda \hat{\mu}_1 + (1-\lambda) \hat{\mu}_2\right) &=  \mathbb{E}_{(w,B)} [q^\mu(w, B, \lambda \hat{\mu}_1(w,B) + (1-\lambda) \hat{\mu}_2(w,B))]\\
		        &\leq \lambda \mathbb{E}_{(w,B)} [q^\mu(w, B, \hat{\mu}_1(w,B))] + (1- \lambda) \mathbb{E}_{(w,B)} [q^\mu(w, B, \hat{\mu}_2(w,B))]\\
		        &= \lambda f(\mu, \hat{\mu}_1) + (1-\lambda) f(\mu, \hat{\mu}_2)
		    \end{align*}
		    Hence, $\lambda \hat{\mu}_1 + (1-\lambda) \hat{\mu}_2 \in C^*(\mu)$.\qedhere
			\end{enumerate}
\end{proof}

\begin{proof}[Proof of Lemma~\ref{combined_to_individual}]
	Recall that in, in Lemma~\ref{topology_motivating_properties}, we showed that $\ell^\mu \in \mathcal{X}_0$. Therefore, as $\mu \in C^*(\mu)$,
    \begin{align*}
        \mathbb{E}_{(w,B)} [q^\mu(w, B, \ell^\mu(w,B))] \geq \mathbb{E}_{(w,B)} [q^\mu(w, B, \mu(w,B))]
    \end{align*}

    On the other hand, by the definition of $\ell^\mu$, we get that
    \begin{align*}
        q^\mu(w, B, \ell^\mu(w,B)) \leq q^\mu(w, B, \mu(w,B)) \qquad \forall\ (w,B) \in \Theta
    \end{align*}
    Hence, combining the two statements yields $q^\mu(w, B, \ell^\mu(w,B)) = q^\mu(w, B, \mu(w,B))$ a.s. w.r.t. $(w,B) \sim G$, which completes the proof.
\end{proof}

\section{Standard Auctions and Revenue Equivalence}\label{appendix:rev-equiv}

In this section, we extend our results for first-price auctions to all anonymous standard auctions and establish revenue equivalence among them by proving Theorem \ref{standard_auctions_revenue_equivalence}. To do this, we will show that the dual of the optimization problem faced by each buyer type is identical for all anonymous standard auctions, by exploiting the structure of the Lagrangian problem and the known revenue equivalence results from the standard i.i.d. setting \citep{krishna2009auction}. This concurrence of the dual problems for all anonymous standard auctions allows us to directly apply Theorem~\ref{main_existence_result} to reduce the proof of Theorem~\ref{standard_auctions_revenue_equivalence} to showing strong duality for the optimization problem faced by the buyer types.

For buyer type $(w,B) \in \Theta$, we will use $R(w,B)$ to denote the following optimization problem:
\begin{align*}
	R^\mu(w,B) \coloneqq \max_{b:A \to \mathbb{R}_{\geq 0}} \quad &\mathbb{E}_{\alpha,\{\theta_i\}_{i=1}^{n-1}}\left[ w^T \alpha \cdot \mathds{1}\{b(\alpha) \geq \max(r(\alpha), \{\Psi^\mu(\theta_i, \alpha)\}_i) \} - M_\alpha(b(\alpha), \{\Psi^\mu(\theta_i,\alpha)\}_i)\  \right]\\
    \textrm{s.t.} \quad &\mathbb{E}_{\alpha,\{\theta_i\}_{i=1}^{n-1}}\left[M_\alpha\left(b(\alpha), \{\Psi^\mu(\theta_i, \alpha)\}_i\right) \right] \leq B	
\end{align*}

Then the dual optimization problem (or simply the dual problem) of $R^\mu(w,B)$ is given by
\begin{align*}
	\min_{t\geq 0} \max_{b:A \to \mathbb{R}_{\geq 0}}	\mathbb{E}_{\alpha,\{\theta_i\}_{i=1}^{n-1}}\left[ w^T \alpha \cdot \mathds{1}\{b(\alpha) \geq \max(r(\alpha), \{\Psi^\mu(\theta_i, \alpha)\}_i) - (1 + t)M_\alpha(b(\alpha), \{\Psi^\mu(\theta_i,\alpha)\}_i)\  \} \right] + tB
\end{align*}

The following lemma characterizes the optimal solution to the Lagrangian problem.

\begin{lemma} \label{std_auction_lagrangian_optimal}
	For all $t \geq 0$,
    \begin{align*}
        \psi^\mu_\alpha\left(\frac{w^T\alpha}{1+t}\right) \in arg\max_{b(.)}\ \mathbb{E}_{\alpha,\{\theta_i\}_{i=1}^{n-1}}\left[ \frac{w^T \alpha}{1+t} \cdot \mathds{1}(b(\alpha) \geq \max\left(r(\alpha), \{\Psi^\mu(\theta_i, \alpha)\}_i \right)) - M_\alpha(b(\alpha), \{\Psi^\mu(\theta_i,\alpha)\}_i)\ \right]
    \end{align*}
\end{lemma}
\begin{proof}
    Consider an $\alpha \in A$ such that $\lambda^\mu_\alpha$ is atom-less. Then, using the assumptions on auction $\mathcal{A}$, we can write
    \begin{align*}
        \psi^\mu_\alpha \left(\frac{w^T\alpha}{1+t}\right) \in arg \max_{t \in \mathbb{R}}\ \mathbb{E}_{\{X_i\}_{i=1}^{n-1} \sim \lambda_\alpha^\mu} \left[\left(\frac{w^T \alpha}{1+t} \cdot \mathds{1}(t \geq \max(r(\alpha), \{\psi^\mu_\alpha(X_i)\}_i) ) - M_\alpha\left(t, \{\psi^\mu_\alpha (X_i)\}_i\right)\right)\ \right]
    \end{align*}
    Combining this with the definition of $\Psi^\mu$, we get
    \begin{align*}
        \psi^\mu_\alpha\left(\frac{w^T\alpha}{1+t}\right) \in arg \max_{t\in \mathbb{R}}\ \mathbb{E}_{\{\theta_i\}_{i=1}^{n-1}}\left[ \frac{w^T \alpha}{1+t} \cdot \mathds{1}(t \geq \max (r(\alpha), \{\Psi^\mu(\theta_i, \alpha)\}_i) ) - M_\alpha(t, \{\Psi^\mu(\theta_i,\alpha)\}_i) \right]
    \end{align*}
    To complete the proof, note that $\lambda_\alpha^\mu$ is atom-less a.s. w.r.t. $\alpha$ by part (a) of Lemma \ref{pacing_strategy_properties}.
\end{proof}

We take a short interlude to state and prove a lemma which will help us simplify the expression for the dual optimization problem of $R^\mu(w,B)$.

\begin{lemma}\label{std_auction_indicator}
    For $\alpha \in A$ such that $\lambda^\mu_\alpha$ is continuous,
    \begin{align*}
        \mathds{1}\left\{\psi^\mu_\alpha(x) \geq \max(r(\alpha),  \psi^\mu_\alpha(Y))\right\} = \mathds{1}\left\{x \geq \max(r(\alpha), Y) \right\} \qquad \textrm{a.s. }  Y \sim H^\mu_\alpha,\ \forall\ x \in [0,\omega]
    \end{align*}
\end{lemma}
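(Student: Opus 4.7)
\textbf{Proof plan for Lemma~\ref{std_auction_indicator}.} The plan is to follow the same template used in part (c) of Lemma~\ref{pacing_strategy_properties}, with the atomlessness of $\sigma_\alpha^\mu(x)$ under $x \sim \lambda_\alpha^\mu$ replaced by the oracle's stipulation that $\psi_\alpha^\mu(x)$ is atomless under $x \sim \lambda_\alpha^\mu$. Fix $\alpha \in A$ such that $\lambda_\alpha^\mu$ is continuous (this is the regime in which $\psi_\alpha^\mu$ is defined by the oracle) and fix $x \in [0, \omega]$.

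The forward implication is immediate. Suppose $x \geq \max(r(\alpha), Y)$. Then $x \geq r(\alpha)$, and the oracle assumption ``$\psi_\alpha^\mu(z) \geq r(\alpha)$ iff $z \geq r(\alpha)$'' gives $\psi_\alpha^\mu(x) \geq r(\alpha)$; monotonicity of $\psi_\alpha^\mu$ gives $\psi_\alpha^\mu(x) \geq \psi_\alpha^\mu(Y)$. Hence $\psi_\alpha^\mu(x) \geq \max(r(\alpha), \psi_\alpha^\mu(Y))$. This part requires no measure-theoretic work and holds for every $Y$.

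The reverse implication is where the exceptional null set appears. Assume $\psi_\alpha^\mu(x) \geq \max(r(\alpha), \psi_\alpha^\mu(Y))$. The inequality $\psi_\alpha^\mu(x) \geq r(\alpha)$ gives $x \geq r(\alpha)$ by the oracle's biconditional. If additionally $x \geq Y$ we are done, so it remains to rule out the event $\{Y > x\} \cap \{\psi_\alpha^\mu(x) \geq \psi_\alpha^\mu(Y)\}$. By monotonicity of $\psi_\alpha^\mu$, on this event one has $\psi_\alpha^\mu(Y) = \psi_\alpha^\mu(x)$, so the event is contained in $E_x := \psi_\alpha^\mu{}^{-1}(\{\psi_\alpha^\mu(x)\})$. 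By the oracle property that the law of $\psi_\alpha^\mu(X)$ for $X \sim \lambda_\alpha^\mu$ is atomless, $\lambda_\alpha^\mu(E_x) = \mathbb{P}_{X \sim \lambda_\alpha^\mu}\!\left[\psi_\alpha^\mu(X) = \psi_\alpha^\mu(x)\right] = 0$.

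It remains to transfer this null set from $\lambda_\alpha^\mu$ to $H_\alpha^\mu$. Since $Y = \max\{X_1, \ldots, X_{n-1}\}$ is always equal to some $X_i$, the event $\{Y \in E_x\}$ is contained in $\bigcup_{i=1}^{n-1} \{X_i \in E_x\}$, so a union bound gives $H_\alpha^\mu(E_x) \leq (n-1)\,\lambda_\alpha^\mu(E_x) = 0$. Therefore the exceptional set has $H_\alpha^\mu$-measure zero, and the claimed a.s. equality of indicators holds. No step here presents a real obstacle; the only subtlety is remembering that the atomlessness hypothesis has been imported from the oracle rather than derived, as in Lemma~\ref{pacing_strategy_properties}(a)--(d), from the structure of the first-price rule.
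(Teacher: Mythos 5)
Your proof is correct and follows essentially the same argument as the paper's: both reduce the only problematic event to $\{\psi_\alpha^\mu(Y)=\psi_\alpha^\mu(x)\}$ using monotonicity of $\psi_\alpha^\mu$ together with the oracle's reserve-price biconditional, and then dispose of it using the oracle's atomlessness of the law of $\psi_\alpha^\mu(X)$ for $X\sim\lambda_\alpha^\mu$. The only (cosmetic) difference is that you argue directly and make explicit, via the union bound $H_\alpha^\mu(E_x)\leq (n-1)\lambda_\alpha^\mu(E_x)$, the transfer of the null set from $\lambda_\alpha^\mu$ to $H_\alpha^\mu$, a step the paper's proof-by-contradiction leaves implicit.
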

\begin{proof}
    As $\psi^\mu_\alpha$ is non-decreasing, $\mathds{1}\left\{\psi^\mu_\alpha(x) \geq \max(r(\alpha),  \psi^\mu_\alpha(Y))\right\} \geq \mathds{1}\left\{x \geq \max(r(\alpha), Y) \right\}$ always holds. Suppose there exists $\alpha \in A$ such that $\lambda^\mu_\alpha$ is continuous and $x \in [0,\omega]$ for which $\mathds{1}\left\{\psi^\mu_\alpha(x) \geq \max(r(\alpha),  \psi^\mu_\alpha(Y))\right\} > \mathds{1}\left\{x \geq \max(r(\alpha), Y) \right\}$ with positive probability w.r.t. $Y \sim H^\mu_\alpha$. Observe that $\psi^\mu_\alpha(x) \geq r$ implies $x \geq r$, by the assumptions made on $\psi^\mu_\alpha$. Therefore,
    \begin{align*}
        \mathds{1}\left\{\psi^\mu_\alpha(x) \geq \max(r(\alpha),  \psi^\mu_\alpha(Y))\right\} > \mathds{1}\left\{x \geq \max(r(\alpha), Y) \right\} \implies Y> x,\ x\geq r(\alpha),\ \psi^\mu_\alpha(x) \geq \psi^\mu_\alpha(Y)
    \end{align*}
    Hence, there exists $\alpha \in A$ such that $\lambda_\alpha$ is continuous and $x \in [r(\alpha),\omega]$ for which
    \begin{align*}
        H^\mu_\alpha\left(\{y \in [0,\omega] \mid y >x, \psi^\mu_\alpha(y) \leq \psi^\mu_\alpha(x)\}\right) > 0
    \end{align*}
    As $y > x$ implies $\psi^\mu_\alpha(y) \geq \psi^\mu_\alpha(x)$, we get
    \begin{align*}
        H^\mu_\alpha\left(\{y \in [0,\omega] \mid \psi^\mu_\alpha(y) = \psi^\mu_\alpha(x)\}\right) > 0
    \end{align*}
    which contradicts the assumption that $\psi^\mu_\alpha$ has a atom-less distribution. Hence, the lemma holds.
\end{proof}

Next, we proceed to prove that the dual of $R^\mu(w,B)$ is the same as the dual of the optimization problem $Q^\mu(w,B)$ associated to first-price auctions. Consider an $\alpha$ for which $\lambda^\mu_\alpha$ is continuous. Then, the expected utility $U^\mu_\alpha(x)$ of a bidder with value $x$ in auction $\mathcal{A}$, when the values of the other agents are drawn i.i.d. from $\lambda^\mu_\alpha$ and every bidder employs strategy $\psi^\mu_\alpha$, is given by
\begin{align*}
    U^\mu_\alpha(x) &\coloneqq \mathbb{E}_{\{X_i\}_{i=1}^{n-1} \sim \lambda^\mu_\alpha} \left[ x \cdot \mathds{1}\left\{\psi^\mu_\alpha(x) \geq \max(r(\alpha), \{\psi^\mu_\alpha(X_i)\}_i)\right\} - M_{\alpha}(\psi^\mu_\alpha(x), \{\psi^\mu_\alpha(X_i)\}_i)  \right]\\
    &= \mathbb{E}_{\{X_i\}_{i=1}^{n-1} \sim \lambda^\mu_\alpha} \left[x\  \mathds{1}\left\{x \geq \max(r(\alpha), \{X_i\}_i) \right\}\right] - m_\alpha(x)\\
    &= x H^\mu_\alpha(x) \mathds{1}\{x \geq r(\alpha)\} - m^\mu_\alpha(x)
\end{align*}
where $m^\mu_\alpha(x) = \mathbb{E}_{\{X_i\}_{i=1}^{n-1} \sim \lambda^\mu_\alpha} \left[M_{\alpha}(\psi^\mu_\alpha(x), \{\psi^\mu_\alpha(X_i)\}_i) \right]$ and the second equality follows from Lemma~\ref{std_auction_indicator}.

Then, from the arguments given in section 5.1.2 of Krishna, we get
\begin{align*}
    U^\mu_\alpha(x) = \int_0^x H^\mu_\alpha(s) \mathds{1}\{s \geq r(\alpha)\}  ds = \mathds{1}\{x \geq r(\alpha)\} \int_{r(\alpha)}^x H^\mu_\alpha(s)  ds
\end{align*}
which further implies
\begin{align*}
	m^\mu_\alpha(x) = x H^\mu_\alpha(x) \mathds{1}\{x \geq r(\alpha)\} - U^\mu_\alpha(x)	= \mathds{1}\{x \geq r(\alpha)\}\left(x H^\mu_\alpha(x) -  \int_{r(\alpha)}^x H^\mu_\alpha(s)  ds \right)
\end{align*}

Then, using Lemma \ref{std_auction_lagrangian_optimal} and Lemma \ref{std_auction_indicator}, the value that the objective function of the dual problem of $R^\mu(w,B)$ takes at $t\geq 0$ is given by:
\begin{align*}
	& \max_{b:A \to \mathbb{R}_{\geq 0}}\mathbb{E}_{\alpha,\{\theta_i\}_{i=1}^{n-1}}\left[ w^T \alpha \cdot \mathds{1}\{b(\alpha) \geq \max(r(\alpha), \{\Psi^\mu(\theta_i, \alpha)\}_i) \} - (1 + t)M_\alpha(b(\alpha), \{\Psi^\mu(\theta_i,\alpha)\}_i) \right] + tB\\
	=& (1 + t) \max_{b:A \to \mathbb{R}_{\geq 0}} \mathbb{E}_{\alpha,\{\theta_i\}_{i=1}^{n-1}}\left[ \frac{w^T \alpha}{1+t} \cdot \mathds{1}(b(\alpha) \geq \max\left(r(\alpha), \{\Psi^\mu(\theta_i, \alpha)\}_i \right)) - M_\alpha(b(\alpha), \{\Psi^\mu(\theta_i,\alpha)\}_i) \right] + tB\\
	=& (1 + t) \mathbb{E}_{\alpha,\{\theta_i\}_{i=1}^{n-1}}\left[ \frac{w^T \alpha}{1+t} \cdot \mathds{1}(b(\alpha) \geq \max\left(r(\alpha), \{\Psi^\mu(\theta_i, \alpha)\}_i \right)) - M_\alpha\left( \psi^\mu_\alpha\left(\frac{w^T\alpha}{1+t}\right), \{\Psi^\mu(\theta_i,\alpha)\}_i \right)) \right]+ tB\\
	 =& (1+ t) \mathbb{E}_\alpha \mathbb{E}_{\{X_i\}_{i=1}^{n-1} \sim \lambda^\mu_\alpha} \left[\frac{w^T \alpha}{1+t} \cdot \mathds{1}\left\{\psi^\mu_\alpha(x) \geq \max(r(\alpha), \{\psi^\mu_\alpha(X_i)\}_i)\right\} - M_\alpha\left( \psi^\mu_\alpha\left(\frac{w^T\alpha}{1+t}\right), \{\psi^\mu_\alpha(X_i)\}_i \right) \right] + tB\\
	 =& (1 + t) \mathbb{E}_\alpha \left[U^\mu_\alpha\left(\frac{w^T \alpha}{1+t}\right) \right] + tB\\
	 =& (1+t)\mathbb{E}_\alpha\left[\mathds{1}\left\{\frac{w^T \alpha}{1+t} \geq r(\alpha) \right\} \int_{r(\alpha)}^{\frac{w^T\alpha}{1+t}} H^\mu_\alpha(s) ds\right] + tB\\
	 =& q^\mu(w,B,t)
\end{align*}

Hence, we have shown that, for every buyer type, all anonymous standard auctions have identical dual optimization problems. In light of this, to prove  Theorem~\ref{standard_auctions_revenue_equivalence}, it suffices to prove strong duality for $R^\mu(w,B)$, where $\mu$ is a fixed-point which is guaranteed to exist by Proposition~\ref{main_fixed_point}. We give the full argument below.

\begin{proof}[Proof of Theorem~\ref{standard_auctions_revenue_equivalence}
]
	By Lemma~\ref{combined_to_individual}, we know that if $\mu \in C_0^*(\mu)$, then $\mu(w,B) \in \argmin_{t \in [0,\omega/B]} q^\mu(w,B,t)$ almost surely w.r.t. $(w,B) \sim G$. Moroeover, by part (b) of Lemma~\ref{compact_dual_space}, we have $\mu(w,B) \in \argmin_{t \in [0,\infty)} q^\mu(w,B,t)$. Consider a $\theta = (w,B) \in \Theta'$ (see Definition~\ref{definition_theta'}) for which $\mu(w,B) \in \argmin_{t \in [0,\infty)} q^\mu(w,B,t)$. Observe that such $\theta$ form a subset which has measure one under $G$. According to Theorem 5.1.5 from \cite{bertsekas1998nonlinear}, in order to prove that $\Psi^\mu(w,B, \alpha)$ (as a function of $\alpha$) is an optimal solution for the optimization problem $R^\mu(w,B)$, it suffices to show the following conditions:
    \begin{itemize}
        \item[(i)] Primal feasibility:
        $$\mathbb{E}_{\alpha,\{\theta_i\}_{i=1}^{n-1}}\left[ M_\alpha\left( \Psi^\mu(w,B,\alpha), \{\Psi^\mu(\theta_i,\alpha)\}_i \right)\right] \leq B$$
        \item[(ii)] Dual feasibility: $\mu(w,B) \geq 0$
        \item[(iii)] Lagrangian Optimality: $\Psi^\mu(w,B)$ is an optimal solution for
        	\begin{align*}
        	\max_{b:A \to \mathbb{R}_{\geq 0}}	 &\mathbb{E}_{\alpha,\{\theta_i\}_{i=1}^{n-1}}\left[ w^T \alpha \cdot \mathds{1}\{b(\alpha) \geq \max(r(\alpha), \{\Psi^\mu(\theta_i, \alpha)\}_i) \} - (1 + \mu(w,B)) M_\alpha(b(\alpha), \{\Psi^\mu(\theta_i,\alpha)\}_i) \right]\\
         	& + \mu(w,B)B
         	\end{align*}
        \item[(iv)] Complementary slackness:
        $$\mu(w,B).\left\{B - \mathbb{E}_{\alpha,\{\theta_i\}_{i=1}^{n-1}}\left[M_\alpha\left(\Psi^\mu(w,B, \alpha), \{\Psi^\mu(\theta_i, \alpha)\}_i\right) \right]\right\} = 0$$
    \end{itemize}

    First, we simplify the expression for the expected expenditure used in the sufficient conditions (i)-(iv) stated above to show that it is equal to the expected payment made by buyer type $(w,B)$ in the SFPE determined by pacing function $\mu$:
    \begin{align*}
        &\mathbb{E}_{\alpha,\{\theta_i\}_{i=1}^{n-1}}\left[ M_\alpha\left( \Psi^\mu(\theta,\alpha), \{\Psi^\mu(\theta_i,\alpha)\}_i \right) \right]\\
        = &\mathbb{E}_{\alpha} \mathbb{E}_{\{X_i\}_{i=1}^{n-1} \sim \lambda^\mu_\alpha} \left[M_{\alpha}\left(\psi^\mu_\alpha\left(\frac{w^T \alpha}{1+\mu(w,B)}\right), \{\psi^\mu_\alpha(X_i)\}_i \right)\right]\\
        = &\mathbb{E}_{\alpha} \left[m^\mu_\alpha \left(\frac{w^T \alpha}{1+\mu(w,B)}\right) \right]\\
        = &\mathbb{E}_\alpha\left[\left(\frac{w^T \alpha}{1+\mu(w,B)} H^\mu_\alpha\left(\frac{w^T \alpha}{1+\mu(w,B)}\right) -  \int_{r(\alpha)}^{\frac{w^T \alpha}{1+\mu(w,B)}} H^\mu_\alpha(s)  ds \right) \mathds{1}\left\{ \frac{w^T \alpha}{1+\mu(w,B)} \geq r(\alpha) \right\}\right]\\
        = & \mathbb{E}_\alpha\left[\sigma^\mu_\alpha\left(\frac{w^T \alpha}{1+\mu(w,B)}\right)H^\mu_\alpha\left(\frac{w^T \alpha}{1+\mu(w,B)}\right)\mathds{1}\left\{\frac{w^T \alpha}{1+\mu(w,B)} \geq r(\alpha) \right\}\right]\\
        = & \mathbb{E}_{\alpha,\{(\theta_i)\}_{i=1}^{n-1}}\left[ \beta^\mu(\theta, \alpha)\ \mathds{1}\{\beta^\mu(\theta, \alpha) \geq \max(r(\alpha), \{\beta^\mu(\theta_i, \alpha)\}_i)\}\right]
    \end{align*}
	Hence, Theorem~\ref{standard_auctions_revenue_equivalence} will follow if we establish the aforementioned sufficient conditions (i)-(iv). Note that $\mu(w,B)$ satisfies the following first order conditions of optimality
		\begin{equation}\label{std_auction_dual_optimality_conditions}
		    \frac{\partial q^\mu(w, B, \mu(w,B))}{\partial t} \geq 0 \qquad \mu(w,B) \geq 0 \qquad \mu(w,B)\cdot \frac{\partial q^\mu(w,B,\mu(w,B))}{\partial t} = 0
		\end{equation}

	Using Lemma~\ref{differentiability_dual}, we can write
		\begin{align*}
		    \frac{\partial q^\mu(w, B, \mu(w,B))}{\partial t} = B - \mathbb{E}_\alpha\left[\sigma^\mu_\alpha\left(\frac{w^T \alpha}{1+\mu(w,B)}\right)H^\mu_\alpha\left(\frac{w^T \alpha}{1+\mu(w,B)}\right)\mathds{1}\left\{\frac{w^T \alpha}{1+\mu(w,B)} \geq r(\alpha) \right\}\right]
		\end{align*}

    To establish the sufficient conditions (i)-(iv), observe that (after simplification) conditions (i), (ii) and (iv) are the same as (\ref{std_auction_dual_optimality_conditions}), and condition (iii) is a direct consequence of Lemma~\ref{std_auction_lagrangian_optimal}, thereby completing the proof of Theorem~\ref{standard_auctions_revenue_equivalence}.
\end{proof}

\rk{
\subsection{Revenue Equivalence under Ex-Ante Budget Constraints}\label{appendix:ex-ante}

The argument developed in this section also applies to the setting with non-contextual i.i.d. values and ex-ante budget constraints, which is the symmetric special case of the models studied in \citet{balseiro2015repeated} and \citet{balseiro2021budget}. More precisely, consider a single-item auction setting with $n$ buyers, and assume that the value of each buyer is drawn i.i.d. from a common atom-less distribution $\mathcal F$ over the space of all possible values $[0, \overline{V}] \subset \R_{\geq 0}$. Moreover, assume that every buyer has an ex-ante budget of $B$, i.e., she is constrained to spend at most $B$ in expectation, where the expectation is taken over her own value and the values of other buyers. Let $\A = (r, M)$ be the anonymous standard auction with reserve price $r$ and payment rule $M$ that the seller uses to sell the item.

In this simpler setting, a strategy $\beta^*: [0, \overline V] \to \R$ is a symmetric equilibrium if $\beta^*$ is the optimal bidding strategy for a buyer when all other buyers employ $\beta^*$ to bid. Concretely, $\beta^*: [\underline{V}, \overline V] \to \R$ is a symmetric equilibrium if it is an optimal solution to the following optimization problem:
\begin{align}\label{eqn:ex-ante-eq}
	\max_{b: [0, \overline V] \to \R_{\geq 0}} &\E_{v, \{v_i\}_{i=1}^{n-1}} \left[ v \cdot \mathds{1}\{b(v) \geq \max(r, \{\beta^*(v_i)\}_i)\} - M(b(v), \{\beta^*(v_i)\}_i) \right]\\
	\text{s.t.} \quad 	&\E_{v, \{v_i\}_{i=1}^{n-1}} \left[ M(b(v), \{\beta^*(v_i)\}_i) \right] \leq B \notag
\end{align}

When $\A$ is a second-price auction, the results of both \citet{balseiro2015repeated} and \citet{balseiro2017budget} imply that
strong duality holds for the optimization problem given in \eqref{eqn:ex-ante-eq}, and there exists a dual solution $\mu^* \geq 0$ such that $\beta^*(v) = v/(1 + \mu^*)$ is a symmetric equilibrium. With this existence result for second-price auctions in hand, we can leverage the argument developed earlier to establish the existence of a value-pacing-based equilibrium for all standard auctions and revenue equivalence.

Let $\mathcal{H}$ be the distribution of $v/(1 + \mu^*)$ when $v \sim \mathcal F$ and $\psi^{\mathcal H}$ be the single-auction equilibrium for distribution $\mathcal H$ and auction $\A = (r, M)$, as defined at the beginning of Section~\ref{sec:standard}. Then, we claim that the value-pacing-based strategy given by
\begin{align*}
	\Psi(v) = \psi^{\mathcal H} \left( \frac{v}{1 + \mu^*} \right)	
\end{align*}
is a symmetric equilibrium (as defined in equation~\eqref{eqn:ex-ante-eq}). To see this, first observe that, when all of the other buyers use $\beta^* = \Psi$ to bid, the dual of the optimization problem \eqref{eqn:ex-ante-eq} is given by
\begin{align*}
	&\min_{\mu \geq 0} 	\max_{b: [0, \overline V] \to \R_{\geq 0}} \E_{v, \{v_i\}_{i=1}^{n-1}} \left[ v \cdot \mathds{1}\{b(v) \geq \max(r, \{\Psi(v_i)\}_i)\} - (1 + \mu) M(b(v), \{\Psi(v_i)\}_i) \right] + \mu \cdot B\\
	= & \min_{\mu \geq 0}\ (1 + \mu)\ \E_{v} \left[ 	\max_{b \in \R_{\geq 0}} \E_{\{v_i\}_{i=1}^{n-1}} \left[ \frac{v}{1 + \mu} \cdot \mathds{1}\{b \geq \max(r, \{\Psi(v_i)\}_i) \} - M(b, \{\Psi(v_i)\}_i) \right] \right] + \mu \cdot B
\end{align*}

Next, observe that the inner optimization problem over $b \in \R_{\geq 0}$ is exactly the bidding problem faced by a buyer with value $v/(1 + \mu^*)$ who aims to maximize her utility in the single-auction setting when the values of the other buyers are drawn from the distribution $\mathcal H$. Since $\psi^{\mathcal H}(\cdot)$ is the equilibrium strategy in the single-auction setting, $\Psi(v) = \psi^{\mathcal H}(v/(1 + \mu^*))$ is an optimal solution to this bidding problem. Moreover, we know from \citet{myerson1981optimal} that the interim expected utility of a buyer under equilibrium strategies is independent of payment rule of the standard auction. Hence,  the dual optimization problem is the same for all standard auctions. In particular, $\mu^*$ is an optimal solution for this common dual problem. Finally, using a proof similar to the one we provide for Theorem~\ref{standard_auctions_revenue_equivalence} in Appendix~\ref{appendix:rev-equiv}, it is possible to show that strong duality holds for the optimization problem stated in \eqref{eqn:ex-ante-eq} when $\beta^* = \Psi$ and $\Psi(v/(1 + \mu^*))$ is an optimal solution of \eqref{eqn:ex-ante-eq} as required.

}

\section{Worst-Case Efficiency Guarantees}\label{appendix:poa}

\rk{
The following example demonstrates that the Price of Anarchy of social welfare can be arbitrarily small for value-pacing-based equilibria.

\begin{example*}
	Fix the number of buyers to $n = 2$ and consider the second-price auction format. Let the distribution of feature vectors $F$ be the uniform distribution over $A = [1,2] \times [1,2]$. Moreover, assume that the buyer weight vectors are distributed uniformly over $[1,2] \times [1,2] \cup [y^4, y^4+1/y] \times [y^4, y^4+1/y]$ for some large $y \geq 1$. Also, suppose the budget of all buyer types with weight vector $w \in [1,2] \times [1,2]$ is $10$ and the budget of all buyer types with weight vector $w \in [y^4, y^4+1/y] \times [y^4, y^4+1/y]$ is $1/y^2$. By Theorem~\ref{main_existence_result}, we get that there exists a value-pacing-based equilibrium for this instance. Let $\mu$ be the pacing function associated with it and $x^\mu$ be the corresponding allocation. First, observe that all of the buyer types with weight vectors in $[1,2] \times  [1,2]$ are not paced in equilibrium and bid their value on each item, i.e., $\mu(w, 10) = 1$ for all $w \in [1,2] \times [1,2]$. This is because their budget far exceeds their expected value: even if they win every item, their payments is as most 8, which is smaller than their budget of 10. Next, consider a buyer $i \in \{1,2\}$ with type $\theta_i = (w,1/y^2)$ for some $w \in [y^4, y^4+1/y] \times [y^4, y^4+1/y]$. Then, her expected payment (expectation over competing buyer type and item type) is at least
	\begin{align*}
		P(w_{-i} \in [1,2]^2) \cdot \E_{\alpha, \theta_{-i}}[x^\mu_i(\alpha, \theta_i, \theta_{-i}) \cdot 1 \mid w_{-i} \in [1,2]^2] = \frac{1}{1 + y^{-2}} \cdot \E_\alpha[x^\mu_i(\alpha, \theta_i, \theta_{-i})]
	\end{align*}
	because, when $w_{-i} \in [1,2] \times [1,2]$, buyer ${-i}$ bids her value on each item and her value is always at least 1. Moreover, the budget of the buyer with type $\theta_i$ is $1/y^2$. Therefore, we get
	\begin{align*}
		\frac{1}{1 + y^{-2}} \cdot \E_\alpha[x^\mu_i(\alpha, \theta_i, \theta_{-i})] \leq \frac{1}{y^2}	
	\end{align*}
Let $x$ be the allocation that always gives the item to a buyer with weight vector $w \in [y^4, y^4+1/y] \times [y^4, y^4+1/y]$ when such a buyer type is present. We partition the space of buyer-type profiles into 4 regions, and bound the expected social welfare (expectation taken only over $\alpha \sim F$) of $x^\mu$ and $x$:
	\begin{enumerate}
		\item $\theta_i = (w_i, 1)$ with $w_i \in [1,2] \times [1,2]$ for both buyers $i \in \{1,2\}$. This occurs with probability at most 1 and the expected social welfare under $x^\mu$ when $\alpha \sim F$ is bounded above by 8 for each type profile in this region.
		\item $\theta_i = (w_i, 1/y^2)$ with $w_i \in [y^4, y^4+1/y] \times [y^4, y^4+1/y]$ for both buyers $i \in \{1,2\}$. This occurs with probability at most $1/y^4$ and the expected social welfare under $x^\mu$ when $\alpha \sim F$ is bounded above by $8 y^4$ for each type profile in this region. 
		\item $\theta_1 = (w_1, 1/y^2)$ with $w_1 \in [y^4, y^4+1/y] \times [y^4, y^4+1/y]$ and $\theta_2 = (w_2, 10)$ with $w_2 \in [1,2] \times [1,2]$. This occurs with probability $y^{-2}/(1 + y^{-2})$. As we argued earlier, $\E_\alpha[x^\mu_1(\alpha, \theta_1, \theta_{2})] \leq (1 + y^{-2})/y^2$ in this case. Therefore, the expected social welfare under $x^\mu$ when $\alpha \sim F$ is bounded above by $8y^4 \cdot \{(1 + y^{-2})/y^2\} + 8 \leq 24y^2$. On the other hand, the expected social welfare under $x$ when $\alpha \sim F$ is at least $y^4$ in this region since buyer 1 always gets the item. 
		\item $\theta_2 = (w_2, 1/y^2)$ with $w_2 \in [y^4, y^4+1/y] \times [y^4, y^4+1/y]$ and $\theta_1 = (w_1, 10)$ with $w_1 \in [1,2] \times [1,2]$. This is the same as region 3 with the roles of buyer 1 and buyer 2 interchanged.
	\end{enumerate}
	
	Combining the bounds for the different regions, we get that the total expected social welfare under $x^\mu$ is bounded above by
	\begin{align*}
		8 + 8y^4 \cdot \frac{1}{y^4} + 24y^2 \cdot \frac{y^{-2}}{1+y^{-2}} + 24y^2 \cdot \frac{y^{-2}}{1+y^{-2}} \leq 64 \ ,
	\end{align*}
	and the total expected social welfare under $x$ is bounded below by
	\begin{align*}
		0 + 0 + y^4 \cdot \frac{y^{-2}}{1+y^{-2}} + y^4 \cdot \frac{y^{-2}}{1+y^{-2}} \geq y^{2}.
	\end{align*}
	Hence, the Price of Anarchy of social welfare is at most $64/y^2$, which tends to zero as $y \to \infty$.
	
\end{example*}
}

\begin{proof}[Proof of Theorem~\ref{thm:poa}]
	We will focus on second-price auctions. Consider an allocation $x$, an equilibrium pacing function $\mu$ with $\mu \in C^*_0(\mu)$ and the associated allocation $x^\mu$. Since $x$ and $\mu$ are arbitrary, it suffices to show that $\operatorname{LW}(x^\mu) \geq \operatorname{LW}(x)/2$.
	
	Let $p(\alpha, \vec{\theta})$ denote the second-highest bid on item $\alpha$ in the equilibrium parameterized by $\mu$ when the buyer-type profile is given by $\vec{\theta}$, i.e., it is the second largest element in the set $\{w_i^T\alpha/(1 + \mu(w_i, B_i)) \mid i \in [n]\}$. The following lemma is a key step in the proof of the theorem.
	
	\begin{lemma}\label{lemma:poa-inter}
		For all $i \in [n]$ and $\theta_i \in \Theta$, we have
		\begin{align*}
		&\min\left\{\E_{\alpha, \theta_{-i}}[w_i^T\alpha \cdot x^{\mu}_i(\alpha, \theta_i, \theta_{-i})], B_i \right\}\\&\quad\geq \min\left\{\E_{\alpha, \theta_{-i}}[w_i^T\alpha \cdot x_i(\alpha, \theta_i, \theta_{-i})], B_i \right\}  - \E_{\alpha, \theta_{-i}}[ p(\alpha, \theta_i, \theta_{-i}) \cdot x_i(\alpha, \theta_i, \theta_{-i})]\,.
	\end{align*}
	\end{lemma}
	
	\begin{proof}
		Fix $i \in [n]$ and $\theta_i \in \Theta$. We will prove the lemma separately for paced and unpaced buyer types. First, consider the case when $\theta_i$ is paced in equilibrium, i.e., $\mu(\theta_i) > 0$. Then, since $\mu(w_i, B_i) \in \argmin_{t \geq 0} q^\mu(w_i, B_i, t)$, complementary slackness (see proof of Theorem~\ref{standard_auctions_revenue_equivalence}) implies that:
	\begin{align*}
		\E_{\alpha, \theta_{-i}} \left[ p(\alpha, \theta_i, \theta_{-i}) \cdot x_i^\mu(\alpha, \theta_i, \theta_{-i}) \right] = B_i\,.	
	\end{align*}
	Moreover, note that $w_i^T\alpha/(1 + \mu(w_i, B_i)) \geq p(\alpha, \theta_i, \theta_{-i})$ whenever $x^{\mu}_i(\alpha, \theta_i, \theta_{-i}) > 0$ because only the highest bidder(s) win the item in a second-price auction. This allows us to establish the lemma for paced buyers:
	\begin{align*}
		&\min\left\{\E_{\alpha, \theta_{-i}}[w_i^T\alpha \cdot x^{\mu}_i(\alpha, \theta_i, \theta_{-i})], B_i \right\}\\
        &\quad\geq 	\min\left\{\E_{\alpha, \theta_{-i}} \left[ p(\alpha, \theta_i, \theta_{-i}) \cdot x_i^\mu(\alpha, \theta_i, \theta_{-i}) \right], B_i \right\}\\
		&\quad= B_i\\
		&\quad\geq \min\left\{\E_{\alpha, \theta_{-i}}[w_i^T\alpha \cdot x_i(\alpha, \theta_i, \theta_{-i})], B_i \right\}\\
		&\quad\geq \min\left\{\E_{\alpha, \theta_{-i}}[w_i^T\alpha \cdot x_i(\alpha, \theta_i, \theta_{-i})], B_i \right\} - \E_{\alpha, \theta_{-i}}[ p(\alpha, \theta_i, \theta_{-i}) \cdot x_i(\alpha, \theta_i, \theta_{-i})]\,,
	\end{align*}
where the first inequality follows because $w_i^T\alpha \ge w_i^T\alpha/(1 + \mu(w_i, B_i))$ since $\mu(w_i, B_i) \ge 0$, the first equality because budgets binds, the second inequality because $B_i \ge \min(a, B_i)$ for every $a \in \mathbb R$, and the last inequality because payments are non-negative.

	Next, consider the case when $\theta_i$ is unpaced in equilibrium, i.e., $\mu(\theta_i) = 0$. Then, by definition of a pacing-based strategy for second-price auctions, buyer type $\theta_i$ bids her value $w_i^T\alpha$ on item $\alpha$ in equilibrium, for all items $\alpha \in A$. As a consequence, if $x_i^\mu(\alpha, \theta_i, \theta_{-i}) < 1$, then we have $w_i^T\alpha \leq p(\alpha, \theta_i, \theta_{-i})$. In other words,
	\begin{align}\label{eqn:poa-inter-1}
		\E_{\alpha, \theta_{-i}}[(w_i^T\alpha - p(\alpha, \theta_i, \theta_{-i})) \cdot (1 - x^{\mu}_i(\alpha, \theta_i, \theta_{-i})) \cdot x_i(\alpha, \theta_i, \theta_{-i})] \leq 0\,.
	\end{align}
	Moreover, observe that
	\begin{align}\label{eqn:poa-inter-2}
		\E_{\alpha, \theta_{-i}}[w_i^T\alpha \cdot x^{\mu}_i(\alpha, \theta_i, \theta_{-i})] \geq \E_{\alpha, \theta_{-i}}[(w_i^T\alpha - p(\alpha, \theta_i, \theta_{-i})) \cdot x^{\mu}_i(\alpha, \theta_i, \theta_{-i}) \cdot x_i(\alpha, \theta_i, \theta_{-i})]\,,
	\end{align}
because payments are non-negative and $x_i \in [0,1]$. 	Combining \eqref{eqn:poa-inter-1} and \eqref{eqn:poa-inter-2} yields
	\begin{align}\label{eqn:poa-inter-2.1}
		&\E_{\alpha, \theta_{-i}}[w_i^T\alpha \cdot x^{\mu}_i(\alpha, \theta_i, \theta_{-i})] \nonumber\\
 &\quad \geq \E_{\alpha, \theta_{-i}}[(w_i^T\alpha - p(\alpha, \theta_i, \theta_{-i})) \cdot x_i(\alpha, \theta_i, \theta_{-i})] \nonumber\\
		&\quad\geq \min\left\{\E_{\alpha, \theta_{-i}}[w_i^T\alpha \cdot x_i(\alpha, \theta_i, \theta_{-i})], B_i \right\} - \E_{\alpha, \theta_{-i}}[ p(\alpha, \theta_i, \theta_{-i}) \cdot x_i(\alpha, \theta_i, \theta_{-i})]\,,
	\end{align}
where the last inequality follows because $a \ge \min(a, B_i)$ for every $a \in \mathbb R$. Furthermore, note the trivial inequality
	\begin{align}\label{eqn:poa-inter-2.2}
		B_i &\geq \min\left\{\E_{\alpha, \theta_{-i}}[w_i^T\alpha \cdot x_i(\alpha, \theta_i, \theta_{-i})], B_i \right\} \nonumber\\
&\ge\min\left\{\E_{\alpha, \theta_{-i}}[w_i^T\alpha \cdot x_i(\alpha, \theta_i, \theta_{-i})], B_i \right\} - \E_{\alpha, \theta_{-i}}[ p(\alpha, \theta_i, \theta_{-i}) \cdot x_i(\alpha, \theta_i, \theta_{-i})]\,,
	\end{align}
where we used again that $B_i \ge \min(a, B_i)$ for every $a \in \mathbb R$ and that payments are non-negative.
	Finally, combining \eqref{eqn:poa-inter-2.1} and \eqref{eqn:poa-inter-2.2} yields the lemma for unpaced buyers
	\begin{align}\label{eqn:poa-inter-3}
		&\min\left\{\E_{\alpha, \theta_{-i}}[w_i^T\alpha \cdot x^{\mu}_i(\alpha, \theta_i, \theta_{-i})], B_i \right\} \nonumber \\ \geq &\quad\min\left\{\E_{\alpha, \theta_{-i}}[w_i^T\alpha \cdot x_i(\alpha, \theta_i, \theta_{-i})], B_i \right\} - \E_{\alpha, \theta_{-i}}[ p(\alpha, \theta_i, \theta_{-i}) \cdot x_i(\alpha, \theta_i, \theta_{-i})],
	\end{align}
	since \eqref{eqn:poa-inter-2.1} and \eqref{eqn:poa-inter-2.2} show the inequality separately for each of the two terms in the minimum on the left-hand side of \eqref{eqn:poa-inter-3}.
	This establishes the lemma for all $i \in [n]$ and $\theta_i \in \Theta$.
	\end{proof}

Continuing the proof of Theorem~\ref{thm:poa},
next, we sum over $i \in [n]$ and take expectation w.r.t.~$\theta_i$ for the inequality in Lemma~\ref{lemma:poa-inter}. First, we study the effect of summing and taking expectations on the second term in the RHS. We have
	\begin{align}\label{eqn:poa-inter-4}
		\sum_{i=1}^n \E_{\theta_i} \left[ \E_{\alpha, \theta_{-i}}[ p(\alpha, \theta_i, \theta_{-i}) \cdot x_i(\alpha, \theta_i, \theta_{-i})]	 \right] &= \E_{\alpha, \vec{\theta}} \left[ p(\alpha, \vec{\theta}) \cdot \sum_{i=1}^n x_i(\alpha, \theta_i, \theta_{-i})\right] \nonumber \\
		&= \E_{\alpha, \vec{\theta}} \left[ p(\alpha, \vec{\theta})\right] \nonumber \\
		&= \E_{\alpha, \vec{\theta}} \left[ p(\alpha, \vec{\theta}) \cdot \sum_{i=1}^n x^\mu_i(\alpha, \theta_i, \theta_{-i})\right] \nonumber\\
		&= \sum_{i=1}^n \E_{\theta_i} \left[ \E_{\alpha, \theta_{-i}}[ p(\alpha, \theta_i, \theta_{-i}) \cdot x^\mu_i(\alpha, \theta_i, \theta_{-i})]	 \right] \nonumber\\
		&\leq \sum_{i=1}^n \E_{\theta_i} \left[ \min\left\{\E_{\alpha, \theta_{-i}}[w_i^T\alpha \cdot x_i^\mu(\alpha, \theta_i, \theta_{-i})], B_i \right\} \right]\nonumber\\
&=\operatorname{LW}(x^\mu)\,,
	\end{align}
where the first and fourth equalities follow from Fubini's theorem, the second and third because allocations sum up to one (i.e., there no reserve prices), and the last inequality follows from the budget-feasibility of the pacing-based equilibrium strategy given by $\mu$ for buyer type $\theta_i$, which implies $$\E_{\alpha, \theta_{-i}}[ p(\alpha, \theta_i, \theta_{-i}) \cdot x^\mu_i(\alpha, \theta_i, \theta_{-i})] \leq B_i$$
		and the winning criteria of second-price auctions, which implies $$p(\alpha, \theta_i, \theta_{-i}) \leq \frac{w_i^T\alpha}{1 + \mu(w_i, B_i)} \leq w_i^T\alpha$$ whenever $x_i^\mu(\alpha, \theta_i, \theta_{-i}) > 0$.

Using \eqref{eqn:poa-inter-4}, we obtain by summing over $i \in [n]$ and integrate over $\theta_i$ the inequality in the statement of Lemma~\ref{lemma:poa-inter}:
	\begin{align*}
		\operatorname{LW}(x^\mu)&=\sum_{i=1}^n \E_{\theta_i} \left[ \min\left\{\E_{\alpha, \theta_{-i}}[w_i^T\alpha \cdot x^{\mu}_i(\alpha, \theta_i, \theta_{-i})], B_i \right\} \right]\\
		&\geq   \sum_{i=1}^n \E_{\theta_i} \left[ \min\left\{\E_{\alpha, \theta_{-i}}[w_i^T\alpha \cdot x_i(\alpha, \theta_i, \theta_{-i})], B_i \right\} \right] - \sum_{i=1}^n \E_{\theta_i} \left[ \E_{\alpha, \theta_{-i}}[ p(\alpha, \theta_i, \theta_{-i}) \cdot x_i(\alpha, \theta_i, \theta_{-i})]	 \right]\\
&= \operatorname{LW}(x) - \sum_{i=1}^n \E_{\theta_i} \left[ \E_{\alpha, \theta_{-i}}[ p(\alpha, \theta_i, \theta_{-i}) \cdot x_i(\alpha, \theta_i, \theta_{-i})]	 \right]\\
		&\geq \operatorname{LW}(x) - \operatorname{LW}(x^\mu)\,.
	\end{align*}
	Therefore, we have shown that $\operatorname{LW}(x^\mu) \ge \operatorname{LW}(x)/2$	as required.
\end{proof}

\section{Structural Properties}\label{appendix:structural-properties}

Before proceeding with the proof of Proposition~\ref{structural_prop}, we establish the following Lemma, which is informative in its own right.

\begin{lemma}\label{buyer_type_continuity}
	The pacing function $\mu: \Theta \to [0, \omega/ B_{\min}]$ is continuous.
\end{lemma}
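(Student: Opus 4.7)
The plan is to deduce continuity of $\mu$ from Berge's Maximum Theorem, applied to the parametric family of optimization problems
\[
    \min_{t \in [0,\omega/B_{\min}]} q^\mu(w,B,t)
\]
with $(w,B)$ playing the role of the parameter. Under this section's standing uniqueness hypothesis, the argmin is the singleton $\{\mu(w,B)\}$ at every $(w,B) \in \delta(G)$, so upper hemicontinuity of the argmin correspondence is automatically equivalent to continuity of the selection $\mu$. The feasible set $[0,\omega/B_{\min}]$ is compact and does not depend on $(w,B)$, so the only nontrivial hypothesis of Berge to verify is \emph{joint} continuity of the objective $q^\mu(w,B,t)$ in $(w,B,t)$ on $\delta(G) \times [0,\omega/B_{\min}]$. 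Note that Corollary~\ref{continuity_dual} gives continuity only in $t$, which is not enough.

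To establish the joint continuity, I would work from the explicit formula
\[
    q^\mu(w,B,t) = (1+t)\, \mathbb{E}_\alpha\!\left[ \mathds{1}\!\left\{\tfrac{w^T\alpha}{1+t} \geq r(\alpha)\right\} \int_{r(\alpha)}^{w^T\alpha/(1+t)} H^\mu_\alpha(s)\, ds \right] + tB.
\]
The linear term $tB$ is trivially jointly continuous. For the expectation, I would fix $\alpha \in A$ and first observe that the scalar map $x \mapsto \mathds{1}\{x \geq r(\alpha)\}\int_{r(\alpha)}^{x} H^\mu_\alpha(s)\, ds$ is continuous in $x$: it vanishes for $x \leq r(\alpha)$, and for $x \geq r(\alpha)$ it is the integral of the bounded function $H^\mu_\alpha$, so the indicator's jump at $x = r(\alpha)$ is absorbed by the vanishing of the integral. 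Composing with the jointly continuous map $(w,t) \mapsto w^T\alpha/(1+t)$ and multiplying by $(1+t)$ produces an integrand that is jointly continuous in $(w,t)$ pointwise in $\alpha$, and is uniformly bounded on $\delta(G) \times [0,\omega/B_{\min}]$ by $(1+\omega/B_{\min})\, \omega$. Dominated convergence then lifts pointwise continuity of the integrand to continuity of its expectation, completing the joint continuity argument.

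With joint continuity of $q^\mu$ in hand, Berge's Maximum Theorem delivers upper hemicontinuity of the argmin correspondence on $\delta(G)$, and the uniqueness assumption collapses this to continuity of $\mu$. The main obstacle I foresee is the subtle point concerning the dissolving indicator discontinuity at $x = r(\alpha)$; once that observation is in place, the rest of the argument is a routine application of dominated convergence and Berge's theorem, and no further technical complications are expected.
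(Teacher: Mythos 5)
Your proposal is correct and follows essentially the same route as the paper: joint continuity of $q^\mu$ in $(w,B,t)$ via dominated convergence, then Berge's Maximum Theorem combined with the standing uniqueness assumption to turn upper hemicontinuity of the (singleton) argmin correspondence into continuity of $\mu$. The only difference is that the paper invokes this section's standing assumption $r(\alpha)=0$, so its integrand is simply $\int_0^{w^T\alpha/(1+t)} H^\mu_\alpha(s)\,ds$ with no indicator, whereas you handle the general reserve-price case by observing that the indicator's jump is absorbed by the vanishing of the integral at $x=r(\alpha)$ --- a correct and slightly more general treatment.
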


\begin{proof}
	We start by observing that the following function is continuous for all $\alpha \in A$:
    \begin{align*}
        (w,B, t) \mapsto \int_{0}^{\frac{w^T\alpha}{1+t}} H_\alpha^\mu(s) ds
    \end{align*}

    Therefore, Dominated Convergence Theorem implies $(w, B, t) \mapsto q^\mu(w, B, t)$ is continuous. Finally, applying Berge Maximum Theorem (Theorem 17.31 of \cite{aliprantis2006infinite}) yields the continuity of $(w,B) \mapsto \mu(w,B)$ because of our assumption that $\mu(w,B)$ is the unique minimizer of $q^\mu(w, B, t)$.	
\end{proof}

We now state the proof of Proposition~\ref{structural_prop}.

\begin{proof}[Proof of Proposition~\ref{structural_prop}]
	Consider a unit vector $\hat{w}\in \R_+^d$ and budget $B > 0$ such that $w/\|w\| = \w$, for some $(w,B) \in \delta(X)$. If $\mu(w,B) = 0$ for all buyers $(w,B) \in \delta(X)$ with $w/\|w\| = \hat{w}$, then the theorem statement holds trivially. So assume that there exists $x >0$ such that $x\hat{w} \in \delta(X)$ and $\mu(x\hat{w}, B) > 0$. Define $x_0 \coloneqq \inf\{x \in (0, \infty) \mid (x \w, B) \in \delta(X);\ \mu(x \w, B) > 0\}$. Then, as a consequence of the complementary slackness condition established in Proposition~\ref{optimal_solution}, for $x > x_0$, we have
		\begin{align*}
			E_\alpha \left[ \sigma^\mu_\alpha\left(\frac{x\w^T\alpha}{1 + \mu(x \w, B)}\right) H_\alpha^\mu 	\left(\frac{x\w^T\alpha}{1 + \mu(x \w, B)}\right) \right] = B.
		\end{align*}
		Recall that, in Lemma~\ref{pacing_strategy_properties}, we established the continuity of $\sigma_\alpha^\mu$ and $H_\alpha^\mu$ almost surely w.r.t. $\alpha \sim F$. Combining this with the continuity of $\mu$ established in Lemma~\ref{buyer_type_continuity}, we can apply the Dominated Convergence Theorem to establish
		\begin{align*}
			E_\alpha \left[ \sigma^\mu_\alpha\left(\frac{x_0\w^T\alpha}{1 + \mu(x_0 \w, B)}\right) H_\alpha^\mu 	\left(\frac{x_0\w^T\alpha}{1 + \mu(x_0 \w, B)}\right) \right] = B.
		\end{align*}
		As $B> 0$, we get $x_0 > 0$. Next, observe that if $t^* \geq 0$ satisfies $x_0(1 + t^*) = x(1 + \mu(x_0\w, B)$, then
		\begin{align*}
			\frac{\partial q^\mu(w, B, t^*)}{\partial t} = B - E_\alpha \left[ \sigma^\mu_\alpha\left(\frac{x\w^T\alpha}{1 + t^*}\right) H_\alpha^\mu 	\left(\frac{x\w^T\alpha}{1 + t^*}\right) \right] = 0
		\end{align*}
		Therefore, by our uniqueness assumption on $\mu$, we get $1 + \mu(x\w, B) = (x/x_0) (1 + \mu(x_0 \w, B))$ for all $x \geq x_0$. Hence, for all $x \geq x_0$, we get
		\begin{align*}
			\frac{x\w^T\alpha}{1 + \mu(x\w,B)} = \frac{x_0\w^T\alpha}{1 + \mu(x_0\w,B)}
		\end{align*}
		Part (1) of Proposition~\ref{structural_prop} follows directly. Part (2) considers the case when there exists $y \geq 0$ such that $(y\w,B) \in \delta(X)$ and $\mu(y\w, B) =0$. In this case, Lemma~\ref{buyer_type_continuity} and the connectedness of $\delta(X)$ imply that $\mu(x_0\hat{w}, B) = 0$, with part (2) of Proposition~\ref{structural_prop} following as a direct consequence.
\end{proof}

Next, we state the proof of Proposition~\ref{thm:targeting}.

\begin{proof}[Proof of Proposition~\ref{thm:targeting}]
	First, note that
	\begin{align*}
		q^\mu(w, B, \mu(w,B)) &= \min_{t \geq 0} q^\mu(w, B, t)\\
		&= \min_{t \geq 0}\ (1+t)\ \mathbb{E}_\alpha\left[ \int_0^{\frac{w^T\alpha}{1+t}} H^\mu_\alpha(s) ds\right] + tB.
	\end{align*}
	Next, define $g: \Theta \times A \times \R_{\geq 0} \to \R$ as
	\begin{align*}
		g(w, B, \alpha, t) = (1 + t)\int_0^{\frac{w^T\alpha}{1+t}} H^\mu_\alpha(s) ds + tB.
	\end{align*}
	Since $H^\mu$ is continuous (Lemma~\ref{pacing_strategy_properties}), we get that $g$ is differentiable w.r.t. $w$ and the derivative satisfies
	\begin{align*}
		\| \nabla_w g(w,B, \alpha, t) \|  =\ \biggr\| \alpha \cdot H_\alpha^\mu\left( \frac{w^T\alpha}{1+t} \right) \biggr\| \leq \max_{\alpha \in A}\ \|\alpha\|.	
	\end{align*}
	Therefore, dominated convergence theorem implies that
	\begin{align*}
		\nabla_w \mathbb{E}_\alpha[g(w, B , \alpha, t)] = \mathbb{E}_\alpha \left[ \nabla_w g(w,B, \alpha, t) \right] = \mathbb{E}_\alpha \left[ \alpha \cdot H_\alpha^\mu\left( \frac{w^T\alpha}{1+t} \right) \right].
	\end{align*}
	Note that the dual function $q^\mu$ is convex in $t$ and at least one dual optimal solution always lies in the compact set $[0, \omega/B_{\min}]$ (Lemma~\ref{compact_dual_space}). Moreover, if $t_1, t_2 \in \argmin_{t \geq 0} q^\mu(w, B, t)$, then the optimality conditions discussed in the proof of Proposition~\ref{optimal_solution} imply
	\begin{align*}
		\frac{\partial q^\mu(w,B,t_1)}{\partial t} = \frac{\partial q^\mu(w,B,t_2)}{\partial t} = 0.
	\end{align*}
	Without loss of generality, assume $ t_1 < t_2$. Moreover, suppose
	\begin{align*}
		H_\alpha^\mu\left( \frac{w^T\alpha}{1+t_1} \right) > H_\alpha^\mu\left( \frac{w^T\alpha}{1+t_2} \right).
	\end{align*}
	In the proof of Lemma~\ref{pacing_strategy_properties}, we showed that the above equation implies
	\begin{align*}
		\sigma_\alpha^\mu\left( \frac{w^T\alpha}{1+t_1} \right) H_\alpha^\mu\left( \frac{w^T\alpha}{1+t_1} \right) > \sigma_\alpha^\mu\left( \frac{w^T\alpha}{1+t_2} \right)	H_\alpha^\mu\left( \frac{w^T\alpha}{1+t_2} \right) \quad \forall\ \alpha \in A.
	\end{align*}
	This contradicts Lemma~\ref{differentiability_dual} because
	\begin{align*}
		\frac{\partial q^\mu(w,B,t)}{\partial t} = \mathbb{E}_\alpha\left[  \sigma_\alpha^\mu\left( \frac{w^T\alpha}{1+t} \right) H_\alpha^\mu\left( \frac{w^T\alpha}{1+t} \right) \right]
	\end{align*}
	Hence, we have shown that
	\begin{align*}
		H_\alpha^\mu\left( \frac{w^T\alpha}{1+t_1} \right) = H_\alpha^\mu\left( \frac{w^T\alpha}{1+t_2} \right) \quad \forall\ t_1, t_2 \in \argmin_{t \geq 0} q^\mu(w, B, t)
	\end{align*}
		
	This allows us to invoke Danskin's Theorem, which yields
	\begin{align*}
		\nabla_w q^\mu(w, B, \mu(w,B)) = B - \nabla_w \mathbb{E}_\alpha[g(w, B , \alpha, t)] \biggr|_{t = \mu(w,B)} = \mathbb{E}_\alpha \left[ \alpha \cdot H_\alpha^\mu\left( \frac{w^T\alpha}{1+ \mu(w,B)} \right) \right],
	\end{align*}
	thereby completing the proof.
\end{proof}

\section{Analytical and Numerical Examples}\label{appendix_example_numerics}

\begin{proof}[Proof of Claim~\ref{example}]
	Note that $w/(1 + \mu(w,B)) = w/\|w\|$ for all $(w,B) \in \Theta$. Therefore, 	$w/(1 + \mu(w,B))$ is distributed uniformly on the unit ring restricted to the positive quadrant $\{(x,y) \in \R_{\geq 0}^2 \mid x^2 + y^2 = 1\}$. Hence,
	\begin{align*}
		H^\mu_\alpha(s) = P_{(w,B)}\left(\frac{w^T\alpha}{1 + \mu(w,B)} \leq s\right) =  \frac{\arcsin(s)}{\pi/2} \qquad \text{for } \alpha \in A = \{e_1, e_2\}
	\end{align*}
	Observe that $H_\alpha^\mu$ is continuous for all $\alpha \in A$. This implies that, for all $(w,B) \in \Theta$, strong duality holds for the optimization problem $Q^\mu(w, B)$, because the proof of the results given in Section \ref{subsection_strong_duality} only relied on continuity of $H_\alpha^\mu$. Therefore, to prove the claim, it suffices to show that each buyer $(w,B)$ exactly spends her budget. The total payment made by buyer $(w,B) \in \Theta$, when everyone uses $\beta^\mu$, is given by
	\begin{align*}
		\mathbb{E}_\alpha\left[ \tilde{\beta}_\alpha^\mu \left(\frac{w^T \alpha}{1+\mu(w,B)}\right) H_\alpha^\mu \left(\frac{w^T \alpha}{1+\mu(w,B)}\right) \right]
		&= \sum_{i=1}^2 \frac{1}{2} \left[ \w_i H_{e_i}^\mu(\w_i)- \int_0^{\w_i} H_{e_i}^\mu(s) ds \right]\\
		&= \frac{2 - \w_1 - \w_2}{\pi}\\
		&= \frac{2\|w\| - w_1 - w_2}{\pi \|w\|}\,.
	\end{align*}
	Hence, the claim holds.
\end{proof}
\section{Extension to Non-linear Response Functions} \label{appendix:non_linear}

In this section, we discuss extensions of our results beyond linear valuation functions. Let $f: \R \to \R$ be a (potentially non-linear) monotonically increasing function. We assume that the value a buyer with weight vector $w$ has for item with feature vector $\alpha$ is given by $f(w^T\alpha)$. Moreover, we  relax the assumption that $\Theta, A \subset \R_+$ and only require that $f(w^T\alpha)$ is non-negative for all $w \in \Theta, \alpha \in A$. For example, the logistic function $f(t) = e^t/ (1 + e^t)$ is a non-linear increasing response function commonly used in practice which satisfies the above assumptions. Moreover, the linear function $f(t) = t$ yields our original linear model. Before proceeding further, we appropriately modify the terms defined earlier to accommodate this more general valuation model given by $f$.

Consider a pacing function $\mu: \Theta \to \mathbb{R}_{\geq 0}$. We define the \emph{paced value} of a buyer type $(w,B)$ for item $\alpha$ as $f(w^T\alpha)/(1 + \mu(w,B))$. For item $\alpha \in A$, let $\lambda_\alpha^\mu$ denote the distribution of paced values $f(w^T \alpha)/(1 + \mu(w,B))$ when $(w,B) \sim G$. Let $H_\alpha^\mu$ denote the distribution of the highest value $Y:=\max\{X_1,\dots, X_{n-1}\}$ among $n-1$ buyers, when each $X_i \sim \lambda_\alpha^\mu$ is drawn independently for $i \in \{1,\ldots,n-1\}$. Observe that $H_\alpha^\mu( (-\infty, x]) = \lambda_\alpha^\mu((-\infty, x])^{n-1}$ for all $\alpha \in A$ because the random variables are i.i.d.

To better understand how our results can be extended to this more general valuation model, it is important to understand how the linearity assumption was employed in our derivations. A careful analysis of the derivations would reveal that the linearity was only employed exactly once, and that was to prove part (a) of Lemma~\ref{pacing_strategy_properties}. In the following lemma, we prove the analogue of part (a) of Lemma~\ref{pacing_strategy_properties}. The analysis for the rest of our results remains the same for this more general non-linear valuation model.

\begin{lemma}\label{lemma:non_linear}
	$\lambda^\mu_\alpha$ and $H^\mu_\alpha$ (as defined above) have a continuous CDF for every pacing function $\mu: \Theta \to \mathbb{R}_{\geq 0}$.
\end{lemma}
\begin{proof}
	 Consider a pacing function $\mu: \Theta \to \mathbb{R}_{\geq 0}$. Let $\alpha_1, \alpha_2 \in A$ be linearly independent feature vectors and $x_1, x_2 \in [0,\omega]$ be two possible item values. We consider the set of buyer types which have paced value $x_1$ for $\alpha_1$ and paced value $x_2$ for $\alpha_2$. Define
	\begin{align*}
		S \coloneqq \left\{ (w,B) \in \Theta \bigg\lvert\ \frac{f(w^T\alpha_1)}{1 + \mu(w,B)} = x_1;\ \frac{f(w^T\alpha_2)}{1 + \mu(w,B)} = x_2 \right\}
	\end{align*}
	Observe that, for $(w,B) \in S$ and $c \coloneqq x_1/x_2$, we have $f(w^T\alpha_1) = c f(w^T\alpha_2)$. Therefore, the set $T = \{w \mid f(w^T\alpha_1) = c f(w^T\alpha_2)\}$ is a superset of the set $S_w$. Next, define $T(s) = \{w \mid w^T\alpha_1 = f^{-1}(cf(s));\ w^T\alpha_2 = s\}$. Then, it immediately follows that $T = \cup_{s: f(s) \geq 0} T(s)$.
	Due to their linear independence, we can find a basis that contains $\alpha_1, \alpha_2$, call it $\{\alpha_1, \alpha_2, \dots, \alpha_n\}$. Let $M$ be the invertible matrix whose rows are given by $\alpha_1, \alpha_2, \dots, \alpha_n$. Now, note that the set $U \coloneqq \{ (f^{-1}(cf(s)),s) \mid s \in \R, f(s) \geq 0\} \subseteq \R^2$ has Lebesgue measure zero because it is the graph of a monotonic continuous real-valued function. As a consequence, the set $U \times \R^{n-2}$ also has zero Lebesgue measure, which further implies that $M^{-1}(U \times \R^{n-2})$ has zero Lebesgue measure because $M$ is an invertible linear transformation.

	
	 Observe that, if $w \in T = \cup_{s: f(s) \geq 0} T(s)$, then there exists $s$ such that $f(s) \geq 0$, $\alpha_1^Tw = f^{-1}(f(s))$ and $\alpha_2^Tw = s$. Hence, the first two components of $Mw$ are $f^{-1}(f(s))$ and $s$ respectively, thereby implying $w \in M^{-1}(U \times \R^{n-2})$. Therefore, we get that $T \subset M^{-1}(U \times \R^{n-2})$ and, as a consequence, $T$ has zero Lebesgue measure. Finally, this implies that $G(S) = 0$ because $G$ has a density. The rest of the analysis is analogous to the one given in the proof of Lemma~\ref{pacing_strategy_properties}.
\end{proof}

\end{document}